\documentclass[onefignum,onetabnum]{siamart251216} %

\usepackage{lipsum}
\usepackage{amsfonts}
\usepackage{graphicx}
\usepackage{epstopdf}
\usepackage{mathtools}
\usepackage{amssymb}
\usepackage{bbm}
\usepackage{accents}
\usepackage{algorithm}
\usepackage{algpseudocode}
\usepackage{tabto}
\usepackage{aliascnt}
\usepackage{cleveref}

\ifpdf
  \DeclareGraphicsExtensions{.eps,.pdf,.png,.jpg}
\else
  \DeclareGraphicsExtensions{.eps}
\fi

\AddToHook{cmd/appendix/before}{%
  \crefalias{section}{appendix}
}

\theoremstyle{plain}
\theorembodyfont{\normalfont}
\theoremheaderfont{\normalfont\itshape}

\newaliascnt{remark}{theorem}
\newtheorem{remark}[remark]{Remark}
\aliascntresetthe{remark}

\newaliascnt{condition}{theorem}
\newtheorem{condition}[condition]{Condition}
\aliascntresetthe{condition}

\crefname{hypothesis}{Hypothesis}{Hypotheses}
\crefname{remark}{Remark}{Remarks}
\Crefname{remark}{Remark}{Remarks}
\crefname{condition}{Condition}{Conditions}
\Crefname{condition}{Condition}{Conditions}
\crefname{fact}{Fact}{Facts}
\Crefname{subsection}{Section}{Sections}

\usepackage{enumitem}
\setlist[enumerate]{leftmargin=.5in}
\setlist[itemize]{leftmargin=.5in}

\newcommand{\reverse}[1]{\accentset{\leftarrow}{{#1}}}

\makeatletter
\@mparswitchfalse
\makeatother
\normalmarginpar

\headers{Goal-oriented learning of SDEs}{J.~Zou, H.~C.~Lie, Y.~Marzouk}
\title{Goal-oriented learning of stochastic differential equations using error bounds on path-space observables}

\author{Joanna Zou\thanks{Center for Computational Science \& Engineering, Massachusetts Institute of Technology, Cambridge, MA 
  (\email{jjzou@mit.edu}).}
\and Han Cheng Lie\thanks{Department of Mathematics, University of Potsdam, Potsdam, Germany 
  (\email{han.lie@uni-potsdam.de}).}
\and Youssef Marzouk\thanks{Center for Computational Science \& Engineering, Massachusetts Institute of Technology, Cambridge, MA 
  (\email{ymarz@mit.edu}).}}

\usepackage{amsopn}

\ifpdf
\hypersetup{
  pdftitle={Goal-oriented learning of stochastic differential equations using error bounds on path-space observables},
  pdfauthor={J. Zou, H. C. Lie, Y. Marzouk}
}
\fi

\begin{document}

\maketitle

\begin{abstract}
Stochastic differential equations (SDEs), which serve as the governing equations for dynamical systems in a broad range of applications, can become cost-prohibitive for numerical simulation at scales necessary for quantifying key properties.
Surrogate models of the drift function of an SDE, learned from data of the high-fidelity system, are routinely used to increase the efficiency of simulation and prediction of properties. However, standard choices of loss function for learning the surrogate model fail to provide error guarantees in certain path-dependent observables, such as transition times. This paper introduces an error bound for path-space observables and employs it as a novel variational loss for the goal-oriented learning of the drift function of a SDE. We show the error bound holds for a broad class of observables, including mean first hitting times on unbounded time domains. We derive an analytical gradient of the goal-oriented loss by leveraging the formula for Fr\'echet derivatives of expected path functionals, which remains tractable for implementation in stochastic gradient descent schemes. We demonstrate that surrogate models of overdamped Langevin systems developed via goal-oriented learning achieve improved accuracy in predicting the statistics of a first hitting time observable and robustness to distributional shift in the data.  
\end{abstract}

\begin{keywords}
Stochastic differential equations, summary statistics, data-driven learning, information inequalities, nonequilibrium dynamics
\end{keywords}

\begin{MSCcodes}
60H10, 60H35, 94A15, 62M99
\end{MSCcodes}

\section{Introduction}
\label{sec:intro}

Stochastic differential equations (SDEs) are widely used to model stochastic dynamical systems in disciplines such as computational physics, chemistry, and biology, engineering and control systems, and mathematical finance. Due to their random nature, dynamical systems governed by SDEs are generally characterized by their probability laws rather than individual path realizations. A common goal is to quantify statistics of the path expressed as expectations with respect to the law of the solution of the SDE, referred to in this work as \textit{path-space observables}. These quantities relate to time-dependent properties of the system, such as transition probabilities, reaction rates, accumulated costs, or stability metrics, and are useful for downstream tasks in reliability analysis and system design.

Path-space observables are often computationally demanding to quantify, whether by solving a partial differential equation involving the generator of the SDE \cite{Hartmann2012} or using Monte Carlo methods involving multiple long-time path realizations. These costs are compounded for SDEs defined by high-fidelity first-principles models where the drift function is expensive to evaluate. Although surrogate models can substantially reduce the cost of simulation, their utility depends on their ability to reproduce the path-space observable of interest. Standard learning objectives for SDEs based on pointwise accuracy in the drift may not guarantee accuracy in path-space statistics under general conditions, or they require reference data of the observable itself, which are often noisy, inaccurate, or unavailable. 

One example of this challenge is the estimation of the mean first transition time from one metastable state of a molecular system to another, which is inversely proportional to a chemical reaction rate \cite{Lelievre2010}.  Reaction rates are key inputs to kinetic Monte Carlo, a coarse-graining approach which reduces molecular dynamics to a jump process driven by transition probabilities \cite{Voter2005}. When the molecular dynamics are represented by high-fidelity quantum mechanical calculations, the cost associated with simulation-based estimates of reaction rates becomes intractable, particularly when the transition corresponds to a rare event \cite{Dellago1998, Allen2009}. Machine learning force fields \cite{Bartok2010, Drautz2019, Batzner2022, Musaelian2023}, which serve as surrogate models of the interatomic forces, are routinely employed to accelerate molecular dynamics simulation. In practice, these models are often validated by their static or equilibrium properties, which do not necessarily guarantee accurate dynamical behavior. Therefore, the essential task is to assess the accuracy of path-dependent, non-equilibrium properties of these models.

In this work, we introduce a novel objective function for learning a data-driven surrogate model of the drift function of an SDE, which enables the efficient and reliable estimation of a path-space observable such as an expected transition time. The learning objective is \textit{goal-oriented} in the sense that it targets the fidelity of a specific path-space observable of the stochastic dynamics. Moreover, it is a variational objective in that it learns the drift function of the SDE from a parametric family of functions and  relies only on data on the drift function of a reference high-fidelity SDE.  We show that the objective associated with goal-oriented learning offers advantages over prototypical choices of objectives, which have limited or no provable guarantees on the fidelity of the path-space observable corresponding to the learned SDE.

\subsection{Review of learning objectives for SDEs}
\label{sec:review}
To draw comparison with goal-oriented learning, we review common approaches for learning drift functions of first-order time-invariant SDEs, which can be classified by the nature of the available data and the corresponding learning objective. Consider the strong solution to an It\^o SDE on $m$-dimensional state space, 
    \[
        \textup{d} X_t = b(X_t) \textup{d}t + \textup{d}W_t \, , \ X_0=x,
    \]
where $b(x)$ denotes the drift function and $W_t$ denotes standard $m$-dimensional Brownian motion. Let $\tilde{b}(x; \theta) = \tilde{b}_\theta(x)$ be a surrogate of the drift function which belongs to a parametric model class indexed by $\theta \in \Theta$. The learning problem is to find a parameter $\theta^* \in \Theta$ which minimizes a prescribed loss function $\mathcal{L}$,
    \[
        \theta^* \in  \underset{\theta \in \Theta}{\text{argmin}} \ \mathcal{L}(\theta) \, .
    \]

\subsubsection{Regression of the drift function}
\label{subsec:regression}
In \textit{regression} approaches, the goal is to match the surrogate drift function to the reference drift function in a pointwise sense, given a dataset of reference drift evaluations at a finite collection of states, e.g., $\{X_i, b(X_i)\}_{i=1}^N$. A standard choice of loss function is the mean squared error in drifts, also referred to as the ``force matching'' objective \cite{Batzner2022, Musaelian2023},
    \begin{equation}
        \label{eq:fmloss}
        \mathcal{L}^{\textup{FM}}(\theta; b) \coloneqq \frac{1}{N} \sum^N_{i=1} \left\|b(X_i) - \tilde{b}_\theta (X_i) \right\|^2_2 \, .
    \end{equation}

For SDEs whose drift function is the negative gradient of a potential energy function $V$, e.g., $b(x) = -\nabla_x V(x)$, the surrogate drift function is typically modeled also as a gradient force, $\tilde{b}_\theta(x) = - \nabla_x \tilde{V}_\theta(x)$. Given data $\{X_i, V(X_i)\}_{i=1}^N$, one can then learn the surrogate drift function via regression of the potential energy function with the ``energy matching'' objective \cite{Batzner2022, Musaelian2023},
    \begin{equation}
        \label{eq:emloss}
        \mathcal{L}^{\textup{EM}}(\theta; V) \coloneqq \frac{1}{N} \sum^N_{i=1} \big| V(X_i) - \tilde{V}_\theta (X_i) \big|^2 \, .
    \end{equation}
If $V$ and $\tilde{V}_\theta$ are identical up to an additive constant and differentiable everywhere on a domain, then their first derivatives are identical on the domain and the force error $\frac{1}{N} \sum^N_{i=1} || \nabla_x V(X_i) - \nabla_x \tilde{V}_\theta (X_i) ||^2$ is also minimized.

\subsubsection{Minimization of discrepancy in probability measures}
\label{subsec:minmeasure}
In \textit{measure}-centric approaches, a model drift function is learned from a dataset of states distributed according to a probability measure associated with the SDE. These methods often apply to the setting where the drift function is unknown and must be inferred from observed states or paths.
When the states are ordered such that they correspond to equidistant observations of a strong solution of the reference SDE on $[0,T]$, i.e., $\{ X_{i} \}_{i=1}^N$ where $X_i=X_{i \Delta t}$ and $N\Delta t=T$, the likelihood of the states converges in the continuum limit to the the Radon-Nikodym derivative of the law of the model SDE, $\tilde{\mathbb{P}}^x_\theta$, with respect to the Wiener measure, $\mathbb{W}^x$ \cite[Sec. 5.3]{Pavliotis2014}. One can solve for the maximum likelihood estimate of the parameters by minimizing the negative log likelihood of the path, where the corresponding loss function is
    \begin{equation}
        \label{eq:mle}
        \mathcal{L}^{\textup{MLE}}(\theta; X) \coloneqq -\log \bigg( \frac{\textup{d}\tilde{\mathbb{P}}^x_\theta}{\textup{d}\mathbb{W}^x}(X) \bigg)  = - \frac{1}{2} \int_0^T |\tilde{b}_\theta(X_s)|^2 \textup{d}s - \int_0^T \tilde{b}_\theta(X_s)^\top \textup{d}W_s  \, .
    \end{equation}
Assuming the reference SDE is defined by a true drift parameter $\theta_0 \in \Theta$, the maximum likelihood estimate of the parameters is asymptotically unbiased and converges to $\theta_0$ as $T \to \infty$ \cite[Thm. 2.8]{Kutoyants2004}.  In this limit, the objective \eqref{eq:mle} relates to the Kullback--Leibler (KL) divergence between the laws of the reference and surrogate processes \cite{Dupuis2016}, such that the parameter estimate minimizes a discrepancy in path measures. One can consider a more general class of measure-centric approaches by replacing the KL divergence with another measure of discrepancy to implement as a loss function, e.g., 
\begin{equation}
        \label{eq:divloss}
        \mathcal{L}^{\mathbb{D}}(\theta; \rho) \coloneqq \mathbb{D}(\rho || \tilde{\rho}_\theta) \, ,
    \end{equation}
where $\mathbb{D}$ denotes a suitable measure of discrepancy between generic probability measures $\rho$ and $\tilde{\rho}_\theta$ of the reference and surrogate process. Broadly, measure-centric approaches target matching the surrogate SDE to the reference SDE in a distributional sense, such that their laws or probability densities are close with respect to the chosen discrepancy. A large body of work applies the form of \eqref{eq:divloss} to invariant measures of ergodic dynamical systems, given unordered states $\{X_i\}_{i=1}^N$ approximately i.i.d. with respect to the invariant measure, using discrepancies such as the Wasserstein-2 metric \cite{Muskulus2011,Yang2023} or maximum mean discrepancy \cite{Botvinick2024}. Given ordered states, \cite{Botvinick2024} demonstrates that the parameters may be uniquely identified by minimizing the discrepancy of invariant measures in time-delay coordinates. When applied to the Fisher divergence of time-marginal densities of the state, \eqref{eq:divloss} relates to the score matching objective used to train generative diffusion models \cite{Song2021mle}.

\subsubsection{Minimization of error in observables}
\label{subsec:minobs}
Surrogate models of the drift function may also be constructed using \textit{observable}-based approaches, utilizing datasets of noisy measurements of an observable $\mu$ which takes the form of an expectation of a functional of the stochastic dynamics. The aim is to maximize fidelity of the surrogate process in reproducing statistical properties of the reference process. The learning task can be formulated as an inverse problem, where the solution recovers the input parameters of a forward model which produced the observed outputs. In this setting, the forward model $Q$ is a map from parameters of the drift function of the surrogate SDE to its predicted observable, $Q(\theta) = \tilde{\mu}_{\theta}$, and noisy data of the observable $\{\mu^{(j)}\}_{j=1}^N$ are assumed to be perturbed by Gaussian noise $\{\eta^{(j)}\}_{j=1}^N$:
    \begin{equation}
        \label{eq:inverseprob}
        \mu^{(j)} = Q(\theta) + \eta^{(j)}, \quad \eta^{(j)} \sim N(0, \Gamma) \, .
    \end{equation}

The inverse problem of \eqref{eq:inverseprob} is closely related to least squares regression of the observable, where one minimizes the residual between the measured observable $\mu$ and the predicted observable $Q(\theta)$ scaled by the inverse of the noise covariance $\Gamma$, 
    \[
        \label{eq:inverseprobobj}
        \mathcal{L}^\text{inv}(\theta; \mu)= \frac{1}{2N} \sum^{N}_{j=1} \left\| \Gamma^{-1} (\mu^{(j)} - Q(\theta)) \right\|^2_2 \, .
    \]

This approach, also referred to as reverse Monte Carlo or inverse Monte Carlo, is adopted for radial distribution functions of molecular systems in \cite{Lyubartsev1995, Lyubartsev2002}. Observable-based approaches constitute a \textit{weak} approach to learning an SDE, in the sense that the model parameters are not uniquely determined; there can exist distinct SDEs which yield similar or identical observable values. This relaxation of uniqueness facilitates the existence of an admissible surrogate model whose purpose is to predict the unknown observable rather than to reproduce pathwise dynamics.

\subsection{Summary of contributions}
\label{sec:summary}

We propose a novel goal-oriented approach to surrogate modeling of SDEs which incorporates elements of each strategy reviewed in \Cref{sec:review}. As with the observable-based approaches of \Cref{subsec:minobs}, the ultimate goal is to minimize error in a specific observable, e.g.,
    \begin{equation}
        \label{eq:lossfunc}
        \theta^* \in  \underset{\theta \in \Theta}{\text{argmin}} \ \frac{1}{2} (\mu-\tilde{\mu}_\theta)^2 \, .
    \end{equation}
In many settings, the observable error on the right hand side of \eqref{eq:lossfunc} is intractable to compute, as high-quality ground truth data on the observable $\mu$ is inaccessible, whether due to high costs or infeasibility of experimentation or simulation. To overcome this data limitation, we instead rely only on drift data from the reference process, operating within the same data regime as the drift regression approaches of \Cref{subsec:regression}. Our idea is to leverage tools from the measure-centric approaches of \Cref{subsec:minmeasure} to derive a goal-oriented loss function $\mathcal{L}^\text{GO}$, dependent on the discrepancy between path measures, which upper bounds the error in the observable
    \[
        \label{eq:upperbd}
        \frac{1}{2} (\mu-\tilde{\mu}_\theta)^2 \leq \mathcal{L}^\text{GO}(\theta;b) \, .
    \]
Goal-oriented learning is then framed as a variational problem, where minimizing a tractable upper bound over a parametric class of drift functions is a proxy for minimizing the intractable error in the observable. The contributions of this paper can be summarized as follows: 
\begin{itemize}
    \item We prove an information-theoretic upper bound to the squared error in a path-space observable of the surrogate SDE in \Cref{lem:2momentbd}. The upper bound is useful for sensitivity analysis and uncertainty quantification, as well as for obtaining an efficient estimate of the order of magnitude of the error. In \Cref{sec:goerrbd}, we discuss how the upper bound holds under broad conditions on the observable.
    \item We introduce the goal-oriented loss in \Cref{def:goloss}, with both forward and reverse formulations to suit different data regimes. We identify conditions under which the goal-oriented loss holds as an upper bound to the error in the observable in \Cref{prop:golossineq}.
    \item We derive a closed-form gradient of the goal-oriented loss, to be used in gradient-based optimization, in \Cref{prop:gradgoloss}. A component of the formula is the gradient of the path-space observable with respect to parameters of the drift function, which we derive in \Cref{prop:gradobs}. 
    \item We present an empirical estimator of the goal-oriented loss in \eqref{eq:golossf_discrete} and highlight sources of numerical error in its evaluation. Through numerical experiments on estimating the mean first hitting time of overdamped Langevin systems, we show that models learned with the goal-oriented loss can exhibit faster convergence to the true value of the observable and greater robustness to error in the data distribution compared to other loss functions for surrogate modeling of SDEs. 
\end{itemize}

\subsection{Related work}

A number of prior works have developed bounds on the absolute error in path-space observables which hold under various conditions on the SDE and the observable. 

An error bound resulting from the Csisz\'ar-Kullback-Pinsker inequality is employed for parametric sensitivity analysis of molecular systems in \cite{Tsourtis2015} and for learning coarse-grained models in \cite{Harmandaris2016}. In \cite{Harmandaris2016}, it is stated that as a result of the bound, the minimization of relative entropy, i.e., KL divergence, is sufficient for controlling error over a broad class of path-space observables rather than a single specific observable. However, since the bound only holds under the condition that the path functional defining the path-space observable is either uniformly bounded or has a finite essential supremum, the learning objective based on relative entropy alone is not guaranteed to control error in observables defined by unbounded path functionals.

The work of \cite{Dupuis2016} develops an error bound which depends on the cumulant generating function of the path functional. A generalization of their bound is proposed in \cite{Branicki2021,Branicki2023}, expressed in terms of any information-monotone divergence between laws of the process. Both \cite{Dupuis2016} and \cite{Branicki2021} show that a linearized form of their bounds, derived from a Taylor expansion about the point where the error in the observable is zero, can be more tractable to compute; however, the linear approximation is not guaranteed to hold as an upper bound in instances where the error in the observable is large. Therefore, both works apply their bounds to sensitivity analysis and uncertainty quantification rather than learning tasks, which are deferred to future work in \cite{Branicki2023}. See \Cref{rmk:comparebd} for further comparison of our bound to the information inequalities in \cite{Harmandaris2016, Dupuis2016, Branicki2021}.

Whereas \cite{Harmandaris2016, Dupuis2016, Branicki2021} rely on information-theoretic methods, a different approach is taken in \cite{Zhang2021} of utilizing perturbation theory for ergodic Markov chains and applying their error bounds for identification of the drift and diffusion coefficients of the underlying SDE. Their analysis is limited to bounding error in state-space observables at the ergodic limit of the SDE, which does not include path-space observables describing non-ergodic properties of the stochastic process. 

Our work is the first to apply an error bound for expectations of unbounded path functionals to the variational learning of drift functions of SDEs, to explicitly derive the gradient of the error bound for the purpose of gradient-based optimization, and to provide numerical assessment of the robustness of goal-oriented learning schemes.

\medskip
The paper is organized as follows. \Cref{sec:sdes} introduces background on SDEs and relevant observables, as well as information divergences for path measures. \Cref{sec:golearning} discusses the main results as outlined above. \Cref{sec:numerics} presents numerical studies on identifying surrogate models of SDEs via goal-oriented learning with respect to a first hitting time statistic. \Cref{sec:conclusions} presents our conclusions.

\section{Stochastic differential equations}
\label{sec:sdes}

Throughout, let $||\cdot||$ denote the $\ell_2$ norm and $||\cdot||_1$ denote the $\ell_1$ norm on $\mathbb{R}^m$. For an arbitrary matrix $A \in \mathbb{R}^{m \times n}$, we denote the Frobenius norm by $||A||^2_F = \sum_{i,j} |A_{ij}|^2$, the Moore-Penrose pseudoinverse by $A^+$, and the matrix transpose by $A^\top$. Let $\mathcal{P}(U)$ denote the space of Borel probability measures on a measurable space $U$. For probability measures $p,q$ on $U$, $p \ll q$ denotes that $p$ is absolutely continuous with respect to $q$, and $p \sim q$ denotes that the two measures are mutually absolutely continuous.
\subsection{It\^o diffusion processes}

In this work, we consider a dynamical system governed by an It$\hat{\text{o}}$ SDE on some filtered probability space $(\Omega, \mathcal{F}, \{\mathcal{F}_t \}_{t \geq 0}, P)$,
    \begin{equation}
        \label{eq:sdefull}
        \textup{d}X_t = b(X_t) \, \textup{d}t + \sigma(X_t) \textup{d}W_t, \qquad X_0 = x \, .
    \end{equation}
The solution to the SDE is a Markov process $X(\omega) = ( X_t(\omega) )_{t \geq 0}$, where $X_t: \Omega \to \mathbb{R}^m$ is an $m$-dimensional continuous random variable, referred to as the ``state'' at time $t \geq 0$; $x \in \mathbb{R}^m$ is the initial condition; $b : \mathbb{R}^m \to \mathbb{R}^m$ is the time-invariant drift function; $\sigma : \mathbb{R}^m \to \mathbb{R}^{m \times n}$ is the time-invariant diffusion coefficient; and $W(\omega) = (W_t(\omega))_{t \geq 0}$, where $W_t: \Omega \to \mathbb{R}^n$, is standard $n$-dimensional Brownian motion. 

 The law of the solution to \eqref{eq:sdefull}, $\mathbb{P}^x \in \mathcal{P}(C([0,\infty), \mathbb{R}^m))$, is a probability measure over paths initialized at $x$. The Lebesgue probability density of the random variable $X_t$ at time $t$, $\rho_t \in L^1(\mathbb{R}^m)$, evolves according to the Fokker-Planck equation, 
    \begin{equation}
        \label{eq:fpk}
        \frac{\partial \rho_t}{\partial t} = \nabla \cdot \Big( -b \rho_t + \frac{1}{2} \nabla \cdot ( \Sigma \rho_t) \Big) \, ,
    \end{equation}
where $\Sigma(\cdot) = \sigma(\cdot) \sigma(\cdot)^\top$. We assume the following conditions: 
\begin{condition}[Unique strong solution and invariant measure]
    \label{cond:driftdiff}
    The SDE \eqref{eq:sdefull} admits an $(\mathcal{F}_t)_{t \geq 0}$-adapted strong solution $(X_t)_{t \geq 0}$ that is unique in law, and also a unique invariant measure $\pi \in \mathcal{P}(\mathbb{R}^m)$ whose density is the stationary solution to \eqref{eq:fpk}. 
\end{condition}

\begin{condition}[Semi-ellipticity, cf. \texorpdfstring{\cite[p.~289]{RevuzYor1999}}{}]
    \label{cond:elliptic}
    There exists a constant $\alpha > 0$ such that $0 \leq z^\top \Sigma(y)^{+} z \leq \alpha^{-2} ||z||^2$ for all $y, z \in \mathbb{R}^m$. 
\end{condition}

\subsection{Data-driven diffusion models}
\label{subsec:datadiffusion}

In many scientific applications of It\^o SDEs, the cost of evaluating the drift function $b$ renders it impractical to generate several samples from the path measure $\mathbb{P}^x$ or invariant measure $\pi$ of the SDE via numerical simulation. In such settings, one possible strategy is to learn a cheaper surrogate model of the drift function, $\tilde{b} : \mathbb{R}^m \to \mathbb{R}^m$, which defines a surrogate process $\tilde{X} = (\tilde{X}_t(\omega))_{t \geq 0}$ on $(\Omega, \mathcal{F}, \{\mathcal{F}_t \}_{t \geq 0}, P)$ according to 
    \begin{equation}
        \label{eq:sdesurrogate}
        \textup{d}\tilde{X}_t = \tilde{b}(\tilde{X}_t) \, \textup{d}t + \sigma(\tilde{X}_t) \textup{d}W'_t, \qquad \tilde{X}_0 = x \, ,
    \end{equation}
where $W'_t$ is an independent copy of the standard Brownian motion appearing in \eqref{eq:sdefull}.

To make comparisons between the reference process $X$ and surrogate process $\tilde{X}$, we utilize the Cameron-Martin-Girsanov change of measure. Let $u: \mathbb{R}^m \to \mathbb{R}^m$ be the difference in drift functions between the reference and surrogate process, 
\[
    \label{eq:scaleddiff}
    u(\cdot) \coloneqq b(\cdot) - \tilde{b}(\cdot) \, .
\]
Assume $u$ belongs to the vector space $U = \{v: \mathbb{R}^m \to \mathbb{R}^m | \ v \text{ is bounded} \}$. Define a continuous local martingale $(M^u_t)_{t \geq 0}$ with respect to any stochastic process $Y = (Y_t)_{t \geq 0} \in C([0,\infty), \mathbb{R}^m)$, 
\begin{equation}
    \label{eq:martingale}
    M^u_t(Y) \coloneqq \int_0^t (\sigma^+ u)(Y_s)^\top \textup{d}W_s \, ,
\end{equation}
and the quadratic variation process for $u \in U$,
    \begin{equation}
    \label{eq:quadvariation}
    \langle M^u, M^u \rangle_t (Y) = \int_0^t \Big( u^\top \Sigma^+ u\Big) (Y_s) \textup{d}s \, ,
    \end{equation}
where $\Sigma \coloneqq \sigma \sigma^\top$. Finally, define $(\mathcal{E}_t)_{t \geq 0}$ as
    \[
            \mathcal{E}_t(Y) \coloneqq \exp \big( - M^u_t(Y) - \frac{1}{2} \langle M^u, M^u \rangle_t (Y) \big) \, .
    \]

Under \Cref{cond:driftdiff} and the assumption that the difference in drift functions $u$ is bounded, \eqref{eq:sdesurrogate} has a unique weak solution \cite[Ch.~IX, Thm.~1.11]{RevuzYor1999}. We denote the law of the surrogate process $\tilde{X}$ as $\tilde{\mathbb{P}}^x \in \mathcal{P}(C([0,\infty), \mathbb{R}^m))$ and its invariant measure as $\tilde{\pi} \in \mathcal{P}(\mathbb{R}^m)$. 

Moreover, boundedness of $u$ and \Cref{cond:elliptic} imply that \eqref{eq:quadvariation} is almost surely finite and Novikov's criterion is met for any $Y$. This implies that for both $X$ and $\tilde{X}$ and any $t > 0$,
    \begin{equation}
    \label{eq:novikov}
    \mathbb{E}_{X \sim \mathbb{P}^x} \Big[ \exp  \Big( \frac{1}{2} \langle M^u, M^u \rangle_t(X) \Big) \Big] < \infty  \quad \text{and} \quad \mathbb{E}_{\tilde{X} \sim \tilde{\mathbb{P}}^x} \Big[ \exp  \Big( \frac{1}{2} \langle M^u, M^u \rangle_t(\tilde{X}) \Big) \Big] < \infty.
    \end{equation}
Novikov's criterion in \eqref{eq:novikov} yields that $(\mathcal{E}_s)_{0 \leq s \leq t}$ is a uniformly integrable martingale on the filtration $\{ \mathcal{F}_t \}_{t \geq 0}$ \cite[Ch.~VIII, Prop.~1.15]{RevuzYor1999}. By the Cameron-Martin-Girsanov theorem \cite[Thm.~8.6.8]{Oksendal2013}, $\mathcal{E}_t$ defines the Radon-Nikodym derivative of $\tilde{\mathbb{P}}^x$ w.r.t. $\mathbb{P}^x$. Evaluated at $X$, the strong solution to the SDE in \eqref{eq:sdefull}, the Radon-Nikodym derivative is 
        \begin{equation}
            \label{eq:rn_f}
            \frac{\textup{d}\tilde{\mathbb{P}}^x}{\textup{d}\mathbb{P}^x} \Big|_{\mathcal{F}_t}(X) = \mathcal{E}_t(X) = \exp \big( - M^u_t(X) - \frac{1}{2} \langle M^u, M^u \rangle_t (X) \big) \, .
        \end{equation}

From \eqref{eq:novikov} we also have that the reverse Radon-Nikodym derivative of $\mathbb{P}^x$ w.r.t. $\tilde{\mathbb{P}}^x$ is
        \begin{equation}
            \label{eq:rn_r}
            \frac{\textup{d}\mathbb{P}^x}{\textup{d}\tilde{\mathbb{P}}^x}\Big|_{\mathcal{F}_t} (X) = \mathcal{E}^{-1}_t(X) = \exp \big( M^u_t(X) + \frac{1}{2} \langle M^u, M^u \rangle_t (X) \big) \, .
        \end{equation}

The existence of the Radon-Nikodym derivative in both directions implies that $\tilde{\mathbb{P}}^x \sim \mathbb{P}^x$. 

The laws $\mathbb{P}^x$ and $\tilde{\mathbb{P}}^x$ are examples of probability measures over paths on the unbounded time domain $[0,\infty)$. One can also characterize the law of solutions on a bounded time domain. Let $\mathbb{P}_{[0,T]}^x, \tilde{\mathbb{P}}_{[0,T]}^x \in \mathcal{P}(C([0,T],\mathbb{R}^m))$ denote the finite-time path measures of $(X_t)_{t \in [0,T]}$ and $(\tilde{X}_t)_{t \in [0,T]}$, respectively, for some final time $0 < T < \infty$. Under the same conditions, we have $\mathbb{P}_{[0,T]}^x \sim \tilde{\mathbb{P}}_{[0,T]}^x$. We use the shorthand $\mathbb{E}^x[\cdot]$, $\tilde{\mathbb{E}}^x[\cdot]$, $\mathbb{E}^x_{[0,T]}[\cdot]$, and $\tilde{\mathbb{E}}^x_{[0,T]}[\cdot]$ to denote the expectation operator with respect to the measures $\mathbb{P}^x$, $\tilde{\mathbb{P}}^x$, $\mathbb{P}_{[0,T]}^x$, and $\tilde{\mathbb{P}}_{[0,T]}^x$, respectively.

\begin{remark}
 \label{remark_underdamped_Langevin_systems}
 The underdamped Langevin equation is an example of an SDE of the form in \eqref{eq:sdefull}, where $X_t=(q_t,p_t)\in\mathbb{R}^m\times\mathbb{R}^m$, 
 \begin{equation*}
  b(q_t,p_t)\coloneqq 
  \begin{pmatrix}
   M^{-1} p_t 
   \\
   -\nabla_q V(q_t)-\gamma M^{-1}p_t
  \end{pmatrix},
  \quad
  \sigma(q_t,p_t)\coloneqq
  \begin{pmatrix}
   0 & 0
   \\
  0 & \sqrt{2\gamma\beta^{-1}}\mathbb{I}_m
  \end{pmatrix},
 \end{equation*}
and $W_t\in \mathbb{R}^{2m}$ for every $t\geq 0$. The Moore-Penrose pseudoinverse of $\Sigma=\sigma\sigma^\top$ is constant and block matrix-valued, with zeros in all blocks except for the bottom right block, which is equal to $(2\gamma\beta^{-1})^{-1} \mathbb{I}_m$. Thus,  \Cref{cond:elliptic} holds with $\alpha^{-1}=(2\gamma\beta^{-1})^{-1/2}$. Next, suppose that $ q^\prime \mapsto \tilde{V}(q^\prime)\in\mathbb{R}$ is such that $q^\prime \mapsto -\nabla_q \tilde{V}(q^\prime)$ is bounded and measurable over the set of all admissible $q^\prime$, and that $\tilde{b}$ satisfies
 \begin{equation*}
  \tilde{b}(q_t,p_t) -b(q_t,p_t)=
  \begin{pmatrix} 
   0 
   \\
   -\nabla_q \tilde{V}(q_t).
  \end{pmatrix}
 \end{equation*}
Then the change of drift $\tilde{b}-b$ is bounded and measurable. Thus, if the reference underdamped Langevin system satisfies \Cref{cond:driftdiff}, then the laws of the reference and surrogate underdamped Langevin systems are mutually absolutely continuous.
\end{remark}

\subsection{SDE observables}

In applications, we are often interested in characterizing properties of the SDE in \eqref{eq:sdefull} using summary statistics, or expectations over probability measures of the SDE.  We refer generically to these properties as \textit{observables}. Much of prior literature concerns state-space observables -- expectations with respect to the invariant distribution of the state -- which describe properties of the system at equilibrium. Our work focuses on the more general case of path-space observables, which describe transient, non-equilibrium properties of the dynamics characterized by expectations with respect to the path measure of the SDE.

\begin{definition}[Path-space observable]
\label{def:pathobs}
A path-space observable $\mu \in \mathbb{R}$ is the expectation of a measurable functional $\phi_\tau : C([0,\infty), \mathbb{R}^m) \to \mathbb{R}$, 
    \[
        \mu \coloneqq \mathbb{E}^x[\phi_\tau] \, ,
    \]
where for any stochastic process $(Y_t)_{t \geq 0}$, $\phi_\tau(Y) = \int_0^\tau f(Y_s) \textup{d}s$ corresponds to the time integral of a particular function $f: \mathbb{R}^m \to \mathbb{R}$ up to a stopping time $\tau > 0$. 
\end{definition}

Following this definition, the path-space observable corresponding to the surrogate SDE in \eqref{eq:sdesurrogate} is $\tilde{\mu} \coloneqq \tilde{\mathbb{E}}^x[\phi_\tau] = \tilde{\mathbb{E}}^x \big[ \int_0^\tau f(\tilde{X}_s)\textup{d}s \big]$.

A canonical example of a path-space observable is the \textit{mean first exit time}, which characterizes transitions or reaction rates between metastable states of the stochastic process. Let $\tau$ denote the first time that the process escapes an open bounded set $B \subset \mathbb{R}^m$ with positive Lebesgue measure, 
    \begin{equation}
        \label{eq:firstexittime}
        \tau(Y) \coloneqq \inf \{ t \geq 0: Y_t \notin B\} \, .
    \end{equation}
Then, using the constant function $f=1$ in the definition of $\phi_\tau$ in \Cref{def:pathobs}, we have that $\phi_{\tau} = \tau$ and $\mu = \mathbb{E}^x[\tau]$ is the mean first exit time. When the diffusion coefficient of the SDE is $\sigma = \sqrt{2 \beta^{-1}}$ for some temperature parameter $\beta \in \mathbb{R}$, the scaled moment generating function of the first exit time $\tau$ can be written as
    \[
        \psi(x) = \mathbb{E}^x \big[ \exp( -\beta \epsilon \tau ) \big]
    \]
with some scale parameter $\epsilon$. From the Feynman-Kac formula \cite[Thm.~8.2.1]{Oksendal2013}, it can be shown that the scaled moment generating function is the solution to a Dirichlet boundary value problem, as defined in \cite{Hartmann2012} and \cite[Prop.~5.7.2]{Karatzas1991},
    \begin{equation}
        \label{eq:feynmankac}
        \beta^{-1} L \psi = \epsilon \psi, \quad \psi |_{\partial B} = 1 \, ,
    \end{equation}
where $L(\cdot) = \beta^{-1} \nabla_x^2 (\cdot) + b \cdot \nabla_x(\cdot)$ is the infinitesimal generator of the process. From the solution to the Feynman-Kac PDE in \eqref{eq:feynmankac} we can obtain estimates of the first and second moments of the first exit time from
\begin{equation}
    \label{eq:fkmoments}
    \mu = \mathbb{E}^x[\tau] = - \frac{1}{\beta} \frac{\textup{d}\psi}{\textup{d}\epsilon} \Big|_{\epsilon=0} \qquad \text{and} \qquad \mathbb{E}^x[\tau^2] = \frac{1}{\beta^2} \frac{\textup{d}^2\psi}{\textup{d}\epsilon^2} \Big|_{\epsilon=0} \, .
\end{equation}
    
In general, the Feynman-Kac PDE is tractable to solve only in low dimensions. Alternatively, one can compute a Monte Carlo estimate of a path-space observable from path realizations of the SDE. In particular, from a set of $N_\text{path}$ samples $\{X^{\delta,(i)}: i = 1,...,N_\text{path}\}$ independently generated via simulation of the SDE discretized with step size $\delta > 0$, $\mu$ can be estimated by
\[
    \mu \approx \hat{\mu} = \frac{1}{N_\text{path}} \sum_{i=1}^{N_\text{path}} \phi_\tau(X^{\delta,(i)}) \, .
\]

\subsection{Information theory with SDEs}
\label{sec:infotheory}

Information divergences between the path measures of SDEs are expressed in terms of the Radon-Nikodym derivative in \eqref{eq:rn_f} and \eqref{eq:rn_r}. We refer to the information divergence of $\mathbb{P}^x$ with respect to $\tilde{\mathbb{P}}^x$ as the \textit{forward} divergence and the information divergence of $\tilde{\mathbb{P}}^x$ with respect to $\mathbb{P}^x$ as the \textit{reverse} divergence.

\begin{definition}[Kullback-Leibler (KL) divergence]
\label{def:path_kl}
The KL divergence $\mathbb{D}_\textup{KL}$ of $\mathbb{P}^x$ w.r.t. $\tilde{\mathbb{P}}^x$ is defined as follows, where $\Sigma = \sigma \sigma^\top$:
    \[
        \label{eq:path_kl}
        \mathbb{D}_\textup{KL}(\mathbb{P}^x || \tilde{\mathbb{P}}^x) \coloneqq \int \log \Bigg( \frac{\textup{d}\mathbb{P}^x}{\textup{d}\tilde{\mathbb{P}}^x} \Bigg) \textup{d}\mathbb{P}^x = \mathbb{E}^x \Bigg[ \frac{1}{2} \int_0^\tau \left\| b - \tilde{b} \right\| ^2_{\Sigma^+} \textup{d}s \Bigg] \, .
    \]
\end{definition}

One can relate the asymptotic time average of KL divergence between finite-time path measures to the relative entropy rate $\mathcal{H}(\mathbb{P}^x || \tilde{\mathbb{P}}^x)$ as $T \to \infty$ \cite{Dupuis2016, Opper2017}.

\begin{definition}[Relative entropy rate]
\label{def:rer}
The relative entropy rate $\mathcal{H}$ of $\mathbb{P}^x$ w.r.t. $\tilde{\mathbb{P}}^x$ is
    \[
        \label{eq:rer}
        \mathcal{H}(\mathbb{P}^x || \tilde{\mathbb{P}}^x) \coloneqq \lim_{T \to \infty} \frac{1}{T} \mathbb{D}_\textup{KL}(\mathbb{P}_{[0,T]}^x || \tilde{\mathbb{P}}_{[0,T]}^x) = \frac{1}{2} \mathbb{E}_{\pi} \Big[ \left\| b - \tilde{b} \right\|_{\Sigma^+}^2 \Big] \, .
    \]
\end{definition}
Furthermore, we have the following relation from \cite[Lemma 3.2]{Dupuis2016}, when the initial conditions $X_0$ and $\tilde{X}_0$ are identical:
    \begin{equation}
        \label{eq:kl_rer}
        \mathbb{D}_\textup{KL}(\mathbb{P}_{[0,T]}^x || \tilde{\mathbb{P}}_{[0,T]}^x) = T \mathcal{H}(\mathbb{P}^x || \tilde{\mathbb{P}}^x) \, .
    \end{equation}

\newpage

\section{Main results}
\label{sec:golearning}

\subsection{Error bound on path-space observables}
\label{sec:goerrbd}

In \Cref{lem:2momentbd}, we introduce an information-theoretic upper bound to the absolute error in a path-space observable.

\begin{lemma}[Goal-oriented error bound]
\label{lem:2momentbd}
For $\phi_\tau \in L^2\big(\mathbb{P}^x\big) \cap L^2\big(\tilde{\mathbb{P}}^x\big)$, 
\begin{equation}
\label{eq:2momentbd}
|\mathbb{E}^x[\phi_\tau] - \tilde{\mathbb{E}}^x[\phi_\tau] | \leq \sqrt{ \mathbb{E}^x[\phi_\tau^2] + \tilde{\mathbb{E}}^x[\phi_\tau^2] } \sqrt{ 2 \mathbb{D}_\textup{KL} \big( \mathbb{P}^x || \tilde{\mathbb{P}}^x \big)} \, .
\end{equation}
\end{lemma}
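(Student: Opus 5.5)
The plan is to write the difference of expectations as a single integral against the difference of the two path measures, and to split it by Cauchy--Schwarz into a factor involving second moments of $\phi_\tau$ and a factor involving the squared Hellinger distance. The Hellinger factor is then bounded by the KL divergence.

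\textbf{Step 1 (common density representation).} Let $\nu \coloneqq \tfrac12(\mathbb{P}^x + \tilde{\mathbb{P}}^x)$, and let $p = \textup{d}\mathbb{P}^x/\textup{d}\nu$ and $q = \textup{d}\tilde{\mathbb{P}}^x/\textup{d}\nu$. Since $\phi_\tau$ is square integrable under both measures, it is integrable, and
\[
\mathbb{E}^x[\phi_\tau] - \tilde{\mathbb{E}}^x[\phi_\tau] = \int \phi_\tau\,(\sqrt{p}-\sqrt{q})(\sqrt{p}+\sqrt{q})\,\textup{d}\nu .
\]
By Cauchy--Schwarz and $(\sqrt p+\sqrt q)^2 \le 2(p+q)$,
\[
|\mathbb{E}^x[\phi_\tau] - \tilde{\mathbb{E}}^x[\phi_\tau]| \le \sqrt{2\big(\mathbb{E}^x[\phi_\tau^2]+\tilde{\mathbb{E}}^x[\phi_\tau^2]\big)}\,\sqrt{\int(\sqrt p-\sqrt q)^2\,\textup{d}\nu}.
\]

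\textbf{Step 2 (Hellinger to KL).} We have
\[
\int(\sqrt p-\sqrt q)^2\,\textup{d}\nu = 2\Big(1-\int\sqrt{pq}\,\textup{d}\nu\Big) \le -2\log\int\sqrt{pq}\,\textup{d}\nu ,
\]
using $1-z\le -\log z$. Since $\mathbb{P}^x\sim\tilde{\mathbb{P}}^x$, we can write $\int\sqrt{pq}\,\textup{d}\nu = \mathbb{E}^x\big[(\textup{d}\tilde{\mathbb{P}}^x/\textup{d}\mathbb{P}^x)^{1/2}\big]$. Jensen's inequality for the concave $\log$ then gives
\[
-2\log \mathbb{E}^x\big[(\textup{d}\tilde{\mathbb{P}}^x/\textup{d}\mathbb{P}^x)^{1/2}\big] \le \mathbb{E}^x\big[\log(\textup{d}\mathbb{P}^x/\textup{d}\tilde{\mathbb{P}}^x)\big] = \mathbb{D}_\textup{KL}(\mathbb{P}^x\|\tilde{\mathbb{P}}^x).
\]
Combining with Step 1 yields $\sqrt{\mathbb{E}^x[\phi_\tau^2]+\tilde{\mathbb{E}}^x[\phi_\tau^2]}\,\sqrt{2\,\mathbb{D}_\textup{KL}}$, which is exactly \eqref{eq:2momentbd}. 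If $\mathbb{D}_\textup{KL}=\infty$, there is nothing to prove.

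\textbf{Main obstacle.} The expected difficulty is measure-theoretic: making sure the divergence appearing here is the one in \Cref{def:path_kl}, namely $\mathbb{E}^x\big[\tfrac12\int_0^\tau\|b-\tilde b\|^2_{\Sigma^+}\textup{d}s\big]$. On the full path space over $[0,\infty)$, the measures need not be mutually absolutely continuous. The cleanest fix is to run Steps 1--2 on the stopped $\sigma$-algebra $\mathcal{F}_\tau$, with respect to which $\phi_\tau$ is measurable, since it depends only on the path up to $\tau$.

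On $\mathcal{F}_\tau$, I would show that the Radon--Nikodym derivative is the stopped exponential $\mathcal{E}_\tau^{-1}$ from \eqref{eq:rn_r}, as follows.
\begin{itemize}
\item Apply the Girsanov density on $\mathcal{F}_{\tau\wedge T}$, which is valid by Novikov \eqref{eq:novikov}, to get $\mathcal{E}_{\tau\wedge T}^{-1}$ there.
\item Take the logarithm and the $\mathbb{P}^x$-expectation. The stochastic integral $M^u_{\tau\wedge T}$, written with respect to the Brownian motion driving $X$, has mean zero by optional stopping, because $u$ is bounded and \Cref{cond:elliptic} holds.
\item This gives $\mathbb{E}^x\big[\tfrac12\int_0^{\tau\wedge T}\|u\|^2_{\Sigma^+}\textup{d}s\big]$.
\item Let $T\to\infty$, using monotone convergence for the right-hand side and lower semicontinuity of KL (or the martingale convergence of the restricted divergences) for the left-hand side.
\end{itemize}
Even without identifying the KL exactly, data processing shows that the divergence restricted to $\mathcal{F}_\tau$ is at most the stated quantity, and this is all that the bound needs.
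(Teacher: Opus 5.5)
Your argument is correct, and its core is the paper's proof. Step 1 is the Cauchy--Schwarz argument behind \cite[Lemma 6.37]{Stuart2010}: with the Hellinger normalization used there, $\int(\sqrt p-\sqrt q)^2\,\textup{d}\nu=2\mathbb{D}_{\textup{H}}^2$, so your bound is exactly $2(\mathbb{E}^x[\phi_\tau^2]+\tilde{\mathbb{E}}^x[\phi_\tau^2])^{1/2}\mathbb{D}_{\textup{H}}$. Step 2 is $\mathbb{D}_{\textup{H}}^2\le\frac12\mathbb{D}_{\textup{KL}}$ from \cite[Lemma 22]{Dashti2017}. The paper cites both inequalities; you prove both.

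The real difference is the choice of $\sigma$-algebra. The paper applies the Hellinger bound on $C([0,\infty),\mathbb{R}^m)$ with the cylindrical $\sigma$-algebra and then reads the divergence directly off \Cref{def:path_kl}. You are right that this identification is not automatic. The Girsanov/Novikov argument gives equivalence only on each $\mathcal{F}_t$ with $t<\infty$. On the full cylindrical $\sigma$-algebra, the divergence dominates $\mathbb{D}_{\textup{KL}}(\mathbb{P}^x_{[0,T]}\|\tilde{\mathbb{P}}^x_{[0,T]})$ for every $T$, so it is infinite whenever the relative entropy rate is positive. Your route fixes this: restrict to $\mathcal{F}_\tau$, apply Girsanov and optional stopping on $\mathcal{F}_{\tau\wedge T}$, then pass to the limit $\mathcal{F}_{\tau\wedge T}\uparrow\mathcal{F}_\tau$. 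That is what actually yields the finite right-hand side $\mathbb{E}^x[\frac12\int_0^\tau\|b-\tilde b\|^2_{\Sigma^+}\textup{d}s]$. The paper's route is shorter only because it takes that identification for granted.

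Two small repairs.

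First, in Step 2 drop ``since $\mathbb{P}^x\sim\tilde{\mathbb{P}}^x$''. The identity $\int\sqrt{pq}\,\textup{d}\nu=\mathbb{E}^x[(q/p)^{1/2}]$, with $q/p$ taken on $\{p>0\}$, holds for any pair of measures. On $\mathcal{F}_\tau$, equivalence is not given a priori. Finiteness of the divergence already forces the absolute continuity $\mathbb{P}^x\ll\tilde{\mathbb{P}}^x$ that the Jensen step needs.

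Second, your closing sentence is wrong. Write $\mathbb{D}^{\mathcal{G}}_{\textup{KL}}$ for the divergence of the restrictions to a $\sigma$-algebra $\mathcal{G}$. Data processing gives only $\mathbb{D}^{\mathcal{F}_{\tau\wedge T}}_{\textup{KL}}\le\mathbb{D}^{\mathcal{F}_\tau}_{\textup{KL}}\le\mathbb{D}^{\mathcal{F}_\infty}_{\textup{KL}}$. That is either a lower bound on the quantity you need to bound from above, or an upper bound by a quantity that is typically infinite. The inequality $\mathbb{D}^{\mathcal{F}_\tau}_{\textup{KL}}\le\mathbb{E}^x[\frac12\int_0^\tau\|b-\tilde b\|^2_{\Sigma^+}\textup{d}s]$ genuinely needs the lower-semicontinuity step from your bullet list (continuity of the divergence along increasing $\sigma$-algebras). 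Keep that step and delete the remark.
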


\begin{proof}
    Let $f: Z \to E$ be an arbitrary measurable mapping, where $Z$ and $E$ are separable Banach spaces. In \cite[Lemma 6.37]{Stuart2010}, restated in \cite[Lemma 21]{Dashti2017}, it is shown that for any two probability measures $\nu$ and $\nu'$ that are absolutely continuous with respect to a measure $\upsilon$ on $Z$, the Hellinger distance $\mathbb{D}_\textup{H}(\nu||\nu')$ is well-defined and it holds that

    \begin{equation}
            \label{eq:stuartbd}
            || \mathbb{E}_\nu[f] - \mathbb{E}_{\nu'}[f] || \leq 2 \big( \mathbb{E}_\nu[||f||^2] + \mathbb{E}_{\nu'}[||f||^2]\big)^{\frac{1}{2}} \mathbb{D}_\textup{H}(\nu||\nu') \, ,
        \end{equation}
        provided that $f$ has second moments with respect to $\nu$ and $\nu'$. The proof given in \cite[p. 410]{Dashti2017} does not use the hypothesis that the domain is a Banach space, and is valid so long as $\nu$, $\nu'$, and $\upsilon$ are measures on a measurable space $(Z,\mathcal{Z})$. 
    
        \Cref{lem:2momentbd} is an application of \eqref{eq:stuartbd} where the measurable space is the path space $C([0,\infty),\mathbb{R}^{m})$ equipped with the cylindrical sigma-algebra, the probability measures are the laws of the SDEs, $\mathbb{P}^x$ and $\tilde{\mathbb{P}}^x$, and $f$ is taken to be the path functional $\phi_\tau: C([0,\infty),\mathbb{R}^m) \to \mathbb{R}$, such that
        \begin{equation}
            \label{eq:stuartbd_path}
            |\mathbb{E}^x[\phi_\tau] - \tilde{\mathbb{E}}^x[\phi_\tau] | \leq 2 \big( \mathbb{E}^x[\phi_\tau^2] + \tilde{\mathbb{E}}^x[\phi_\tau^2] \big)^{\frac{1}{2}}\mathbb{D}_\textup{H} \big( \mathbb{P}^x || \tilde{\mathbb{P}}^x \big) \, .
        \end{equation}
    
    The final expression is recovered by bounding the Hellinger distance of the path measures from above by the KL divergence (\Cref{def:path_kl}), using the information inequality $\mathbb{D}^2_\textup{H}(\mathbb{P}^x || \tilde{\mathbb{P}}^x ) \leq \frac{1}{2} \mathbb{D}_\textup{KL}(\mathbb{P}^x || \tilde{\mathbb{P}}^x )$ \cite[Lemma 22]{Dashti2017}.
\end{proof}

To our knowledge, our work makes the first application of \cite[Lemma 6.37]{Stuart2010} to expectations of path functionals of solutions of SDEs with respect to their laws. The formula for the Hellinger distance between path measures requires the computation of a stochastic integral, which is a random variable dependent on the Wiener process and therefore more difficult to numerically estimate and analyze compared to deterministic integrals; see \cite{Kloeden1992}. Therefore, we opt to utilize KL divergence as the information divergence between $\mathbb{P}^x$ and $\tilde{\mathbb{P}}^x$, since the formula for KL divergence in \Cref{def:path_kl} does not involve computing a stochastic integral.

The bound inherits properties of an information divergence; namely, that it is non-negative for all $\mathbb{P}^x$ and $\tilde{\mathbb{P}}^x$, and for $\phi_\tau$ which is almost surely nonzero with respect to $\mathbb{P}^x$ and $\tilde{\mathbb{P}}^x$, the upper bound is zero if and only if $\mathbb{P}^x = \tilde{\mathbb{P}}^x$. It holds for a broad class of observables, requiring only square integrability of the path functional under $\mathbb{P}^x$ and $\tilde{\mathbb{P}}^x$. The bound is nontrivial as long as $\mathbb{P}^x \ll \tilde{\mathbb{P}}^x$. Moreover, it does not assume closeness of the path measures in terms of a divergence metric, as do the linearized bounds in \cite{Dupuis2016, Branicki2021}.

\begin{remark}
    \label{rmk:comparebd}
    Various information inequalities which upper bound the error in path-space observables have been proposed in literature. Let $\mathbb{Q}^x \in \mathcal{P}(C([0,\infty), \mathbb{R}^m))$ be a dominating measure satisfying $\mathbb{P}^x \ll \mathbb{Q}^x$ and $\tilde{\mathbb{P}}^x \ll \mathbb{Q}^x$. For path functionals with finite essential supremum norm, e.g., $\phi_\tau \in L^\infty(\mathbb{Q}^x)$, the bound based on the Csisz\'ar-Kullback-Pinsker inequality in \cite{Harmandaris2016} is
    \begin{equation}
    \label{eq:esssupbd}
    |\mathbb{E}^x[\phi_\tau] - \tilde{\mathbb{E}}^x[\phi_\tau] | \leq || \phi_\tau||_{L^\infty(\mathbb{Q}^x)} \sqrt{2 \mathbb{D}_\textup{KL}\big( \mathbb{P}^x || \tilde{\mathbb{P}}^x \big)} \, .
    \end{equation}
    
    The error bound in \cite[Thm.~3.3]{Dupuis2016} relies on the cumulant generating function of the centered observable functional, requiring that $\Lambda(\lambda; \tilde{\mathbb{P}}^x) \coloneqq \log \tilde{\mathbb{E}}^x\Big[ e^{\lambda \big( \phi_\tau-\tilde{\mathbb{E}}^x[\phi_\tau] \big)} \Big]$ is finite for all $\lambda$ in some open neighborhood of the origin. Then, one has the following bound: 
    \begin{equation}
        \label{eq:cgfbd}
        |\mathbb{E}^x[\phi_\tau] - \tilde{\mathbb{E}}^x[\phi_\tau] | \leq \inf_{\lambda > 0} \frac{1}{\lambda} \bigg\{ \Lambda(\lambda; \tilde{\mathbb{P}}^x) + \mathbb{D}_\textup{KL}\big( \mathbb{P}^x || \tilde{\mathbb{P}}^x \big) \bigg\} \, .
    \end{equation}

The bound in \eqref{eq:cgfbd} is closely related to one proposed in \cite[Thm.~3.1]{Branicki2021}, expressed in terms of any $\varphi$-divergence metric, which holds so long as $\phi_\tau$ belongs to the Orlicz subspace associated with the $\varphi$-functional. The conditions for both of the bounds in \eqref{eq:cgfbd} and \cite{Branicki2021} imply that $\phi_\tau$ must be exponentially integrable with respect to $\tilde{\mathbb{P}}^x$, e.g., $\phi_\tau \in L^\text{exp}(\tilde{\mathbb{P}}^x) \coloneqq \{ \phi : C([0,\infty), \mathbb{R}^m) \to \mathbb{R} \ | \ \exists \ \lambda_0 > 0 \ s.t. \ \Lambda(\pm \lambda_0; \tilde{\mathbb{P}}^x)< \infty\}$. 

In goal-oriented learning and other applications, it is often favorable for an error bound on the path-space observable to be symmetric with respect to the order of the path measures. One can verify that the conditions required on $\phi_\tau$ for each of the above bounds to have this symmetry are stricter in comparison to those of \Cref{lem:2momentbd}. In particular, one requires that $\phi_\tau \in L^\infty\big(\mathbb{P}^x\big) \cap L^\infty(\tilde{\mathbb{P}}^x\big)$\footnote{It is straightforward to show all $\phi_\tau \in L^\infty(\mathbb{Q}^x)$ also belong to $L^\infty(\mathbb{P}^x) \cap L^\infty(\tilde{\mathbb{P}}^x)$, since any set with $\mathbb{Q}^x$-measure zero also has zero measure under  $\mathbb{P}^x$ and $\tilde{\mathbb{P}}^x$.} for \eqref{eq:esssupbd} to hold and $L^\text{exp}\big(\mathbb{P}^x\big) \cap L^\text{exp}\big(\tilde{\mathbb{P}}^x\big)$ for symmetry of the bound in \eqref{eq:cgfbd}. Since we have $L^p(\mathbb{U}) \subset L^2(\mathbb{U})$ for any probability measure $\mathbb{U}$ and $p \in (2,\infty)$, and $L^\text{exp}(\mathbb{U}) \subseteq \bigcap_{p=1}^\infty L^p(\mathbb{U})$, the space $L^2\big(\mathbb{P}^x\big) \cap L^2\big(\tilde{\mathbb{P}}^x\big)$ is strictly larger than $L^\infty\big(\mathbb{P}^x\big) \cap L^\infty(\tilde{\mathbb{P}}^x\big)$ as well as $L^\text{exp}\big(\mathbb{P}^x\big) \cap L^\text{exp}(\tilde{\mathbb{P}}^x\big)$. Therefore, the bound in \Cref{lem:2momentbd} can be stated with $\mathbb{P}^x$ and $\tilde{\mathbb{P}}^x$ in either order under strictly weaker conditions on $\phi_\tau$ than those required for \eqref{eq:esssupbd} or \eqref{eq:cgfbd} to hold for both orderings of the path measures. 

\Cref{lem:2momentbd} is derived from the Hellinger distance-based inequality in \cite[Lemma 6.37]{Stuart2010}, restated in \eqref{eq:stuartbd}. Improvements to this inequality are proposed in \cite[Lemma A.1]{Katsoulakis2017} and \cite[Lemma 3]{Cui2023}, which achieve a tighter bound by using an optimal centering of the functional $\phi_\tau$. However, these improved bounds do not necessarily yield a more effective loss function for goal-oriented learning. Since they depend on the variance of $\phi_\tau$ instead of the second moment of $\phi_\tau$ under $\tilde{\mathbb{P}}^x$, the gradient of the improved bounds with respect to perturbations to the surrogate measure $\tilde{\mathbb{P}}^x$ contain additional terms compared to the gradient of \cite[Lemma 6.37]{Stuart2010}. Each additional term introduces independent Monte Carlo error, which increases the variance of the resulting gradient estimator. This problem is consistent with the observation in \cite{Rainforth2018} that tighter variational bounds may be associated with a lower signal-to-noise ratio in their gradient estimators, which reduces the efficacy of stochastic gradient descent. For these reasons, we adopt the bound in \cite[Lemma 6.37]{Stuart2010} instead of its variants in \cite{Katsoulakis2017,Cui2023} for developing a goal-oriented loss. 
\end{remark}

\begin{remark}
\label{rmk:hittime}
The mean first exit time of a stochastic process is an example of a path-space observable which can satisfy the integrability condition of \Cref{lem:2momentbd} but not that of the bound in \eqref{eq:esssupbd}, since the first exit time on an unbounded time domain defined in \eqref{eq:firstexittime} does not have a finite essential supremum in general. Nevertheless, the first exit time has a finite second moment with respect to $\mathbb{P}^x$ and $\tilde{\mathbb{P}}^x$ under well-known regularity assumptions for the It\^o SDEs in \eqref{eq:sdefull} and \eqref{eq:sdesurrogate}. For instance, if the mean first exit time is almost surely finite, the SDE is uniformly elliptic in the bounded domain $B$ in \eqref{eq:firstexittime}, and it has a unique weak solution, then \eqref{eq:feynmankac} has a classical solution which yields the moments of the first exit time from $B$ \cite[Prop. 5.7.2]{Karatzas1991}. More general sufficient conditions for square integrability of the first exit time of SDEs, including those with degenerate or time-dependent coefficients, are provided in \cite[Lemma 2.1]{Feng2021}.
\end{remark}

\subsection{Goal-oriented loss functions}
\label{sec:goloss}

Taking a variational learning approach, we adapt the error bound in \Cref{lem:2momentbd} into a loss function for goal-oriented learning of the drift function of an SDE, where the ``goal'' is to control error in the path-space observable predicted by the learned model. Let $\Theta$ be a nonempty open subset of $\mathbb{R}^d$. For $\theta \in \Theta$, the surrogate SDE associated with a surrogate drift function $\tilde{b}_\theta$ is
\begin{equation}
    \label{eq:sdeparametric}
    \textup{d}\tilde{X}^\theta_t = \tilde{b}_\theta(\tilde{X}^\theta_t) \textup{d}t + \sigma(\tilde{X}^\theta_t) \textup{d}W_t, \quad \tilde{X}^\theta_0 = x \, .
\end{equation}
Let $\tilde{b}_\theta$ belong to the model class $\mathcal{B}$, indexed by parameters $\theta \in \Theta$, defined as
\begin{align*}
    \mathcal{B} \coloneqq \{ \tilde{b}_\theta : \mathbb{R}^m \to \mathbb{R}^m \ | & \ \theta \in \Theta, b-\tilde{b}_\theta \text{ is bounded and measurable,} \\
    & \text{\eqref{eq:sdeparametric} admits a unique invariant measure }\tilde{\pi}_\theta \} \, .
\end{align*}
From the condition that $b-\tilde{b}_\theta$ is bounded, \eqref{eq:sdeparametric} is guaranteed a unique weak solution with law $\tilde{\mathbb{P}}^x_\theta \in \mathcal{P}(C([0,\infty), \mathbb{R}^m))$, where  $\tilde{\mathbb{P}}^x_\theta \sim \mathbb{P}^x$ follows from Novikov's criterion; see the discussion in \Cref{subsec:datadiffusion}. It is assumed that the map $\theta \mapsto \tilde{b}_\theta$ is injective, i.e., elements of $\mathcal{B}$ are uniquely determined by their associated parameter $\theta$. 

\begin{definition}[Goal-oriented loss functions]
\label{def:goloss}
For $\phi_\tau : C([0,\infty),\mathbb{R}^m) \to \mathbb{R}$ defined by a random stopping time $\tau > 0$ and $\mathcal{M} \geq 0$, define the forward goal-oriented loss $\mathcal{L}_\tau^{\textup{GO}}: \Theta \to \mathbb{R}$ and the reverse goal-oriented loss $\reverse{\mathcal{L}}_\tau^{\textup{GO}}: \Theta \to \mathbb{R}$ as
\begin{subequations}
\label{eq:goloss}
\begin{alignat}{4}
    \label{eq:golossf_tau}
    \mathcal{L}_\tau^{\textup{GO}}(\theta) \coloneqq \Big(\mathcal{M} + \tilde{\mathbb{E}}^x_\theta[\phi_\tau^2] \Big) \mathbb{D}_{\textup{KL}}(\mathbb{P}^x||\tilde{\mathbb{P}}^x_\theta) \, ,
    \\
    \label{eq:golossr_tau}
    \reverse{\mathcal{L}}_\tau^{\textup{GO}}(\theta) \coloneqq \Big(\mathcal{M} + \tilde{\mathbb{E}}^x_\theta[\phi_\tau^2] \Big) \mathbb{D}_{\textup{KL}}(\tilde{\mathbb{P}}^x_\theta || \mathbb{P}^x) \, .
\end{alignat}
\end{subequations}
For $\phi_T : C([0,T],\mathbb{R}^m) \to \mathbb{R}$ defined by a deterministic time $0 < T < \infty$, the forward and reverse goal-oriented losses reduce to $\mathcal{L}_T^{\textup{GO}}: \Theta \to \mathbb{R}$ and $\reverse{\mathcal{L}}_T^{\textup{GO}}: \Theta \to \mathbb{R}$, where
\begin{subequations}
\label{eq:goloss_T}
\begin{alignat}{4}
    \label{eq:golossf}
    \mathcal{L}_T^{\textup{GO}}(\theta) \coloneqq T \Big(\mathcal{M} + \tilde{\mathbb{E}}^x_{\theta,[0,T]}[\phi_T^2] \Big) \mathcal{H}(\mathbb{P}^x||\tilde{\mathbb{P}}^x_\theta) \, ,
    \\
    \label{eq:golossr}
    \reverse{\mathcal{L}}_T^{\textup{GO}}(\theta) \coloneqq T \Big(\mathcal{M} + \tilde{\mathbb{E}}^x_{\theta,[0,T]}[\phi_T^2] \Big) \mathcal{H}(\tilde{\mathbb{P}}^x_\theta || \mathbb{P}^x) \, .
\end{alignat}
\end{subequations}
\end{definition}

Under appropriate conditions, both the forward and reverse goal-oriented losses are upper bounds to the squared error in the path-space observable.

\begin{proposition}[Goal-oriented losses bound error in the observable]
\label{prop:golossineq}
For $\phi_\tau \in L^2\big(\mathbb{P}^x\big) \cap L^2\big(\tilde{\mathbb{P}}^x_\theta\big)$ and $\mathcal{M} \geq \mathbb{E}^x[\phi^2_\tau]$,
\[
    \frac{1}{2}(\mathbb{E}^x[\phi_\tau]-\tilde{\mathbb{E}}^x_\theta[\phi_\tau])^2 \leq \mathcal{L}_\tau^{\textup{GO}}(\theta) \quad \text{and} \quad \frac{1}{2}(\mathbb{E}^x[\phi_\tau]-\tilde{\mathbb{E}}^x_\theta[\phi_\tau])^2 \leq \reverse{\mathcal{L}}_\tau^{\textup{GO}}(\theta).
\]
For $\phi_T \in L^2\big(\mathbb{P}^x_{[0,T]}\big) \cap L^2\big(\tilde{\mathbb{P}}^x_{\theta,[0,T]}\big)$ and $\mathcal{M} \geq \mathbb{E}^x_{[0,T]}[\phi^2_T]$,
\[
    \frac{1}{2}(\mathbb{E}^x_{[0,T]}[\phi_T]-\tilde{\mathbb{E}}^x_{\theta,[0,T]}[\phi_T])^2 \leq \mathcal{L}_T^{\textup{GO}}(\theta) \quad \text{and} \quad \frac{1}{2}(\mathbb{E}^x_{[0,T]}[\phi_T]-\tilde{\mathbb{E}}^x_{\theta,[0,T]}[\phi_T])^2 \leq \reverse{\mathcal{L}}_T^{\textup{GO}}(\theta).
\]
    
\end{proposition}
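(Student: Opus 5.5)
Squaring the goal-oriented error bound of \Cref{lem:2momentbd} and halving both sides gives, for $\phi_\tau \in L^2(\mathbb{P}^x)\cap L^2(\tilde{\mathbb{P}}^x_\theta)$,
\[
\tfrac{1}{2}\big(\mathbb{E}^x[\phi_\tau]-\tilde{\mathbb{E}}^x_\theta[\phi_\tau]\big)^2 \leq \big(\mathbb{E}^x[\phi_\tau^2]+\tilde{\mathbb{E}}^x_\theta[\phi_\tau^2]\big)\,\mathbb{D}_\textup{KL}(\mathbb{P}^x\|\tilde{\mathbb{P}}^x_\theta).
\]
This applies to $\tilde{\mathbb{P}}^x_\theta$ because $\tilde{b}_\theta\in\mathcal{B}$: the drift difference is bounded, so Novikov's criterion holds and $\tilde{\mathbb{P}}^x_\theta\sim\mathbb{P}^x$, as in \Cref{subsec:datadiffusion}. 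The hypothesis $\mathcal{M}\geq\mathbb{E}^x[\phi_\tau^2]$, together with the nonnegativity of the KL divergence, lets us replace $\mathbb{E}^x[\phi_\tau^2]$ by $\mathcal{M}$. This gives the forward inequality $\tfrac12(\cdot)^2\leq\mathcal{L}^{\textup{GO}}_\tau(\theta)$.

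For the reverse loss, apply \Cref{lem:2momentbd} with the roles of the two measures exchanged. The inequality \eqref{eq:stuartbd} underlying the lemma is symmetric in $\nu,\nu'$, because the Hellinger distance is symmetric. The only asymmetric step is the bound $\mathbb{D}_\textup{H}^2\leq\tfrac12\mathbb{D}_\textup{KL}$, and it holds in either order. Hence
\[
|\mathbb{E}^x[\phi_\tau]-\tilde{\mathbb{E}}^x_\theta[\phi_\tau]|\leq\sqrt{\mathbb{E}^x[\phi_\tau^2]+\tilde{\mathbb{E}}^x_\theta[\phi_\tau^2]}\sqrt{2\mathbb{D}_\textup{KL}(\tilde{\mathbb{P}}^x_\theta\|\mathbb{P}^x)},
\]
which uses $\tilde{\mathbb{P}}^x_\theta\ll\mathbb{P}^x$ via \eqref{eq:rn_f}. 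Squaring, halving and using the same replacement by $\mathcal{M}$ gives the reverse inequality. The bound $\mathcal{M}\ge\mathbb{E}^x[\phi_\tau^2]$ is what makes the prefactor depend only on the surrogate's second moment plus a constant.

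For the deterministic horizon $T$, repeat both arguments on the measurable space $C([0,T],\mathbb{R}^m)$ with the finite-time laws $\mathbb{P}^x_{[0,T]}$ and $\tilde{\mathbb{P}}^x_{\theta,[0,T]}$. These are mutually absolutely continuous by the remark in \Cref{subsec:datadiffusion}, and \eqref{eq:stuartbd} only requires a measurable space. This yields the bounds with $\mathbb{D}_\textup{KL}(\mathbb{P}^x_{[0,T]}\|\tilde{\mathbb{P}}^x_{\theta,[0,T]})$ and its reverse. Finally, use \eqref{eq:kl_rer} to write these as $T\mathcal{H}(\mathbb{P}^x\|\tilde{\mathbb{P}}^x_\theta)$ and $T\mathcal{H}(\tilde{\mathbb{P}}^x_\theta\|\mathbb{P}^x)$, which matches \eqref{eq:golossf} and \eqref{eq:golossr}.

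I expect the main obstacle to be justifying this last identity, especially in the reverse direction. Lemma 3.2 of \cite{Dupuis2016} is quoted only for the forward ordering, with identical initial conditions, and presumably under stationarity of the relevant process. For the reverse ordering, I would first recompute $\mathbb{D}_\textup{KL}(\tilde{\mathbb{P}}^x_{\theta,[0,T]}\|\mathbb{P}^x_{[0,T]})$ directly from \eqref{eq:rn_f}. The stochastic integral has zero mean, since $u$ is bounded and \Cref{cond:elliptic} holds, so this leaves $\tfrac12\tilde{\mathbb{E}}^x_\theta\int_0^T\|u\|^2_{\Sigma^+}(\tilde X_s)\,\textup{d}s$. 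I would then invoke the analogue of \eqref{eq:kl_rer} with $\tilde{\pi}_\theta$ as the invariant measure, which exists because $\tilde b_\theta\in\mathcal{B}$. If this identification requires stationary initial data, I would state that assumption explicitly as the setting in which \eqref{eq:kl_rer} is being used.
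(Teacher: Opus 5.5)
Your proposal is correct and follows the paper's proof step for step. You square and halve \Cref{lem:2momentbd}, replace $\mathbb{E}^x[\phi_\tau^2]$ by $\mathcal{M}$, use that the lemma holds with the path measures in either order, and for the finite horizon repeat the argument on $C([0,T],\mathbb{R}^m)$ and substitute \eqref{eq:kl_rer}; the paper applies that identity to the reverse ordering simply by exchanging the roles of reference and surrogate, which its framework allows since $\tilde b_\theta\in\mathcal{B}$ supplies $\tilde\pi_\theta$ and a bounded drift difference.

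Your caveat is well founded, but it is not specific to the reverse direction. For a deterministic start $x$ the finite-time divergence is $\frac{1}{2}\int_0^T\mathbb{E}^x\|u(X_s)\|^2_{\Sigma^+}\,\textup{d}s$, which generally differs from $T\mathcal{H}$ unless the process starts in its invariant law. So the paper's $T$-bounds rest on the same identity, and imposing stationary initial data would change the observable on the left-hand side rather than repair the statement for fixed $x$.
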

\begin{proof} We first consider the case of path functionals with a random stopping time $\tau \geq 0$. The inequality in \Cref{lem:2momentbd} is maintained if we replace $\mathbb{E}^x[\phi_\tau^2]$ with an upper bound $\mathcal{M}$ which is constant with respect to $\theta$. Since both sides of the inequality are strictly non-negative, squaring preserves order and we have a bound on the squared error in the path-space observable of the form of $\mathcal{L}^\textup{GO}_\tau$.  The upper bound with $\reverse{\mathcal{L}}^\textup{GO}_\tau$ follows from the fact that \Cref{lem:2momentbd} holds with a reverse ordering of path measures.

The result for path functionals with a deterministic stopping time are derived in a similar fashion, using the fact that \Cref{lem:2momentbd} holds for mutually absolutely continuous path measures on $[0,T]$. We substitute $\mathbb{D}_\textup{KL} \big( \mathbb{P}^x_{[0,T]} || \tilde{\mathbb{P}}^x_{\theta,[0,T]}  
\big) \allowbreak = T\mathcal{H}(\mathbb{P}^x || \tilde{\mathbb{P}}^x_{\theta})$ from \eqref{eq:kl_rer} to obtain the forms of $\mathcal{L}_T^\textup{GO}$ and $\reverse{\mathcal{L}}_T^\textup{GO}$.
\end{proof}

\medskip 

Given the error guarantees of \Cref{prop:golossineq}, minimizing the goal-oriented loss in \Cref{def:goloss} serves as a proxy for minimizing error in the observable. By introducing the constant $\mathcal{M}$, we show that it suffices to know an upper bound to the second moment of the observable rather than its exact value, which does not depend on the parameter $\theta$. In practice, $\mathcal{M}$ can be treated as a scale factor in the loss which adjusts the relative weight of the observable-dependent summand and observable-independent summand -- e.g., $\tilde{\mathbb{E}}^x_\theta[\phi_\tau^2] \mathbb{D}_{\textup{KL}}(\mathbb{P}^x||\tilde{\mathbb{P}}^x_\theta)$ and $\mathcal{M} \mathbb{D}_{\textup{KL}}(\mathbb{P}^x||\tilde{\mathbb{P}}^x_\theta)$ of \eqref{eq:goloss}, respectively. 

The goal-oriented losses \eqref{eq:golossf_tau} and \eqref{eq:golossr_tau} apply to the more general case of path-space observables defined on unbounded time domains. In the case of path-space observables defined on bounded time domains, \eqref{eq:golossf} and \eqref{eq:golossr} often have reduced computation and data requirements. This is because Monte Carlo estimators of the relative entropy rate require only independent samples of the invariant density, whereas Monte Carlo estimators of the KL divergence for finite-time path measures require multiple path realizations of the SDE. Since only finite-time simulations are possible in practice, the remainder of the paper focuses on the forms of the goal-oriented loss in \eqref{eq:golossf} and \eqref{eq:golossr}.

The forward and reverse formulations of the goal-oriented loss differ only in the order of arguments of the discrepancy measure, which determines the underlying probability measure for the associated expectation. Therefore, the two formulations suit different data scenarios for training: the forward loss $\mathcal{L}_T^{\textup{GO}}$ is appropriate when given sample states from $\pi$ (or $\mathcal{L}^{\textup{GO}}_\tau$ given sample paths from $\mathbb{P}^x$), whether generated from prior simulations or acquired from open-source repositories. In this setting, the data remain fixed across training iterations; however, such samples from the reference process are often difficult to obtain and may poorly represent the reference measure. The reverse loss $\reverse{\mathcal{L}}_T^{\textup{GO}}$ relies on sample states from $\tilde{\pi}_\theta$ (or $\reverse{\mathcal{L}}^{\textup{GO}}_\tau$ on sample paths from $\tilde{\mathbb{P}}^x_\theta$), which are more feasible to generate from simulation due to the low cost of the surrogate model. The reverse loss incurs a sampling step for each iteration of the surrogate model during training, but offers the advantage that sampling error can be systematically reduced through longer repeated simulations. The algorithms for forward and reverse goal-oriented learning are in \Cref{supp:alg}. 

With either formulation, we solve for the optimal parameters which minimize the goal-oriented loss, e.g., $\theta^* =\text{argmin}_\theta \ \mathcal{L}_T^\textup{GO}(\theta)$ or $\reverse{\theta}^* = \text{argmin}_\theta \ \reverse{\mathcal{L}}_T^\textup{GO}(\theta)$. In general, we have that $\theta^* \neq \reverse{\theta}^*$ due to asymmetry of the relative entropy rate. Moreover, $\text{argmin}_\theta \ \mathcal{L}_T^\textup{GO}(\theta)$ and $\text{argmin}_\theta \ \reverse{\mathcal{L}}_T^\textup{GO}(\theta)$ are not guaranteed to coincide with $\text{argmin}_\theta \ \frac{1}{2} (\mu-\tilde{\mu}_\theta)^2$. This fact is a typical result of variational learning, where a controlled mismatch between a tractable upper bound and the true objective is accepted in order to yield useful approximations of the true minimizer. Finally, the goal-oriented loss admits multiple minimizers, since the error in the observable does not uniquely determine the SDE. Among multiple admissible models, the goal-oriented loss can be regularized so as to favor those with additional properties, such as reduced variance in the observable or increased fidelity to strong solutions of the reference SDE.

\subsection{Gradient of the goal-oriented loss function}
\label{sec:gradloss}

The gradient of the goal-oriented loss function can be employed in gradient-based optimization algorithms to locate critical points $\hat{\theta} \in \Theta$, i.e., parameters which satisfy $\nabla_\theta \mathcal{L}_T^\textup{GO}(\hat{\theta})=0$ or $\nabla_\theta \reverse{\mathcal{L}}_T^\textup{GO}(\hat{\theta})=0$. 
We show that the gradient of the goal-oriented losses in \eqref{eq:golossf} and \eqref{eq:golossr} with respect to parameters have closed-form expressions which are tractable for computation. First, we state additional assumptions which are required to derive an analytical expression for the gradient.

\begin{condition}[Differentiability]
    \label{cond:differentiable}
    The map $\theta \mapsto \tilde{b}_\theta \in \mathcal{B}$ is continuously differentiable for $\theta \in \Theta$, and the partial derivatives $\partial_{\theta_i} \tilde{b}_\theta: \mathbb{R}^m \to \mathbb{R}^m$ for $i=1,...,d$ are bounded. 
\end{condition}

\begin{condition}[Exponentially integrable stopping time]
    \label{cond:observable}
    The path-space observable $\tilde{\mu}(\theta) = \tilde{\mathbb{E}}^x_\theta[\phi_\tau]$ is defined by some functional $\phi_\tau$ where for every $\theta \in \Theta$, there exists a $\lambda_0 > 0$ such that $\tilde{\mathbb{E}}^x_\theta[\exp(\lambda_0 \tau)] < \infty$. 
\end{condition}

\begin{condition}[Existence of dominating functions]
    \label{cond:domfcn}
    There exist functions $\psi_1, \psi_2, \allowbreak \psi_3 \in L^1$ where for almost every $x$, $\sup_{\theta \in \Theta} || \tilde{b}_\theta(x)||_1 \leq \psi_1(x)$, $\sup_{\theta \in \Theta} || \nabla_\theta \tilde{b}_\theta(x)||_F \leq \psi_2(x)$, and $\sup_{\theta \in \Theta} \allowbreak || \nabla_\theta \tilde{\pi}_\theta(x)||_1 \allowbreak \leq \psi_3(x)$. 
\end{condition}

Next, we present the gradient of the path-space observable. This result uses previous work on the Fr\'echet derivative of expected path functionals \cite{Lie2021}. We state the proof in \Cref{supp:gradobspf}.

\begin{proposition}[Gradient of path-space observables]
\label{prop:gradobs}
 Assume Conditions \ref{cond:elliptic}, \ref{cond:differentiable}, and \ref{cond:observable} hold. Then the partial derivative of $\tilde{\mu}$ w.r.t. $\theta_i$ for $i=1,...,d$ is
\begin{equation}
    \label{eq:partial}
    \partial_{\theta_i} \tilde{\mu}(\theta) = \tilde{\mathbb{E}}^x_\theta[\phi_\tau M^{\partial_{\theta_i} \tilde{b}_\theta}_\tau] = \tilde{\mathbb{E}}^x_\theta \Big[ \phi_\tau \int_0^\tau (\sigma^+\partial_{\theta_i} \tilde{b}_\theta)^\top \textup{d}W_s\Big]\, . 
\end{equation}
\end{proposition}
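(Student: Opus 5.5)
The plan is a likelihood-ratio (Girsanov) argument. Fix $\theta\in\Theta$, an index $i$, and a small $h$ with $\theta+he_i\in\Theta$. Write the observable under the perturbed parameter as an expectation under $\tilde{\mathbb{P}}^x_\theta$, differentiate the density at $h=0$, and justify exchanging the limit and the expectation. This is the Fréchet derivative of expected path functionals computed in \cite{Lie2021}, specialised to the direction $v=\partial_{\theta_i}\tilde b_\theta$. So the aim is either to invoke that result directly or to reproduce it in the parametric setting.

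\textbf{Step 1 (change of measure).} Set $u_h:=\tilde b_{\theta+he_i}-\tilde b_\theta$. By the mean value theorem and \Cref{cond:differentiable}, $u_h$ is bounded, with $\|u_h\|_\infty\le C|h|$. As in \Cref{subsec:datadiffusion}, using \Cref{cond:elliptic} and Novikov's criterion, on each $\mathcal F_t$ we have
\[
\frac{\mathrm d\tilde{\mathbb P}^x_{\theta+he_i}}{\mathrm d\tilde{\mathbb P}^x_\theta}\Big|_{\mathcal F_t}=Z^h_t:=\exp\Big(M^{u_h}_t-\tfrac12\langle M^{u_h},M^{u_h}\rangle_t\Big),
\]
where $W$ is the driving Brownian motion of $\tilde X^\theta$. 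The sign is chosen so that the perturbation adds drift $u_h$. The functional $\phi_\tau$ is $\mathcal F_\tau$-measurable, and $\tau$ is a.s. finite by \Cref{cond:observable}. Optional stopping on $\{\tau\le n\}$ followed by monotone convergence should then give $\tilde\mu(\theta+he_i)=\tilde{\mathbb E}^x_\theta[\phi_\tau Z^h_\tau]$.

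\textbf{Step 2 (difference quotient).} Write
\[
\frac{\tilde\mu(\theta+he_i)-\tilde\mu(\theta)}{h}=\tilde{\mathbb E}^x_\theta\Big[\phi_\tau\frac{Z^h_\tau-1}{h}\Big].
\]
Pathwise, $M^{u_h}_\tau/h\to M^{\partial_{\theta_i}\tilde b_\theta}_\tau$ in $L^2$, using the Itô isometry on $[0,\tau]$, dominated convergence in $\int_0^\tau\|\sigma^+(u_h/h-\partial_{\theta_i}\tilde b_\theta)\|^2\mathrm ds$, and $\tilde{\mathbb E}[\tau]<\infty$. Also, $\langle M^{u_h}\rangle_\tau/h\le C|h|\tau\to0$. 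Hence $(Z^h_\tau-1)/h\to M^{\partial_{\theta_i}\tilde b_\theta}_\tau$ in probability.

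\textbf{Step 3 (uniform integrability), the main obstacle.} Upgrading the convergence to convergence of expectations requires a uniform bound on $\tilde{\mathbb E}^x_\theta[|\phi_\tau(Z^h_\tau-1)/h|^{p}]$ for some $p>1$, and this is exactly where \Cref{cond:observable} is needed. I would use $|e^a-1|\le|a|e^{|a|}$ together with Hölder's inequality. This splits the bound into three factors.

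The first factor is a moment of $\phi_\tau$. Since $|\phi_\tau|\le\|f\|_\infty\tau$ for bounded $f$, all moments of $\phi_\tau$ are finite under exponential integrability of $\tau$.

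The second factor is a moment of $|M^{u_h}_\tau|/|h|+C|h|\tau$. The BDG inequality bounds it by moments of $\tau$.

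The third factor is an exponential moment $\tilde{\mathbb E}[\exp(q|M^{u_h}_\tau|)]$. To control it, bound $e^{q|M|}\le e^{qM}+e^{-qM}$. Then write $e^{\pm qM^{u_h}_\tau}=\mathcal E(\pm 2qM^{u_h})_\tau^{1/2}\exp(q^2\langle M^{u_h}\rangle_\tau)$ and apply Cauchy--Schwarz. The first piece is a supermartingale with expectation at most $1$. The second piece is at most $\tilde{\mathbb E}[\exp(2q^2C^2h^2\alpha^{-2}\tau)]^{1/2}$, which is finite once $|h|$ is small enough that $2q^2C^2h^2\alpha^{-2}\le\lambda_0$.

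Uniform integrability then gives $\partial_{\theta_i}\tilde\mu(\theta)=\tilde{\mathbb E}^x_\theta[\phi_\tau M^{\partial_{\theta_i}\tilde b_\theta}_\tau]$. Expanding $M$ via \eqref{eq:martingale} yields the stochastic-integral form in \eqref{eq:partial}.

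If $f$ is not bounded, I would instead take the moment assumptions on $\phi_\tau$ implicit in \cite{Lie2021} and cite that result for the exchange step.
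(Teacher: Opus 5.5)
Your argument is correct but organised differently from the paper's. The paper does no stochastic analysis itself. It cites \cite[Lemma 3.6]{Lie2021} for the Fr\'echet derivative $\delta\mathcal{T}(\tilde b;v)=\tilde{\mathbb{E}}_{\tilde b}[\phi_\tau M^v_\tau]$ of $\tilde b\mapsto\tilde{\mathbb{E}}_{\tilde b}[\phi_\tau]$ over bounded perturbations $v$. It then composes this with $\theta\mapsto\tilde b_\theta$ via the chain rule, and identifies the derivative in direction $e_i$ with $\partial_{\theta_i}\tilde\mu$.

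You instead reprove the needed part of that lemma along the single curve $h\mapsto\tilde b_{\theta+he_i}$. You use the Girsanov representation $\tilde\mu(\theta+he_i)=\tilde{\mathbb{E}}^x_\theta[\phi_\tau Z^h_\tau]$, convergence in probability of $(Z^h_\tau-1)/h$ to $M^{\partial_{\theta_i}\tilde b_\theta}_\tau$, and uniform integrability from the exponential moment of $\tau$.

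\emph{What each route buys.} The paper's route is shorter and yields full Fr\'echet differentiability of $\tilde\mu$. However, the chain rule requires $\theta\mapsto\tilde b_\theta$ to be Fr\'echet differentiable in the sup norm, and the integrability needed of $\phi_\tau$ is delegated to the hypotheses of \cite{Lie2021}. Your route is self-contained and needs only $\|\tilde b_{\theta+he_i}-\tilde b_\theta\|_\infty=O(|h|)$ plus pointwise convergence of the difference quotient; dominated convergence does the rest. That bound does require the partial derivatives to be bounded uniformly for parameters near $\theta$, slightly more than the literal wording of \Cref{cond:differentiable}. Your route also makes visible that \Cref{cond:observable} is used only at the base point and only for small $|h|$.

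\emph{Two small repairs.} First, in Step 3, with $a=M^{u_h}_\tau-\frac12\langle M^{u_h},M^{u_h}\rangle_\tau$ you have $e^{|a|}\le e^{|M^{u_h}_\tau|}\exp(\frac12\langle M^{u_h},M^{u_h}\rangle_\tau)$. The second factor is at most $\exp(ch^2\tau)$ and must be included in the H\"older split; it is harmless by the same exponential-moment bound. Second, your fallback for unbounded $f$ is unnecessary. The second and third H\"older factors have moments of every order, uniformly for small $|h|$, so the split goes through whenever $\phi_\tau\in L^{1+\varepsilon}(\tilde{\mathbb{P}}^x_\theta)$ for some $\varepsilon>0$. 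The same H\"older bound also gives integrability of $\phi_\tau Z^h_\tau$ in Step 1.
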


With \Cref{prop:gradobs}, we can write the gradient of the goal-oriented losses \eqref{eq:golossf} and \eqref{eq:golossr} in \Cref{prop:gradgoloss}, whose proof is given in \Cref{supp:proppf}.

\begin{proposition}[Gradient of goal-oriented loss functions]
\label{prop:gradgoloss}
Assume Conditions \ref{cond:elliptic}, \ref{cond:differentiable}, and \ref{cond:domfcn} hold. Let $J_\theta: \mathbb{R}^m \to \mathbb{R}^{m \times d}$ be the Jacobian matrix $J_\theta = \nabla_\theta\tilde{b}_\theta$; i.e., for $y \in \mathbb{R}^m$ and $\theta' \in \Theta$,
\begin{equation}
\label{eq:jacobian}
J_{\theta'}(y) = \begin{bmatrix}
    \frac{\partial (\tilde{b}_\theta)_1}{\partial \theta_1}(y) \cdots \frac{\partial (\tilde{b}_\theta)_1}{\partial \theta_d}(y) \\
    \vdots \quad \ddots \quad \vdots \\
    \frac{\partial (\tilde{b}_\theta)_m}{\partial \theta_1}(y) \cdots \frac{\partial (\tilde{b}_\theta)_m}{\partial \theta_d}(y)
    \end{bmatrix}_{\theta=\theta'}\, .
\end{equation}

The gradient of the forward goal-oriented loss \eqref{eq:golossf} with respect to $\theta$ is
\begin{equation}
\label{eq:golossf_grad}
        \nabla_\theta \mathcal{L}_T^\textup{GO} (\theta) = \frac{T}{2} \Big( \mathcal{H}(\mathbb{P}^x || \tilde{\mathbb{P}}^x_\theta) \nabla_\theta \tilde{\mathbb{E}}^x_{\theta,[0,T]}[\phi_T^2] + \big(\mathcal{M} + \tilde{\mathbb{E}}^x_{\theta,[0,T]}[\phi_T^2] \big) \nabla_\theta \mathcal{H}(\mathbb{P}^x || \tilde{\mathbb{P}}^x_\theta) \Big) \, ,
\end{equation}       
where the gradient of the forward relative entropy rate is
\begin{equation}
    \label{eq:rerf_grad}
    \nabla_\theta \mathcal{H}(\mathbb{P}^x || \tilde{\mathbb{P}}^x_\theta) = \mathbb{E}_\pi \Big[ J_\theta^\top \Sigma^+ (\tilde{b}_\theta - b)  \Big] \, .
\end{equation}
The gradient of the reverse goal-oriented loss \eqref{eq:golossr} with respect to $\theta$ is 
\begin{equation}
        \label{eq:golossr_grad}
        \nabla_\theta \reverse{\mathcal{L}}_T^\textup{GO} (\theta) = \frac{T}{2} \Big(  \mathcal{H}(\tilde{\mathbb{P}}^x_\theta || \mathbb{P}^x) \nabla_\theta \tilde{\mathbb{E}}^x_{\theta,[0,T]}[\phi_T^2] + \big(\mathcal{M} + \tilde{\mathbb{E}}^x_{\theta,[0,T]}[\phi_T^2] \big) \nabla_\theta \mathcal{H}(\tilde{\mathbb{P}}^x_\theta || \mathbb{P}^x) \Big) ,
\end{equation}
where the gradient of the reverse relative entropy rate is
\begin{equation}
\begin{aligned}
    \label{eq:rerr_grad}
    \nabla_\theta \mathcal{H}( \tilde{\mathbb{P}}^x_\theta || \mathbb{P}^x) = 
    \frac{1}{2} \mathbb{E}_{\tilde{\pi}_\theta} \Big[ || \sigma^+(b - \tilde{b}_\theta) ||_2^2 \ \nabla_\theta \log \tilde{\pi}_\theta + 2 J_\theta^\top \Sigma^+(\tilde{b}_\theta - b) \Big] \, . 
\end{aligned}
\end{equation}

The gradient of the second moment of the path-space observable in both \eqref{eq:golossf_grad} and \eqref{eq:golossr_grad} is
    \begin{equation}
        \label{eq:2momentgrad}
        \nabla_\theta \tilde{\mathbb{E}}^x_{\theta,[0,T]} [\phi_T^2] = \tilde{\mathbb{E}}^x_{\theta,[0,T]}[\phi_T^2 M_T^{J_\theta}] \, ,
    \end{equation}
where $(M_t^{J_\theta})_{t \geq 0}$ with respect to a stochastic process $(Y_t)_{t \geq 0}$ is the $d$-dimensional local martingale
    \begin{equation}
        \label{eq:dmartingale}
        M^{J_\theta}_t(Y) \coloneqq \int_0^t (\sigma^+J_\theta)(Y_s)^\top \textup{d}W_s \, .
    \end{equation}

\end{proposition}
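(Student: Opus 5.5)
The plan is to reduce everything to the product rule applied to the two factors in \eqref{eq:golossf} and \eqref{eq:golossr}. Then I would compute the gradient of each factor separately: the second moment $\tilde{\mathbb{E}}^x_{\theta,[0,T]}[\phi_T^2]$ and the forward or reverse relative entropy rate. Since $T$ and $\mathcal{M}$ do not depend on $\theta$, differentiating $\theta\mapsto T\big(\mathcal{M}+\tilde{\mathbb{E}}^x_{\theta,[0,T]}[\phi_T^2]\big)\mathcal{H}(\cdot\|\cdot)$ gives the two-term structure of \eqref{eq:golossf_grad} and \eqref{eq:golossr_grad}. This holds once both factors are shown to be differentiable.

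The overall multiplicative constant depends on the normalisation convention for $\mathcal{H}$. I would track it consistently with \Cref{def:rer}, where $\mathcal{H}$ carries the factor $\frac12$. The same convention must be used in \eqref{eq:rerf_grad} and \eqref{eq:rerr_grad}, so that the $\frac12$ from the squared norm is absorbed either in the prefactor or in the derivative of $\mathcal{H}$, but not in both.

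\textbf{Second moment.} For \eqref{eq:2momentgrad} I would apply \Cref{prop:gradobs} componentwise to the functional $\phi_T^2$ in place of $\phi_\tau$, with the deterministic stopping time $\tau\equiv T$. Then \Cref{cond:observable} is trivially satisfied, since $\tilde{\mathbb{E}}^x_\theta[e^{\lambda_0 T}]=e^{\lambda_0 T}<\infty$. \Cref{cond:elliptic} and \Cref{cond:differentiable} are assumed. Stacking the $d$ partial derivatives $\partial_{\theta_i}\tilde{b}_\theta$ as the columns of $J_\theta$ turns the scalar martingales $M^{\partial_{\theta_i}\tilde{b}_\theta}_T$ into the $d$-vector $M^{J_\theta}_T$ of \eqref{eq:dmartingale}.

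One point needs care: \Cref{prop:gradobs} is stated for $\phi_\tau$ of integral form. I would either check that its proof in \Cref{supp:gradobspf} only uses square-integrability of the functional, together with the Girsanov density \eqref{eq:rn_f} and bounded $\sigma^+\partial_{\theta_i}\tilde{b}_\theta$, or verify that $\phi_T^2\in L^2(\tilde{\mathbb{P}}^x_{\theta,[0,T]})$, which holds for instance when $f$ is bounded.

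\textbf{Forward rate.} I would write $\mathcal{H}(\mathbb{P}^x\|\tilde{\mathbb{P}}^x_\theta)=\frac12\int \|b-\tilde{b}_\theta\|^2_{\Sigma^+}\,\pi(\mathrm{d}y)$. Here $\pi$ does not depend on $\theta$, so only the integrand varies. Its $\theta$-derivative is $J_\theta^\top\Sigma^+(\tilde{b}_\theta-b)$. To differentiate under the integral, I would use the mean value theorem in $\theta$ and dominated convergence. The integrand derivative is bounded by $\alpha^{-2}\|J_\theta\|_F\,\|\tilde{b}_\theta-b\|$, using \Cref{cond:elliptic}, and this is at most $\alpha^{-2}\psi_2\cdot\sup\|b-\tilde{b}_\theta\|$ by \Cref{cond:domfcn}. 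This gives \eqref{eq:rerf_grad}.

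\textbf{Reverse rate.} Here $\mathcal{H}(\tilde{\mathbb{P}}^x_\theta\|\mathbb{P}^x)=\frac12\int\|\sigma^+(b-\tilde{b}_\theta)\|^2\,\tilde{\pi}_\theta(y)\,\mathrm{d}y$, and $\theta$ enters both the integrand and the density. The product rule under the integral gives two terms:
\begin{itemize}
\item $\frac12\int\|\sigma^+(b-\tilde{b}_\theta)\|^2\nabla_\theta\tilde{\pi}_\theta\,\mathrm{d}y$, which I rewrite with $\nabla_\theta\tilde{\pi}_\theta=\tilde{\pi}_\theta\nabla_\theta\log\tilde{\pi}_\theta$ as an $\mathbb{E}_{\tilde{\pi}_\theta}$-expectation;
\item $\int J_\theta^\top\Sigma^+(\tilde{b}_\theta-b)\,\tilde{\pi}_\theta\,\mathrm{d}y$.
\end{itemize}
Together these give \eqref{eq:rerr_grad}. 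For the first term, the domination is $\sup_\theta\|u_\theta\|^2\alpha^{-2}\psi_3\in L^1$. For the second, $\tilde{\pi}_\theta\le$ a bounded density is not available directly, so I would bound $\|J_\theta\|\tilde{\pi}_\theta$ using $\psi_2$ and the boundedness of $J_\theta$ from \Cref{cond:differentiable}.

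\textbf{Main obstacle.} I expect the hardest step to be the reverse term. It requires differentiability of $\theta\mapsto\tilde{\pi}_\theta$ and a legitimate exchange of derivative and integral. \Cref{cond:domfcn} is exactly what makes dominated convergence applicable, so I would lean on it rather than attempt a PDE perturbation argument for the stationary Fokker--Planck equation \eqref{eq:fpk}. A secondary concern is the uniformity in $\theta$ needed for the mean value argument. For this I would restrict to a compact neighbourhood of the given $\theta$ inside the open set $\Theta$.
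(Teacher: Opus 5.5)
Your plan coincides with the paper's proof. It uses the product rule on the two factors, and \Cref{prop:gradobs} applied componentwise to $\phi_T^2$ with the bounded stopping time $T$, stacking the $\partial_{\theta_i}\tilde{b}_\theta$ into $M^{J_\theta}_T$. It then differentiates under $\mathbb{E}_\pi$ for the forward rate and uses the score-function identity of \Cref{lem:scorefunc} for the reverse rate, with the interchanges justified by \Cref{cond:elliptic,cond:domfcn}.

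Your suspicion about the constant is justified. With $\mathcal{H}$ as in \Cref{def:rer} and \eqref{eq:rerf_grad}, \eqref{eq:rerr_grad} as stated, the product rule gives the prefactor $T$, not $\frac{T}{2}$, in \eqref{eq:golossf_grad} and \eqref{eq:golossr_grad}, and the paper's bare appeal to the chain rule does not catch this.

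On domination, a Lebesgue-integrable $\psi_2$ does not dominate the $\pi(\mathrm{d}y)$-integral unless $\pi$ is bounded. Instead, use the local uniform bounds on $J_{\theta'}$ and $b-\tilde{b}_{\theta'}$ over a compact convex $K\ni\theta$. These give a constant bound for the forward rate. Combined with $\sup_{\theta'\in K}\tilde{\pi}_{\theta'}\le\tilde{\pi}_\theta+\mathrm{diam}(K)\,\psi_3$, they also give an integrable bound for the $J_\theta$-term of the reverse rate.
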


A gradient estimator using the closed-form expression offers several advantages. By \Cref{prop:gradgoloss}, the gradient of the goal-oriented loss can be written exactly as a sum of products of expectations, allowing one to use the same set of sample paths and states to calculate Monte Carlo estimates of both the loss and its gradient. The variance of the closed-form gradient estimator is based only on finite-sample Monte Carlo error, unlike finite-difference gradient estimators whose variance is often amplified by small step sizes. Moreover, one can apply variance reduction techniques for estimating expectations, such as importance sampling or control variates, to reduce the variance of the closed-form gradient estimator. The gradient calculation is linear in cost to the number of parameters and easily parallelizable across parameter dimensions.

\subsection{Learning with inexact losses}
\label{sec:inexactloss}

In practice, the goal-oriented losses, e.g., \eqref{eq:golossf} and \eqref{eq:golossr}, and their gradients, \eqref{eq:golossf_grad} and \eqref{eq:golossr_grad}, are computed using empirical estimators based on the numerical simulation of SDEs. One can generate a numerical solution to \eqref{eq:sdefull} by the Euler-Maruyama scheme, for small time step $\delta > 0$ and $\xi_t \sim \mathcal{N}(0, \mathbb{I}_n)$, given by
\begin{equation}
    \label{eq:sdediscrete}
    X^\delta_{t+1} = X^\delta_{t} + \delta \, b(X^\delta_{t}) + \sqrt{\delta} \sigma(X^\delta_{t}) \xi_t, \quad X_0 = x \, .
\end{equation}
Analogously, the discrete-time version of \eqref{eq:sdesurrogate} is the following, where $\xi'_t \sim \mathcal{N}(0, \mathbb{I}_n)$: 
\begin{equation}
        \label{eq:sdesurrogate_discrete}
        \tilde{X}^\delta_{t+1} = \tilde{X}^\delta_{t} + \delta \, \tilde{b}_\theta(\tilde{X}^\delta_{t}) + \sqrt{\delta} \sigma(\tilde{X}^\delta_{t}) \xi'_t, \quad \tilde{X}_0 = x \, .
    \end{equation}
The numerical simulation of \eqref{eq:sdesurrogate_discrete} is used to generate a set of paths $\mathcal{J} = \{\tilde{X}^{\delta,(i)}:i=1,...,N_{\text{path}}\}$, where each path $\tilde{X}^{\delta(i)} = \{\tilde{X}^{\delta,(i)}_t \}_{t=1}^T$ is approximately sampled from $\tilde{\mathbb{P}}^x_{\theta,[0,T]}$. Given training data consisting of a collection of states $\mathcal{D} =\{x_j:j=1,...,N_{\text{samp}}\}$, the inexact goal-oriented loss is defined as 
\begin{equation}
    \label{eq:golossf_discrete}
    \begin{aligned}
    \hat{\mathcal{L}}_T^{\textup{GO}}(\theta; \mathcal{J}, \mathcal{D}) & = \frac{T}{2} \Big(\mathcal{M} + \frac{1}{N_{\text{path}}} \sum_{i=1}^{N_{\text{path}}} \phi_\tau^2 
    \big( \tilde{X}^{\delta,(i)} \big) \Big) \\
    & \quad \times \Bigg( \frac{1}{N_{\text{samp}}} \sum_{j=1}^{N_{\text{samp}}} \Big\| \sigma^+\big( b(x_j) - \tilde{b}_\theta(x_j) \big) \Big\|^2_2 \Bigg) \, .
    \end{aligned}
    \end{equation}
The inexact loss \eqref{eq:golossf_discrete} corresponds to an empirical estimator of either the forward or reverse goal-oriented loss depending on the empirical distribution of $\mathcal{D}$. When elements of $\mathcal{D}$ are approximately distributed according to $\pi$, \eqref{eq:golossf_discrete} is an estimator of $\mathcal{L}_T^\text{GO}$; when the elements are approximately distributed according to $\tilde{\pi}_\theta$, \eqref{eq:golossf_discrete} is an estimator of $\reverse{\mathcal{L}}_T^\text{GO}$. Empirical estimators of the loss functions reviewed in \Cref{sec:review} can be derived in a similar fashion. We identify three sources of error between a loss function for learning SDEs and its empirical estimator.

\textit{Finite-sample error.} Expectations within the loss function are estimated using Monte Carlo, where the variance of the estimator of each expectation scales with the sample size $N$ by $\mathcal{O}(N^{-1})$. In practice, the number of drift evaluations $N_\text{samp}$ may be fixed based on an existing dataset, and the number of paths $N_\text{path}$ which can be reasonably generated from the simulation of \eqref{eq:sdesurrogate_discrete} is limited by computational constraints, leading to greater estimator variance. 

\textit{Discretization error}. Due to time discretization, the Markov process $\tilde{X}^\delta$ from \eqref{eq:sdesurrogate_discrete} does not share the same law as the process $\tilde{X}$ from \eqref{eq:sdesurrogate}. Prior works have quantified the bias in the law of discretized SDEs under various assumptions on the drift and diffusion coefficients \cite{Zhang2016, Shao2018, Yu2024}, as well as the bias in the invariant distribution under strong log-concave conditions \cite{Dalalyan2017} and Poincar\'e or log-Sobolev conditions \cite{Vempala2019} on the invariant distribution.

\textit{Distribution error.} In general, the dataset $\mathcal{D}$ is expected to be an i.i.d. sample from the invariant distribution $\pi$ or $\tilde{\pi}_\theta$. In practice, these distributions are intractable to sample analytically, and Markov chain Monte Carlo (MCMC) methods may yield biased samples due to metastable effects -- where the path remains confined to one high probability region \cite{Roberts1996, Henin2022} -- and limitations to the simulation timescale, imposed by both the cost of drift evaluations and time step constraints for numerical stability \cite{Henin2022}. To address metastability in molecular dynamics, various techniques have been proposed, including biasing force methods \cite{Yang2019, Henin2022} and model order reduction via collective variables \cite{Mori2020, Frohlking2024}. Distribution error also occurs when the dataset $\mathcal{D}$ is assembled from heterogeneous sources; for example, training data for machine learning force fields are commonly a combination of short \textit{ab initio} simulations with different initial conditions or open-access quantum chemistry databases such as \cite{Jain2013, Ramakrishnan2014}. 

Since the errors from finite sample sizes and time discretization of the SDE are well understood, we focus the numerical experiment in \Cref{ex:mb} on studying the impact of distribution error on model learning outcomes.
\section{Numerical studies}
\label{sec:numerics}

In the following numerical experiments, we evaluate the performance of goal-oriented learning for overdamped Langevin systems of the form
\begin{equation}
    \label{eq:langevin}
    \textup{d}X_t = -\nabla_x V(X_t) \, \textup{d}t + \sqrt{2 \beta^{-1}} \textup{d}W_t, \qquad X_0 = x \, ,
\end{equation}
described by a confining potential function $V: \mathbb{R}^m \to \mathbb{R}$ \cite[Def.~4.2]{Pavliotis2014} and an inverse temperature parameter $\beta \in \mathbb{R}$. The invariant distribution of \eqref{eq:langevin} is the Gibbs-Boltzmann distribution $\pi\propto\exp(-\beta V)$ \cite{Pavliotis2014}. A family of model potentials parameterized by $\Theta \subseteq \mathbb{R}^d$,
\begin{align*}
    \mathcal{V} = \{ V_\theta: \mathbb{R}^m \to \mathbb{R} \ | \ &  \theta \in \Theta, 
    V_\theta \text{ is confining, $V_\theta$ and $\nabla_x V_\theta$ are $C^1$ in $\theta$}, \\
    & V_\theta \text{ admits invariant distribution } \pi_\theta \propto \exp(-\beta V_\theta) \} \, ,
\end{align*}
is specified such that SDEs based on $V$ and any model $V_\theta \in \mathcal{V}$ meet \Cref{cond:driftdiff,cond:elliptic}. The corresponding class of drift functions is  $\mathcal{B} = \{b_\theta=-\nabla_x V_\theta \ | \ V_\theta \in \mathcal{V}\}$.

While \Cref{cond:driftdiff,cond:elliptic} can also hold for underdamped Langevin systems---see \Cref{remark_underdamped_Langevin_systems}---we focus on overdamped Langevin systems in this work, for two reasons. First, underdamped Langevin systems sometimes require a careful choice of numerical integration schemes. By choosing overdamped Langevin systems, we can use simple and commonly used schemes, and thereby focus our analysis of computational performance on the methodology that we propose in \Cref{sec:golearning}. Second, overdamped Langevin systems are used as models in more scientific fields than underdamped Langevin systems, and offer a sufficiently rich class of systems to study.

\begin{remark}
The surrogate model learning examples in \Cref{ex:gmm,ex:mb} only partially satisfy conditions specified in \Cref{sec:gradloss} for the gradient formula in \Cref{prop:gradgoloss}. The examples consider mean first transition time observables over a time interval $[0,T]$, so that the stopping time is bounded and meets \Cref{cond:observable}. The partial derivatives $\partial_{\theta_i} (-\nabla_x V_\theta)$ are not globally bounded, as assumed in \Cref{cond:differentiable}; in practice, however, the confining nature of the potential ensures that the simulated paths remain within a subset of the state space where the partial derivatives are bounded. The Gaussian mixture potential of \Cref{ex:gmm} and neural network potential of \Cref{ex:mb} may not have the appropriate Lebesgue-integrable dominating functions required in \Cref{cond:domfcn}. It should be noted that \Cref{cond:differentiable,cond:observable,cond:domfcn} are sufficient but not necessary conditions for \Cref{prop:gradgoloss}, and the numerical results show that the gradient formulas are effective in minimizing the goal-oriented loss over gradient descent under the partial conditions.
\end{remark}

\subsection{Asymmetric double well potential}
\label{ex:dw}

The first example compares the goal-oriented error bound of \Cref{lem:2momentbd} with other information divergences. Consider the double well potential model on $\mathbb{R}$ of the following form, adopted from \cite{Opper2017}:
\begin{equation}
    \label{eq:dw}
    V_\theta(x) = x^4 - 2x^2 + \theta \ \Big( \frac{1}{3} x^3 + x^2 + x \Big) \, .
\end{equation}
The parameter $\theta \in [0,1]$ controls the degree of asymmetry between the two wells. For this example, we let $\beta=1$ and $x=-1$ in \eqref{eq:langevin}. The parametric variation in $V_\theta$ and $\pi_\theta \propto \exp(-\beta V_\theta)$ is illustrated in the left and center panels of \Cref{fig:doublewell}.
    \begin{figure}[]
    \centering
    \includegraphics[width=\textwidth]{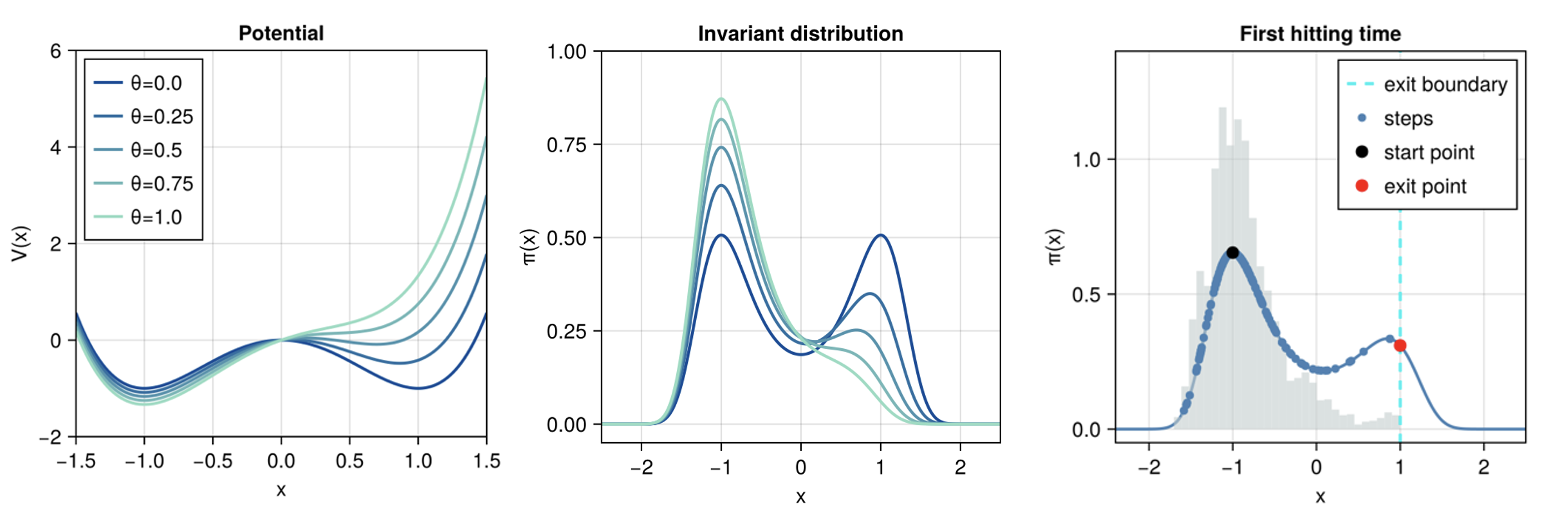}
    \vspace{-5mm}
    \caption{Instances of the double well potential in \eqref{eq:dw} (left) and the associated invariant distribution (center) as it varies with $\theta$. The first exit time from the domain $(\infty, 1]$ with initial condition $x=-1$ and the histogram of states visited by a single sample path (right).}
    \label{fig:doublewell}
    \end{figure}
    
We consider the well-specified setting where the reference potential is of the same form as the model, corresponding to $\theta^* = 0.5$. The path-space observable is the mean first exit time \eqref{eq:firstexittime} from the domain $B = (-\infty, 1]$, as illustrated in the right panel of \Cref{fig:doublewell}. We use a finite element approximation of the solution to the Feynman-Kac PDE in \eqref{eq:feynmankac} to obtain moments of the first exit time, as given in \eqref{eq:fkmoments}. Note that the first exit time is unbounded, meaning that it does not satisfy conditions for the bound in \eqref{eq:esssupbd} to hold. However, the SDE meets conditions for the first exit time to have finite second moments under $\mathbb{P}^x, \tilde{\mathbb{P}}^x_\theta$ and thus \Cref{lem:2momentbd} applies; see \Cref{rmk:hittime}. 

In \Cref{fig:dw_error_bounds}, we visualize how various quantities vary with respect to the parameter of the surrogate potential: the forward and reverse KL divergence between the invariant measures of the reference and surrogate SDEs, the forward and reverse KL divergence between their path measures as given in \Cref{def:path_kl}, the forward and reverse goal-oriented (GO) error bound from \eqref{eq:2momentbd}, and the absolute error in the observable $|\mathbb{E}^x[\tau] - \tilde{\mathbb{E}}^x_\theta[\tau]|$. In the well-specified setting, all curves attain their minimum value of zero at the reference parameter, $\theta^*=0.5$. In \Cref{fig:dw_error_bounds}, the KL divergence terms fall below the error in the observable for all values of $\theta$, indicating that none of these quantities provide an upper bound for the error. Meanwhile, the GO error bound satisfies an upper bound across the parameter domain, which is consistent with the theoretical result in \Cref{lem:2momentbd}.
    \begin{figure}[ht]
    \centering
    \includegraphics[width=0.45\textwidth]{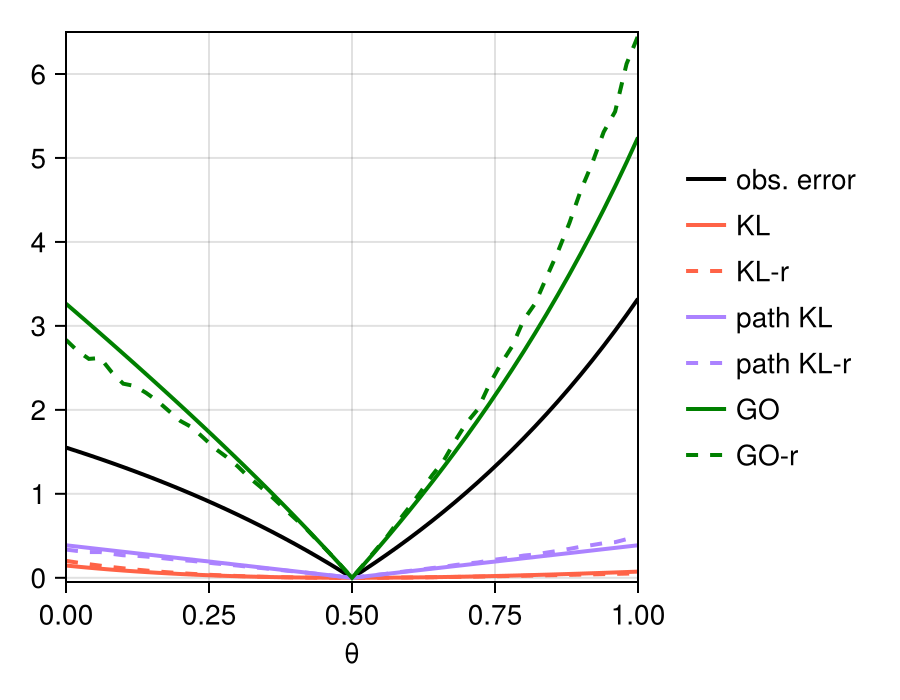}
    \caption{The absolute error in the mean first exit time (black) of the double well system in \eqref{eq:dw} compared to various forward (solid) and reverse (dashed) quantities: the KL divergence in invariant measures (KL and KL-r), the KL divergence in path measures (path KL and path KL-r), and the goal-oriented error bound (GO and GO-r). The reference model is defined by $\theta^*=0.5$.}
    \label{fig:dw_error_bounds}
    \end{figure}

\subsection{Confined Gaussian mixture model}
\label{ex:gmm}

Now consider a double well potential that is defined not as a polynomial function as in \Cref{ex:dw}, but as a mixture of two Gaussian distributions. This form serves as an example in which the parameterization of the model is nonlinear, multidimensional, and localizes features of the potential such as the width and location of energy wells. The potential is the Gaussian mixture model
\begin{equation}
    \label{eq:gmm}
    V_\theta(x) = \sum_{i=1}^2 w_i \exp \Bigg( - \frac{(x - c_i)^2}{2 \nu_i^2} \Bigg) + A(x-\bar{c})^4 \, ,
\end{equation}
where $\theta \coloneqq ( w_1,w_2,c_1,c_2,\log(\nu_1), \log(\nu_2) ) \in \mathbb{R}^6$ consist of the centers $c_i$, standard deviations $\nu_i$, and weights $w_i$ of the Gaussian mixture components $i=1,2$. We let $\bar{c}$ be the arithmetic average of the centers and $A =0.2$. The quartic polynomial in \eqref{eq:gmm} ensures that the potential is sufficiently confining for the invariant distribution to exist. The reference and surrogate potentials belong to the same potential family defined by \eqref{eq:gmm}, with $\theta^* = (1,1,-1,1,\log(0.2),\log(0.2))$. The path-space observable is the mean truncated first exit time in \eqref{eq:firstexittime} with $B=(-\infty,1]$, defined as $\tau(Y) \wedge T$. We set $T=10^4$ to be sufficiently large relative to the reference value of the mean first exit time, $\mathbb{E}^x[\tau]=1.853 \times 10^1$. 

In this example, we assess loss functions over iterations of gradient descent according to the error in the observable. We highlight that the true error will in general not be available in practical applications, as the reference value of the observable is typically unknown. The model is initialized with $\theta_0= ( 1,1,-1,-1,\log(0.75),\log(0.75) )$, shown in panel (a) of \Cref{fig:gmm_init_loss}. We apply the adaptive gradient (AdaGrad) algorithm \cite{Duchi2011} with a learning rate $\gamma=0.2$ to minimize the following four loss functions, 
\medskip
\begin{itemize}[leftmargin=*]
    \item Forward RER: \tabto{3.5cm} $\mathcal{L}^\text{RER}(\theta) \coloneqq \beta \mathbb{E}_{\pi_{\theta^*}}[ || \nabla_x V_{\theta^*} - \nabla_x V_\theta ||^2_2 ]$
    \smallskip
    \item Reverse RER: \tabto{3.5cm} $\reverse{\mathcal{L}}^\text{RER}(\theta) \coloneqq \beta \mathbb{E}_{\pi_\theta}[ || \nabla_x V_{\theta^*} - \nabla_x V_\theta ||^2_2 ]$
    \smallskip
    \item Forward GO loss: \tabto{3.5cm} $\mathcal{L}_T^{\textup{GO}}(\theta) \coloneqq T \beta \big( \mathcal{M} + \mathbb{E}^x_{\theta,[0,T]}[\phi^2_T]\big) \mathbb{E}_{\pi_{\theta^*}} \big[ ||\nabla_x V_{\theta^*} - \nabla_x V_\theta ||^2_2 \big]$
    \smallskip
    \item Reverse GO loss: \tabto{3.5cm} $\reverse{\mathcal{L}}_T^{\textup{GO}}(\theta) \coloneqq T \beta \big( \mathcal{M} + \mathbb{E}^x_{\theta,[0,T]}[\phi^2_T]\big) \mathbb{E}_{\pi_\theta} \big[ ||\nabla_x V_{\theta^*} - \nabla_x V_\theta ||^2_2 \big]$
\end{itemize}
\medskip
where $\pi_{\theta^*}, \pi_\theta$ denote the invariant distributions and $V_{\theta^*}, V_{\theta}$ denote the potential energy functions corresponding to the reference and surrogate SDEs, respectively. For purposes of comparison, $\mathcal{M}$ is set to the ground truth second moment of the observable. Since the AdaGrad algorithm adaptively normalizes gradients of the loss function, the parameter update at each optimization step is of the same order of magnitude for the four loss functions.

Empirical estimators of all the forward losses use a fixed dataset $\mathcal{D}$ of $N_\text{samp}=10^3$ samples from $\pi_{\theta^*}$, while empirical estimators of the reverse losses use $N_\text{samp}=10^3$ samples drawn from $\pi_\theta$ at each iterate of $\theta$. Sampling is performed via Langevin Monte Carlo using a $10^6$-step path of the surrogate SDE \eqref{eq:sdesurrogate_discrete} with step size $\delta=10^{-3}$, from which a uniformly random subset is selected in order to reduce temporal correlation of the data. Empirical estimators of the gradient of the GO losses use $N_\text{path}=10^3$ Monte Carlo path samples to compute the second moment of the first exit time under $\tilde{\mathbb{P}}_\theta$, as well as its Fr\'echet derivative using the formula \eqref{eq:2momentgrad}. 

    \begin{figure}[ht]
    \centering
    \includegraphics[width=\textwidth]{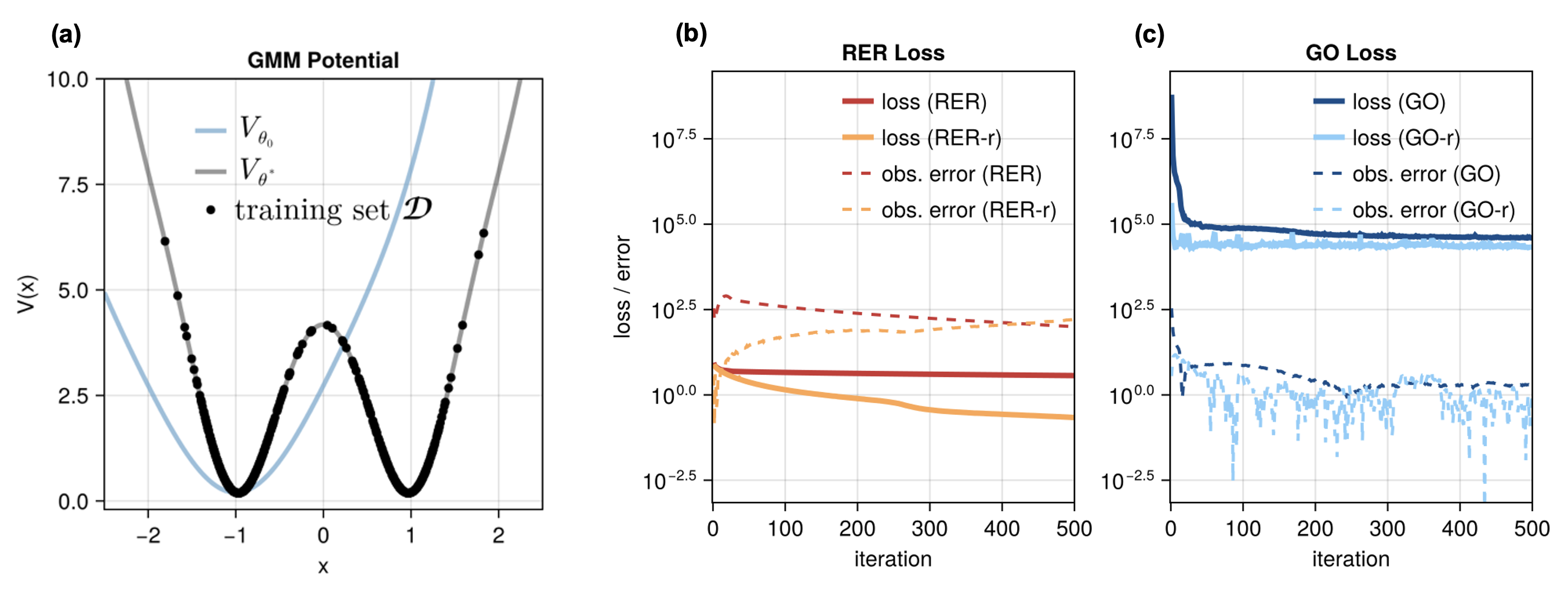}
    \vspace{-5mm}
    \caption{Reference and initial potentials, $V_{\theta^*}$ and $V_{\theta_0}$, of the form \eqref{eq:gmm} and dataset $\mathcal{D}$ from the reference process (a). Error in the observable (dashed) compared to the loss (solid) over 500 learning iterations with four objectives: the forward relative entropy rate (RER) and reverse relative entropy rate (RER-r) in (b), and the forward goal-oriented loss (GO) and reverse goal-oriented loss (GO-r) in (c).}
    \label{fig:gmm_init_loss}
    \end{figure}

While gradient descent is performed with inexact gradients, the underlying value of the loss and the error in the observable are computed deterministically in order to remove the effect of finite-sample or time-discretization error in the comparison of the loss functions. Expectations over invariant measures are computed with 200 Gauss-Legendre quadrature points, while moments of the first exit time are computed using a finite-element approximation of the solution to the Feynman-Kac PDE in \eqref{eq:feynmankac}.

Panels (b) and (c) of \Cref{fig:gmm_init_loss} compare each of the four losses over 500 iterations of gradient descent, along with the error in the observable at each iterate. The values of the GO losses are greater than their corresponding error in the observable for all iterations, providing numerical evidence of the upper bound relationship stated in \Cref{prop:golossineq}. In contrast, the RER losses fall below the error in the observable for nearly all iterations. Training with the GO losses decreases the error in the observable, with the reverse GO loss achieving the lowest error of all the loss functions.

For additional insight into the learning dynamics, one can track the evolution of components of the loss gradient over learning iterations. In the top row of \Cref{fig:gmm_loss_comp}, we decompose the GO loss into two terms which vary with $\theta$: $(\mathcal{M} + \tilde{\mathbb{E}}^x_{\theta,[0,T]}[(\tau \wedge T)^2])$ and the relative entropy rate (RER). For brevity, $\mathcal{H}$ is used as a general notation of RER, which can refer either to the forward case $\mathcal{H}(\mathbb{P}^x||\tilde{\mathbb{P}}^x_\theta)$ or reverse case $\mathcal{H}(\tilde{\mathbb{P}}^x_\theta || \mathbb{P}^x)$. In the forward GO loss, there is a sharp decrease in $\tilde{\mathbb{E}}^x_{\theta,[0,T]}[(\tau \wedge T)^2]$ and little change in the RER, suggesting that the second moment of the observable plays a more important role than the RER in the minimization of the total loss. The RER term of the reverse GO loss is more sensitive to changes in $\theta$, decreasing rapidly at early iterations. 

Similarly, we track the $\ell_2$ norm of the two components of the loss gradient in \Cref{prop:gradgoloss}, $G_1 \coloneqq ||\mathcal{H} \cdot \nabla_\theta \tilde{\mathbb{E}}^x_{\theta,[0,T]}[(\tau \wedge T)^2] ||_2$ and $G_2 \coloneqq ||(\mathcal{M} + \tilde{\mathbb{E}}^x_{\theta,[0,T]}[(\tau \wedge T)^2]) \nabla_\theta \mathcal{H}||_2$. In the bottom row of \Cref{fig:gmm_loss_comp}, we observe that $G_1$ is consistently larger than $G_2$ in forward GO learning, suggesting that $\nabla_\theta \tilde{\mathbb{E}}^x_{\theta,[0,T]}[(\tau \wedge T)^2]$ has a greater effect in gradient descent compared to $\nabla_\theta \mathcal{H}$. In contrast, $G_1$ and $G_2$ are of comparable magnitude in reverse GO learning, indicating that the two components have relatively equal contributions in gradient descent. The greater dependence on $\mathcal{H}$ in the reverse loss is attributed to the fact that reverse RER, which is an expectation with respect to $\tilde{\pi}_\theta$, has greater sensitivity to $\theta$ compared to the forward RER, where the expectation is with respect to a distribution independent of $\theta$. These trends suggest that the second moment of the observable plays a role of equal or greater significance to the RER in steering the learning dynamics.
\begin{figure}[ht]
    \centering
    \vspace{-2.5mm}
    \includegraphics[width=0.85\textwidth]{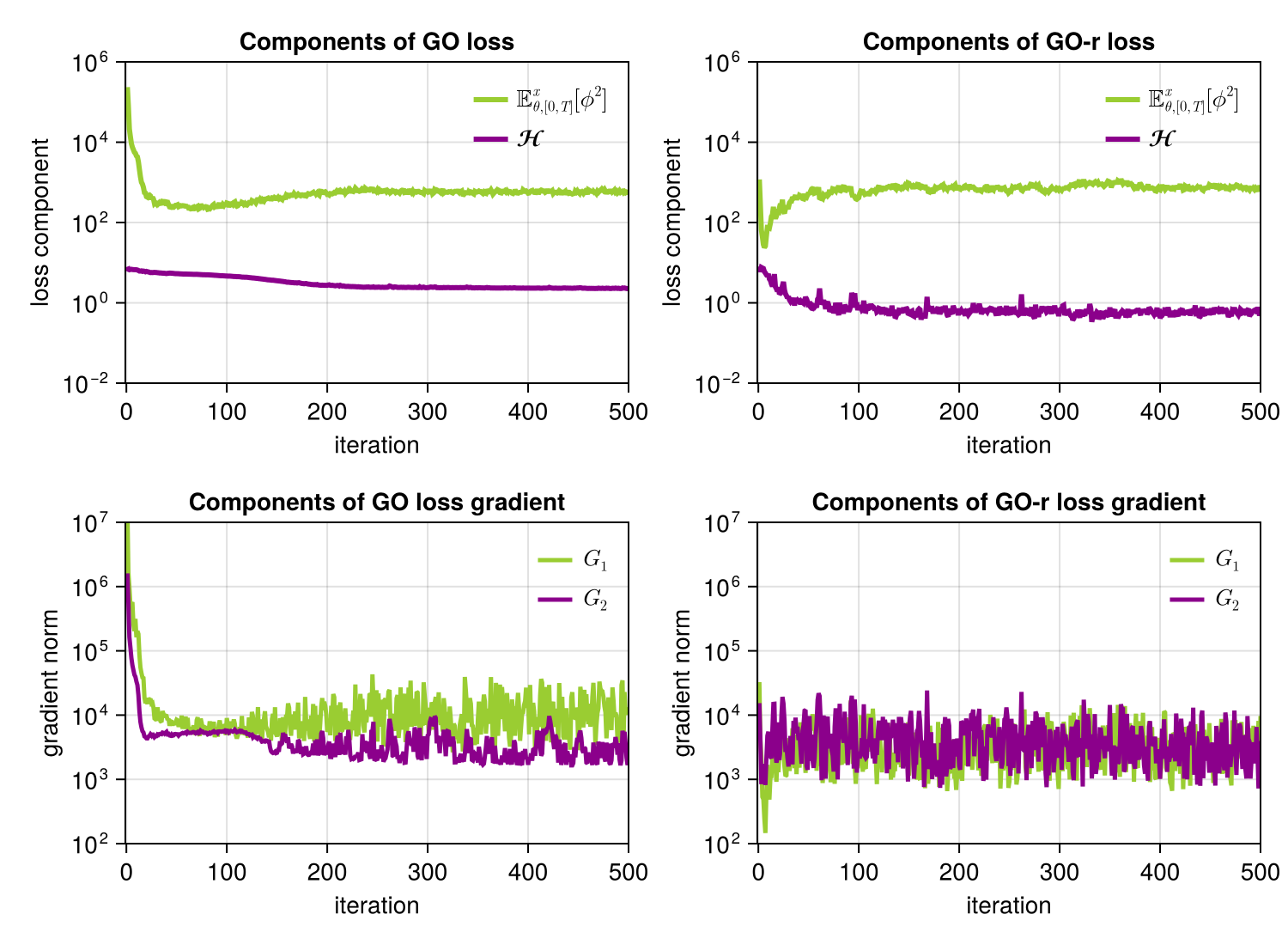}
    \caption{Components of the GO loss and its gradient for the double well system of \eqref{eq:gmm}. Top row: the second moment of the observable $\tilde{\mathbb{E}}^x_{\theta,[0,T]}[(\tau \wedge T)^2]$ and the relative entropy rate $\mathcal{H}$ over gradient descent iterations with the forward GO loss (left) and reverse GO loss (right). Bottom row: the $\ell_2$ norm of terms $G_1$ and $G_2$ of the loss gradient over gradient descent with the forward GO loss (left) and reverse GO loss (right). }
    \label{fig:gmm_loss_comp}
    \vspace{-2.5mm}
\end{figure}

\Cref{fig:gmm_stats} shows the mean and variance of the first exit time computed using iterations of the model generated by the gradient descent scheme. Both GO losses outperform their RER counterparts in terms of capturing the mean and variance of the observable; the observable estimated by learning with the reverse GO loss most quickly converges and oscillates about its reference value. These results suggest that models learned with the GO losses exhibit stronger control over statistics of the observable compared to those with the RER losses.

    \begin{figure}[ht]
    \centering
    \includegraphics[width=0.82\textwidth]{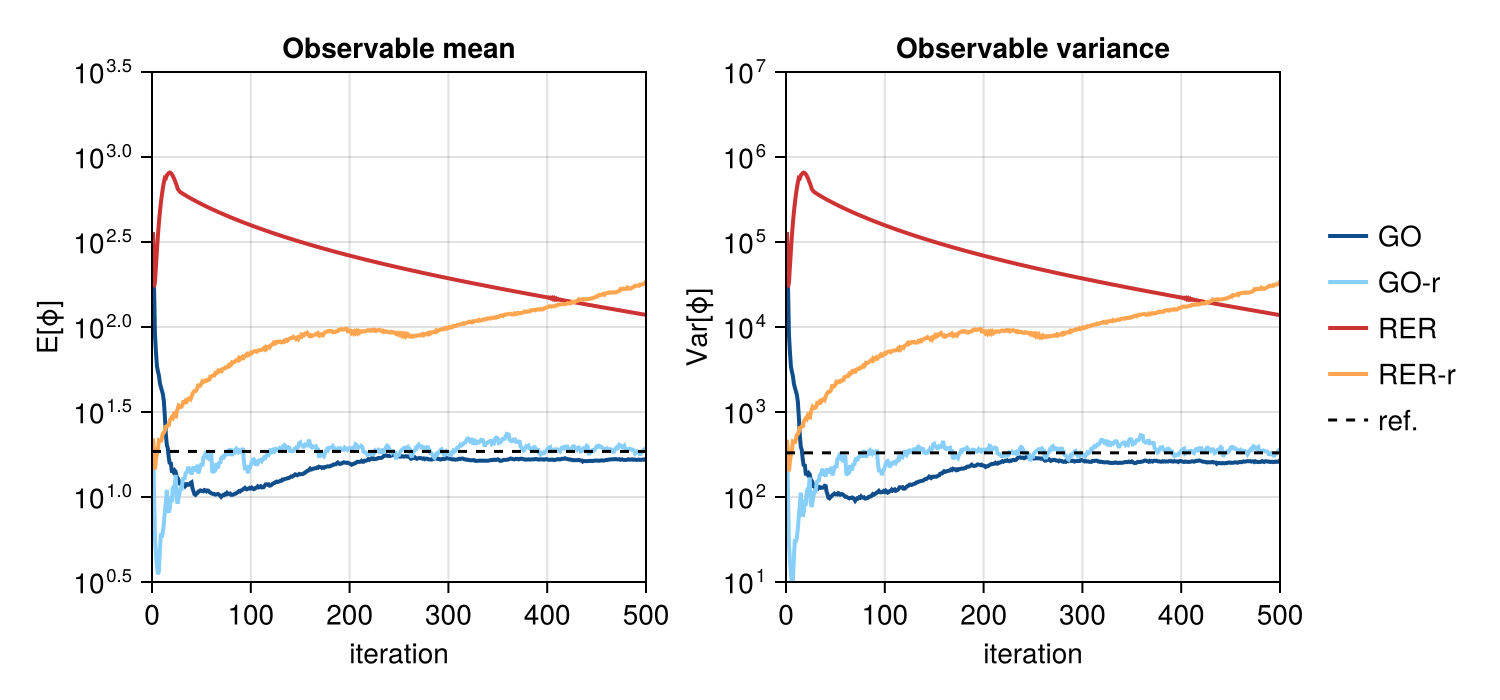}
    \caption{The mean (left) and variance (right) of the first exit time associated with the potential model of the form in \eqref{eq:gmm} over 500 gradient descent iterations with the four loss functions. The reference statistics of the first exit time are shown by the dashed black line.}
    \label{fig:gmm_stats}
    \end{figure}

\subsection{M\"uller-Brown potential}
\label{ex:mb}

In this example, we compare the robustness of loss functions to the level of distribution error in the training data $\mathcal{D}$, as discussed in \Cref{sec:inexactloss}. The reference system is defined by the M\"uller-Brown potential \cite{MullerBrown1979}, illustrated in the left panel of \Cref{fig:mb_hittime}, which is a widely used benchmark system in molecular dynamics. The surrogate system is defined by a feedforward neural network of the potential energy with a single 20-dimensional hidden layer and Tanh activation functions plus a confining quartic term, yielding trainable parameters $\theta \in \mathbb{R}^{501}$. The drift function, equal to the negative gradient of the potential function, is computed using autodifferentiation. Note that unlike the previous examples, the reference and surrogate drift functions belong to different model classes. The path-space observable studied is $\mathbb{E}[\tau \wedge T]$, the mean truncated first hitting time of a path initialized at $x=(-0.55, 0.45)$ to an elliptical domain $B$ enclosing an intermediate energy well of the M\"uller-Brown potential, where $\tau(Y)= \inf \{ t \geq 0: Y_t \in B\}$ and $T=10^4$.
    \begin{figure}[]
    \centering
    \includegraphics[width=\textwidth]{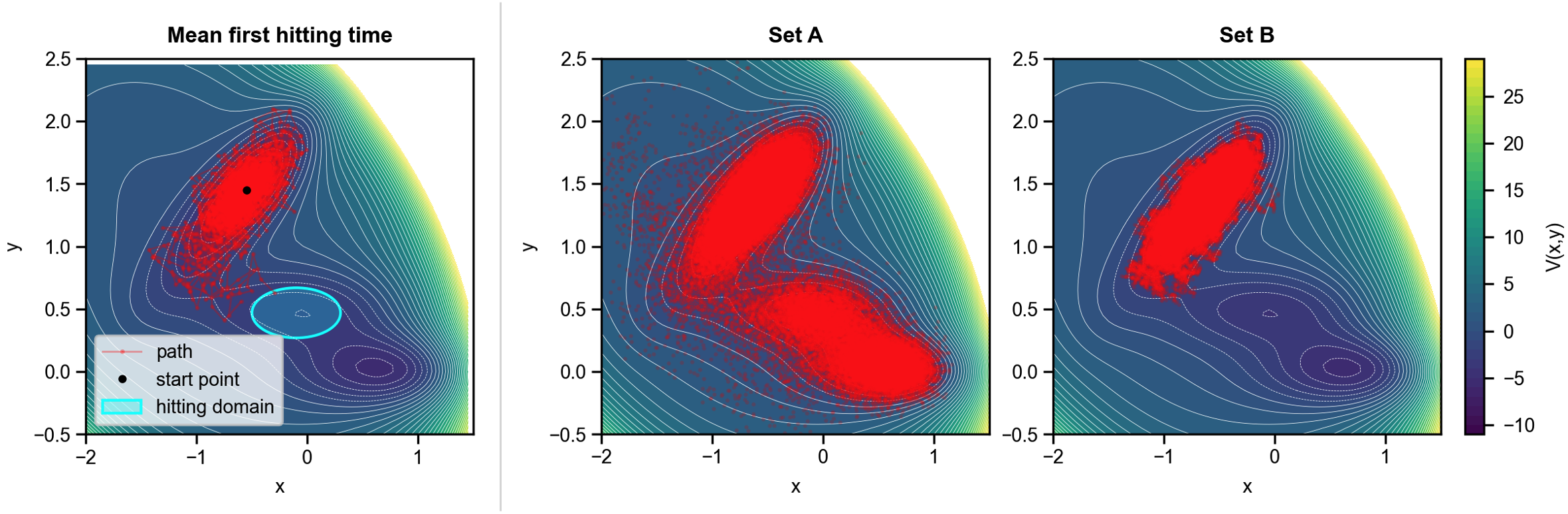}
    \vspace{-5mm}
    \caption{Initial point, hitting domain, and sample hitting path corresponding to the mean first hitting time with the M\"uller-Brown potential (left). Sets A (center) and B (right) from which the training datasets $\mathcal{D}_A$ and $\mathcal{D}_B$ are drawn.}
    \label{fig:mb_hittime}
    \end{figure}
    
To evaluate the effect of distribution error in forward losses, two cases are studied: one in which the distribution error is small, with data distributed approximately i.i.d. to the invariant distribution of the reference potential (from Set A of \Cref{fig:mb_hittime}), and one in which distribution error is high, with data concentrated in one metastable region of the reference potential (from Set B of \Cref{fig:mb_hittime}). Over 10 trials, training sets $\mathcal{D}_A$ and $\mathcal{D}_B$, each with $N_\text{samp}=5 \times 10^3$ points, are drawn as uniform random subsets of Sets A and B, respectively, each of which consist of a total of $10^5$ points. The datasets are then used in evaluating three forward losses:
\medskip
\begin{itemize}[leftmargin=*]
    \item Mean squared error in energy: \tabto{5cm} $\mathcal{L}^{\text{E}}(\theta) \coloneqq \mathbb{E}_{\pi}[| V - \tilde{V}_\theta |^2]$ \smallskip
    \item Mean squared error in forces: \tabto{4.98cm} $\mathcal{L}^{\text{F}}(\theta) \coloneqq \mathbb{E}_\pi[|| \nabla_x V - \nabla_x \tilde{V}_\theta ||^2_2]$ %
    \item Forward GO loss \eqref{eq:golossf}. 
\end{itemize}
\medskip
Mean squared error losses are a common choice of objective function for learning machine learning interatomic potentials in molecular dynamics \cite{Batzner2022, Musaelian2023}. Note that the mean squared error in forces is proportional to the RER loss used in \cite{Harmandaris2016}. In practice, we implement the energy matching (EM) loss $\mathcal{L}^\text{EM}$ from \eqref{eq:emloss} and the force matching (FM) loss $\mathcal{L}^\text{FM}$ from \eqref{eq:fmloss} as empirical estimators of the mean squared error in the potential energy and forces, respectively, and the inexact loss \eqref{eq:golossf_discrete} as an empirical estimator of the forward GO loss. In the inexact GO loss, $N_\text{path} = 100$ Monte Carlo path samples with $\delta=10^{-2}$ are used to calculate the second moment of the observable. Each of the losses are minimized over $10^3$ training epochs using AdaGrad with a learning rate $\gamma=10^{-2}$ and the same initial model parameters.

Panels (a) and (b) of \Cref{fig:mb_stats} show the median and interquartile range (IQR) of the predicted mean truncated first hitting time over 10 trials with each of the two datasets. With both datasets, the EM loss leads to consistently large deviations of the observable from the reference value. The FM loss leads to the largest variance in the predicted observable, indicated by the IQR. The GO loss leads to the closest agreement between the predicted observable and the ground truth, which remains consistent across the final training iterations. As shown in panel (c) of \Cref{fig:mb_stats}, the error in the observable from the EM and FM losses are higher at final training iterations when using data from Set B than from Set A, while the error from the GO loss remains less sensitive to the data distribution. These results suggest that the GO learning objective has the advantage of increased robustness to distribution error. 

    \begin{figure}[ht]
    \centering
    \includegraphics[width=\textwidth]{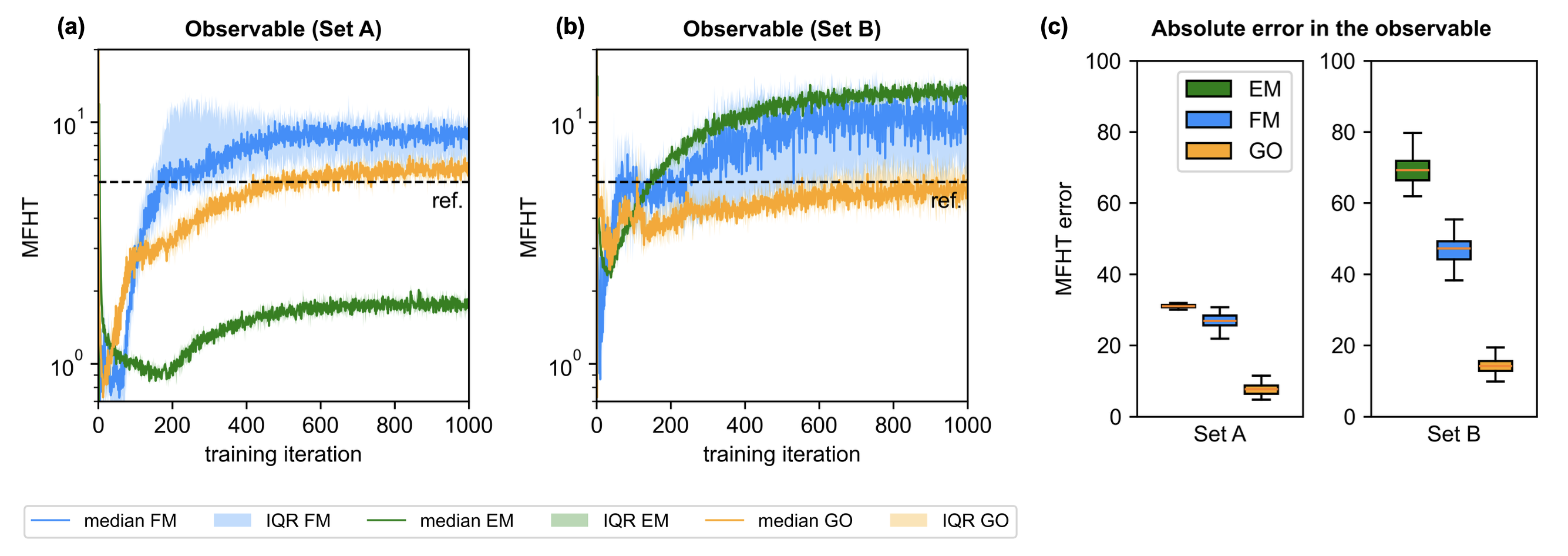}
    \vspace{-2mm}
    \caption{The mean first hitting time (MFHT) for the neural network model of the M\"uller-Brown potential, learned with $\mathcal{D}_A$(a) and with $\mathcal{D}_B$ (b), reported in terms of the median and interquartile range (IQR) over 10 trials. The reference value of the observable is shown in black. The distribution of the error in the observable across the final 100 iterations of all 10 trials, from the model learned with $\mathcal{D}_A$ and $\mathcal{D}_B$ (c). }
    \vspace{-5mm}
    \label{fig:mb_stats}
    \end{figure}

\section{Conclusions}
\label{sec:conclusions}

We develop a new form of goal-oriented loss function for learning surrogate models of SDEs which preserve the accuracy of path-space observables relative to an expensive reference process. The goal-oriented losses of \Cref{prop:golossineq} are derived from \Cref{lem:2momentbd}, an information inequality which bounds the absolute error in the path-space observable. Treating this bound as a learning objective yields a proxy regression problem, where minimizers of the objective control error in the observable. Compared to similar goal-oriented approaches \cite{Dupuis2016, Harmandaris2016, Branicki2021, Branicki2023}, our error bound applies symmetrically to a broader class of path-space observables, discussed in \Cref{rmk:comparebd}, while balancing accuracy -- i.e. tightness of the upper bound -- and computational simplicity to suit practical applications. In particular, we have a closed-form gradient of the goal-oriented loss in \Cref{prop:gradgoloss} which is straightforward to implement. The numerical experiments in \Cref{sec:numerics} suggest that in comparison to existing methods, our approach to goal-oriented learning can lead to models which reproduce first transition time statistics more accurately and are more robust to perturbations in the training data.

Goal-oriented learning is suitable in applications which require accuracy in quantifying a specific path-space observable rather than the full-field dynamics of the reference process. Consistent with the observable-based approaches reviewed in \Cref{sec:review}, the goal-oriented learning scheme weakens the notion of a regression solution to one where the surrogate SDE is identical to the reference SDE in terms of an expected path functional. As a result, there is a larger set of viable solutions that minimize the goal-oriented loss compared to other regression objectives, which can lead to faster convergence and improved robustness. However, a surrogate model developed via goal-oriented learning for one observable is not guaranteed to perform well in characterizing another observable. It is possible to incorporate multiple ``goals'' through a multiobjective optimization problem, in which the loss function is a combination of the goal-oriented loss for multiple observables.

A promising application of goal-oriented learning is to improve simulation-based estimation of reaction rates for kinetic Monte Carlo, a multiscale simulation technique for accelerating the computational characterization of chemical processes. An avenue for future work is to extend the goal-oriented scheme to underdamped Langevin systems, which characterize the dynamics of a broader class of molecular systems.

\bigskip

\appendix

\section{Proof of \Cref{prop:gradobs}}
\label{supp:gradobspf}

Recall that in the statement of \Cref{prop:gradobs}, $\tilde{\mathbb{E}}^x_{\theta}[\cdot]$ denotes the expectation with respect to $\tilde{\mathbb{P}}^x_\theta \in \mathcal{P}(C([0,\infty),\mathbb{R}^{m}))$, i.e., the law of the solution to the modified SDE \eqref{eq:sdeparametric} with drift $\tilde{b}_\theta\in\mathcal{B}$, where $\mathcal{B}$ is defined in \Cref{sec:goloss}.
Analogously, we shall write $\tilde{\mathbb{E}}_{\tilde{b}}[\phi_\tau]$ to denote the expectation of $\phi_\tau$ with respect to the law of \eqref{eq:sdeparametric} when $\tilde{b}_\theta$ in \eqref{eq:sdeparametric} is replaced by an arbitrary $\tilde{b}\in\mathcal{B}$.

To prove \Cref{prop:gradobs}, we first state \Cref{lem:Frechet} below.

\begin{lemma}[Fr\'echet derivative of path-space observables]
    \label{lem:Frechet}
    Assume Conditions \ref{cond:elliptic} and \ref{cond:observable} hold, and define the map $\mathcal{T}: \mathcal{B} \to \mathbb{R}$ to be $\tilde{b} \mapsto \tilde{\mathbb{E}}_{\tilde{b}}[\phi_\tau]$. Then the Fr\'echet derivative of $\mathcal{T}$ at $\tilde{b} \in \mathcal{B}$ in the direction of an arbitrary bounded, Borel-measurable function $v:\mathbb{R}^{m}\to\mathbb{R}^{m}$ is as follows, where $M^v_\tau$ is defined as in \eqref{eq:martingale}:
    \[
        \delta\mathcal{T}(\tilde{b}; v ) = \tilde{\mathbb{E}}_{\tilde{b}}[\phi_\tau M^v_\tau] \, .
    \]
    
\end{lemma}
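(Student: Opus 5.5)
Our plan is to express $\mathcal{T}(\tilde b + \epsilon v)$ as an expectation under the law of the unperturbed surrogate, via the Girsanov change of measure of \Cref{subsec:datadiffusion}, and then differentiate the Radon--Nikodym density in $\epsilon$. Fix $\tilde b\in\mathcal{B}$ and a bounded Borel $v$, and write $\tilde X$ for the solution with drift $\tilde b$ and driving noise $W$. Since $v$ is bounded and \Cref{cond:elliptic} holds, the same argument that gave \eqref{eq:rn_f} yields, on $\mathcal{F}_t$,
\[
\frac{\textup{d}\tilde{\mathbb{P}}_{\tilde b+\epsilon v}}{\textup{d}\tilde{\mathbb{P}}_{\tilde b}}\Big|_{\mathcal{F}_t} = Z^\epsilon_t \coloneqq \exp\Big(\epsilon M^v_t - \tfrac{\epsilon^2}{2}\langle M^v,M^v\rangle_t\Big),
\]
with $\langle M^v,M^v\rangle_t\le \alpha^{-2}\|v\|_\infty^2\, t$. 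Because $\tau$ is a stopping time and $\phi_\tau$ is $\mathcal{F}_\tau$-measurable, optional stopping on $\{\tau\le t\}$ followed by $t\to\infty$ (justified by the integrability established below) gives
\[
\mathcal{T}(\tilde b+\epsilon v)=\tilde{\mathbb{E}}_{\tilde b}[\phi_\tau Z^\epsilon_\tau].
\]

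The candidate derivative then comes from $\tfrac{\textup{d}}{\textup{d}\epsilon}Z^\epsilon_\tau|_{\epsilon=0}=M^v_\tau$. To obtain the Fréchet (not merely Gâteaux) statement, we will bound the remainder
\[
\big|\mathcal{T}(\tilde b+ v)-\mathcal{T}(\tilde b)-\tilde{\mathbb{E}}_{\tilde b}[\phi_\tau M^v_\tau]\big| = \big|\tilde{\mathbb{E}}_{\tilde b}\big[\phi_\tau\,(Z^1_\tau-1-M^v_\tau)\big]\big|
\]
by $o(\|v\|_\infty)$. Writing $Z^1_\tau = e^{R}$ with $R = M^v_\tau - \tfrac12\langle M^v\rangle_\tau$, we use $|e^{R}-1-R|\le \tfrac12 R^2 e^{|R|}$ and $|R - M^v_\tau|\le \tfrac12\alpha^{-2}\|v\|^2_\infty\tau$. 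Applying Hölder with exponents $(p,q,r)$ reduces the problem to bounding $\tilde{\mathbb{E}}[|\phi_\tau|^p]$, $\tilde{\mathbb{E}}[|R|^{2q}]$, and $\tilde{\mathbb{E}}[e^{r|R|}]$.

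The main obstacle is controlling exponential moments of $M^v_\tau$ at the unbounded stopping time $\tau$; this is exactly where \Cref{cond:observable} enters. We will use the exponential martingale with a parameter $c$,
\[
\tilde{\mathbb{E}}\big[e^{cM^v_{\tau}}\big] \le \tilde{\mathbb{E}}\big[e^{2cM^v_\tau - 2c^2\langle M^v\rangle_\tau}\big]^{1/2}\,\tilde{\mathbb{E}}\big[e^{2c^2\langle M^v\rangle_\tau}\big]^{1/2} \le \tilde{\mathbb{E}}\big[e^{2c^2\alpha^{-2}\|v\|^2_\infty \tau}\big]^{1/2}.
\]
This is finite once $2c^2\alpha^{-2}\|v\|_\infty^2\le\lambda_0$, which holds uniformly for small $\|v\|_\infty$. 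The first factor on the right is at most $1$ by Fatou applied to the supermartingale. Burkholder--Davis--Gundy similarly gives $\tilde{\mathbb{E}}[|M^v_\tau|^{k}]\lesssim \|v\|^k_\infty\tilde{\mathbb{E}}[\tau^{k/2}]$. Since $\phi_\tau=\int_0^\tau f$, we also get moments of $\phi_\tau$ from those of $\tau$, assuming $f$ is bounded; if $f$ is not bounded, we would add the assumption $\phi_\tau\in L^p$ for some $p>1$.

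Combining these bounds, the remainder is $O(\|v\|^2_\infty)$, and the first-order term is linear and bounded in $v$. This yields $\delta\mathcal{T}(\tilde b;v)=\tilde{\mathbb{E}}_{\tilde b}[\phi_\tau M^v_\tau]$, which is consistent with \cite{Lie2021}; one could alternatively invoke that reference directly once its hypotheses are matched to Conditions \ref{cond:elliptic} and \ref{cond:observable}.
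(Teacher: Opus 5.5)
Your proposal is correct, but it takes a genuinely different route from the paper. The paper does no analysis of its own here: it notes that \Cref{cond:elliptic,cond:observable} imply Assumptions 3.1 and 3.2 of \cite{Lie2021} and quotes Lemma 3.6 there.

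You instead give a self-contained argument:
\begin{enumerate}
\item Girsanov plus optional stopping gives $\mathcal{T}(\tilde b+v)=\tilde{\mathbb{E}}_{\tilde b}[\phi_\tau Z^1_\tau]$.
\item A second-order Taylor bound controls the density.
\item H\"older, BDG and the Cauchy--Schwarz/supermartingale estimate for $\tilde{\mathbb{E}}_{\tilde b}[e^{cM^v_\tau}]$ control the remainder.
\end{enumerate}

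What this buys is transparency about where each hypothesis enters. \Cref{cond:elliptic} is used only through $\langle M^v\rangle_\tau\le\alpha^{-2}\|v\|_\infty^2\tau$. \Cref{cond:observable} yields exponential moments of $M^v_\tau$ only for $\|v\|_\infty$ small relative to $\alpha\sqrt{\lambda_0}$, so the expansion is genuinely local. You also obtain an explicit $O(\|v\|_\infty^2)$ remainder, i.e.\ a true Fr\'echet rather than G\^ateaux statement.

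Your route also surfaces a hypothesis the citation hides. \Cref{cond:elliptic,cond:observable} say nothing about $f$. Some integrability of $\phi_\tau$ is needed for $\mathcal{T}$ and the candidate derivative even to be finite, e.g.\ your bounded $f$, or $\phi_\tau\in L^p(\tilde{\mathbb{P}}_{\tilde b})$ for some $p>1$. So your extra assumption is a needed hypothesis, not a weakness of your proof. The paper's route is much shorter and relies on the standing setting of \cite{Lie2021} to handle such points.

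Two things to make explicit when writing this up.

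First, $\tilde{\mathbb{P}}_{\tilde b+\epsilon v}(\tau<\infty)=1$ and the passage $t\to\infty$ both follow from $\tilde{\mathbb{E}}_{\tilde b}[Z^\epsilon_\tau]=1$. That identity comes from uniform integrability of $(Z^\epsilon_{\tau\wedge t})_{t\ge 0}$, which your exponential-moment bound supplies (e.g.\ via a uniform $L^2$ bound).

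Second, a caveat shared with the paper's own Girsanov discussion. When $\Sigma$ is degenerate, the density built from $\sigma^+v$ produces the drift $\tilde b+\epsilon\,\sigma\sigma^+v$, and $\sigma\sigma^+$ is only the orthogonal projection onto the range of $\sigma$. So the identity $\mathcal{T}(\tilde b+\epsilon v)=\tilde{\mathbb{E}}_{\tilde b}[\phi_\tau Z^\epsilon_\tau]$ requires $v(y)\in\mathrm{range}\,\sigma(y)$. This holds automatically only when $\sigma$ is invertible, and the phrase ``arbitrary bounded Borel $v$'' glosses over it.
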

\begin{proof}[Proof of \Cref{lem:Frechet}]
\Cref{cond:elliptic,cond:observable} imply Assumption 3.1 and Assumption 3.2 of \cite{Lie2021} respectively. The conclusion now follows by \cite[Lemma 3.6]{Lie2021}.
\end{proof}

\begin{proof}[Proof of \Cref{prop:gradobs}] 
 
Define the map $\tilde{b}: \Theta \to \mathcal{B}$ to be $\theta\mapsto \tilde{b}(\theta)\coloneqq \tilde{b}_\theta$, for $\tilde{b}_\theta$ as in \eqref{eq:sdeparametric}. Then for the map $\mathcal{T}$ in \Cref{lem:Frechet}, we can write the path-space observable $\tilde{\mu}: \Theta \to \mathbb{R}$ from \Cref{cond:observable} as the composition $\tilde{\mu} = \mathcal{T} \circ \tilde{b}$. By the chain rule, the Fr\'echet derivative of $\tilde{\mu}$ at $\theta \in \Theta \subseteq \mathbb{R}^d$ in the direction of the $i$th canonical basis vector, $e_i \in \mathbb{R}^d$, for $i=1,...,d$ is
        \begin{equation}
            \label{eq:pf_frechet}
            \delta\tilde{\mu}(\theta; e_i) = \delta(\mathcal{T} \circ \tilde{b})(\theta; e_i) = \delta\mathcal{T}(\tilde{b}; \delta \tilde{b}(\theta;e_i)) \, .
        \end{equation}
    Since \Cref{cond:differentiable} holds, the vector $\delta \tilde{b}(\theta;e_i)$ is exactly the derivative of $\tilde{b}$ evaluated at $\theta$ in the direction $e_i$:
        \begin{equation*}
            \delta \tilde{b}(\theta;e_i) = \nabla_\theta \tilde{b}(\theta) \cdot e_i = \partial_{\theta_i}\tilde{b}(\theta) \, .
        \end{equation*}
   Combining the preceding equation with \Cref{lem:Frechet} and \eqref{eq:pf_frechet}, we have
        \begin{equation}
            \delta\tilde{\mu}(\theta; e_i) = \tilde{\mathbb{E}}_{\tilde{b}}\big[ \phi_\tau M_\tau^{\partial_{\theta_i} \tilde{b}(\theta)} \big] \, .
        \end{equation}
    Since $\theta\in\mathbb{R}^{d}$ is on a normed space, it follows that the Fr\'echet derivative coincides with the Gateaux derivative \cite[Sec. 7.2, Prop. 2]{Luenberger1969}, meaning one can write
    \begin{align*}
            \delta\tilde{\mu}(\theta; e_i) = \lim_{\varepsilon \to 0} \bigg( \frac{\tilde{\mu}(\theta+\varepsilon e_i) - \tilde{\mu}(\theta)}{\varepsilon}\bigg) = \nabla_\theta\tilde{\mu}(\theta) \cdot e_i = \partial_{\theta_i} \tilde{\mu}(\theta) \, .
    \end{align*}

\end{proof}

\section{Proof of \Cref{prop:gradgoloss}}
\label{supp:proppf}

We start by stating and proving the formula for the gradient with respect to $\theta$ of an expectation of a function over a probability distribution on $\mathbb{R}^m$, where both the function and probability distribution depend on $\theta$. This formula is referred to as the \textit{score function gradient estimator} and sufficient conditions for the formula to hold are presented in \cite[Sec. 4.3.1]{Mohamed2020}.

\begin{lemma}[Score function gradient estimator]
\label{lem:scorefunc}
Let $\rho,g:\mathbb{R}^m\times\Theta\to\mathbb{R}$ be such that for every $\theta\in\Theta$, $\rho(\cdot,\theta)$ is a Lebesgue probability density and $g(\cdot,\theta)$ is $\rho(\cdot,\theta)$-integrable. Moreover, let $\rho(x, \theta) > 0$ and $\nabla_\theta \rho(x,\theta)$ and $\nabla_\theta g(x,\theta)$ be defined for almost every $x$. Assume that there exists a function $\psi \in L^1$ such that $\sup_{\theta \in \Theta} || \nabla_\theta \big( g(x, \theta) \rho(x, \theta) \big)||_1 \leq \psi(x)$ for almost every $x$. Then the gradient of $\mathbb{E}_{\rho}[g]$ with respect to $\theta$ is 
    \begin{equation}
        \label{eq:scorefuncest}
        \nabla_\theta \mathbb{E}_{\rho}[g] = \mathbb{E}_{\rho} [ g \nabla_\theta \log \rho + \nabla_\theta g] \, .
    \end{equation}
\end{lemma}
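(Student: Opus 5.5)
The plan is to differentiate under the integral sign, justified by the dominating function $\psi$ through the dominated convergence theorem. The score-function identity then follows by writing $\nabla_\theta \rho = \rho \nabla_\theta \log \rho$, which is valid because $\rho>0$.

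\textbf{Step 1: rewrite the expectation.} Write
\[
\mathbb{E}_\rho[g] = \int_{\mathbb{R}^m} g(x,\theta)\rho(x,\theta)\,\textup{d}x,
\]
and set $h(x,\theta) \coloneqq g(x,\theta)\rho(x,\theta)$. The integrability hypothesis on $g$ makes this integral finite for each $\theta$.

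\textbf{Step 2: differentiate one component at a time.} Fix $\theta\in\Theta$, an index $i$, and a direction $e_i$. Since $\Theta$ is open, $\theta+\varepsilon e_i\in\Theta$ for all small $|\varepsilon|$. Form the difference quotient
\[
\varepsilon^{-1}\big(h(x,\theta+\varepsilon e_i)-h(x,\theta)\big).
\]
For almost every $x$ it converges to $\partial_{\theta_i} h(x,\theta)$, because $\nabla_\theta\rho$ and $\nabla_\theta g$ exist a.e. and the product rule applies.

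\textbf{Step 3: dominate the difference quotient.} This is the main technical point. By the mean value theorem applied along the segment, which lies in $\Theta$ for small $\varepsilon$, the quotient equals $\partial_{\theta_i} h(x,\theta')$ for some intermediate $\theta'$. Its absolute value is therefore at most
\[
\sup_{\theta\in\Theta}\|\nabla_\theta h(x,\theta)\|_1 \le \psi(x) \in L^1.
\]
There is a subtlety: the mean value theorem needs $h(x,\cdot)$ to be differentiable along the whole segment for a.e. $x$, not just at $\theta$. I would read the hypothesis "defined for almost every $x$" as holding uniformly in $\theta$, i.e. for a.e. $x$ the map $h(x,\cdot)$ is differentiable on $\Theta$. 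Alternatively, one can write $h(x,\theta+\varepsilon e_i)-h(x,\theta)$ as the integral of $\partial_{\theta_i}h$ over the segment, assuming absolute continuity in $\theta$, and bound it by $|\varepsilon|\psi(x)$.

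\textbf{Step 4: pass to the limit.} Dominated convergence now gives
\[
\partial_{\theta_i}\int h\,\textup{d}x=\int \partial_{\theta_i}h\,\textup{d}x.
\]

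\textbf{Step 5: apply the product rule and the log-derivative identity.} Expanding,
\[
\nabla_\theta h = \rho\nabla_\theta g + g\nabla_\theta \rho .
\]
Since $\rho>0$, we have $\nabla_\theta\rho=\rho\nabla_\theta\log\rho$. Hence
\[
\int \nabla_\theta h\,\textup{d}x=\mathbb{E}_\rho[g\nabla_\theta\log\rho+\nabla_\theta g],
\]
which is \eqref{eq:scorefuncest}.

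The right-hand side is well-defined because its integrand, multiplied by $\rho$, is exactly $\nabla_\theta h$, which is dominated by $\psi\in L^1$. So the expectation in \eqref{eq:scorefuncest} is finite.
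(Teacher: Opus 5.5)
Your proposal is correct and follows the same route as the paper: interchange $\nabla_\theta$ and the integral by dominated convergence with $\psi$, then apply the product rule and $\nabla_\theta \rho = \rho\, \nabla_\theta \log \rho$ (valid since $\rho > 0$). Your mean-value-theorem bound on the difference quotients fills in the justification for this interchange, which the paper only asserts in the remark after its proof; it relies on your observation that $h(x,\cdot)$ must be differentiable on all of $\Theta$ for a.e.\ $x$, which the $\sup_{\theta\in\Theta}$ in the hypothesis already presupposes.
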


\begin{proof}
    The key to the derivation is the form of the score function, or gradient log probability, $\nabla_\theta \log \rho(x, \theta) = \frac{\nabla_\theta \rho(x, \theta)}{\rho(x, \theta)}$. 
    \begin{align*}
            \nabla_\theta \mathbb{E}_{x \sim \rho}[g(x, \theta)] & = \nabla_\theta \int g(x, \theta) \rho(x, \theta) \textup{d}x 
            \\
            & = \int \nabla_\theta \big[ g(x, \theta) \rho(x, \theta) \big] \textup{d}x
            \\
            & = \int g(x, \theta) \nabla_\theta \rho(x, \theta) \textup{d}x + \int \rho(x, \theta) \nabla_\theta g(x, \theta) \textup{d}x 
            \\
            & = \int g(x, \theta) \big[ \nabla_\theta \log \rho(x, \theta) \big] \rho(x, \theta) \textup{d}x + \int \rho(x, \theta) \nabla_\theta g(x, \theta) \textup{d}x 
            \\
            & = \mathbb{E}_{\rho(\cdot,\theta)} [ g(\cdot,\theta) \nabla_\theta \log \rho(\cdot,\theta)] + \mathbb{E}_{\rho(\cdot,\theta)} [\nabla_\theta g(\cdot,\theta)] \, . 
    \end{align*}

\end{proof}

The formula \eqref{eq:scorefuncest} holds when the interchange of the order of differentiation and integration in the second line of the proof is valid, which we establish using the existence of a dominating function $\psi \in L^1$ and the dominated convergence theorem. Other sufficient conditions are discussed in \cite{Lecuyer1995}. 

We have the following corollary to \Cref{lem:scorefunc} for the case where $\rho$ corresponds to a Gibbs-Boltzmann distribution. 

\begin{corollary}
    \label{corr:scorefunc}
    Let $\pi(x, \theta) =\exp(-\beta V(x, \theta))/Z(\theta)$ be a Gibbs-Boltzmann distribution for some $\theta$-differentiable potential $V: \mathbb{R}^m \times \Theta \to \mathbb{R}$, with normalizing constant $Z(\theta) = \allowbreak \int \exp(-\beta V(x, \theta)) \textup{d}x$. For $g:\mathbb{R}^m\times\Theta\to\mathbb{R}$, let $g(\cdot,\theta)$ be a $\pi(\cdot,\theta)$-integrable function for which there exists a dominating function $\psi \in L^1$ such that $\sup_{\theta \in \Theta} \allowbreak || \nabla_\theta \big( g(x, \theta) \pi(x, \theta) \big)||_1 \leq \psi(x)$ for almost every $x$. Then the gradient of $\mathbb{E}_{\pi}[g]$ with respect to $\theta$ is
    \begin{equation}
        \label{eq:scorefuncest_gibbs}
        \nabla_\theta \mathbb{E}_{\pi}[g] = \mathbb{E}_{\pi}\big[ \nabla_\theta g - \beta g (\nabla_\theta V - \mathbb{E}_{\pi}[\nabla_\theta V]) \big] \, .
    \end{equation}
\end{corollary}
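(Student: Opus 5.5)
We apply \Cref{lem:scorefunc} with $\rho(\cdot,\theta)=\pi(\cdot,\theta)$. The hypotheses of \Cref{lem:scorefunc} are in place. Positivity $\pi(x,\theta)>0$ holds because $\pi$ is an exponential divided by a positive normalizing constant. The dominating function is assumed directly in the statement. Differentiability in $\theta$ of $\pi$ comes from differentiability of $V$ together with differentiability of $Z$. With these checked, \eqref{eq:scorefuncest} gives
\[
\nabla_\theta \mathbb{E}_\pi[g] = \mathbb{E}_\pi\big[g\,\nabla_\theta\log\pi + \nabla_\theta g\big],
\]
so the whole task is to compute the score $\nabla_\theta \log \pi$ explicitly.

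Since $\log \pi(x,\theta) = -\beta V(x,\theta) - \log Z(\theta)$, we have $\nabla_\theta\log\pi = -\beta\nabla_\theta V - \nabla_\theta Z/Z$. For the derivative of the normalizing constant, we differentiate under the integral sign:
\[
\nabla_\theta Z(\theta)=\int -\beta\,\nabla_\theta V(x,\theta)\,e^{-\beta V(x,\theta)}\,\textup{d}x,
\]
which gives $\nabla_\theta Z/Z = -\beta\,\mathbb{E}_\pi[\nabla_\theta V]$. Therefore
\[
\nabla_\theta\log\pi = -\beta\big(\nabla_\theta V - \mathbb{E}_\pi[\nabla_\theta V]\big).
\]
Substituting this score into the formula above yields \eqref{eq:scorefuncest_gibbs} directly.

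The main obstacle is justifying the interchange of derivative and integral in the computation of $\nabla_\theta Z$, since the stated domination hypothesis controls $\nabla_\theta(g\pi)$ and not $\nabla_\theta e^{-\beta V}$. I would first try the case $g\equiv 1$. Then $\mathbb{E}_\pi[1]=1$ is constant, so $\mathbb{E}_\pi[\nabla_\theta\log\pi]=0$, which gives the formula for $\nabla_\theta Z/Z$ without differentiating $Z$ under the integral. However, this still requires domination of $\nabla_\theta \pi$, which is exactly what \Cref{lem:scorefunc} needs for $g=1$. If that is not available, I would impose it as a mild implicit assumption: local-in-$\theta$ integrability of $|\nabla_\theta V|e^{-\beta V}$, which is natural for the confining potentials considered in $\mathcal{V}$. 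Under that assumption, dominated convergence handles the interchange.
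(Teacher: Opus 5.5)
Your proposal is correct and matches the paper's proof: apply \Cref{lem:scorefunc} with $\rho=\pi$, then compute the score $\nabla_\theta\log\pi=-\beta(\nabla_\theta V-\mathbb{E}_\pi[\nabla_\theta V])$ by differentiating $\log Z(\theta)$ under the integral sign. You are right that this interchange for $\nabla_\theta Z$ is not covered by the stated domination hypothesis on $\nabla_\theta(g\pi)$; the paper performs it without comment, so your added local domination of $|\nabla_\theta V|e^{-\beta V}$ is a reasonable way to make that step rigorous.
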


\begin{proof}
    We have the following closed-form expression for the gradient log probability corresponding to $\pi$: 
    \begin{align*}
        \nabla_\theta \log \pi(x, \theta) & = \nabla_\theta \log \Big( \frac{\exp(-\beta V(x, \theta))}{Z(\theta)}\Big) = \nabla_\theta (-\beta V(x, \theta) - \log Z(\theta)) \\
        & = -\beta \nabla_\theta V(x, \theta) - \frac{1}{Z(\theta)} \nabla_\theta Z(\theta) \\
        & = -\beta \nabla_\theta V(x, \theta) - \frac{1}{Z(\theta)} \nabla_\theta \int \exp(-\beta V(x, \theta)) \textup{d}x \\
        & = -\beta \nabla_\theta V(x, \theta) - \frac{1}{Z(\theta)}  \int \nabla_\theta \big( \exp(-\beta V(x, \theta)) \big) \textup{d}x \\
        & = -\beta \nabla_\theta V(x, \theta) + \int \frac{\exp(-\beta V(x, \theta))}{Z(\theta)} \beta \nabla_\theta V(x, \theta)  \textup{d}x \\
        & = -\beta \big( \nabla_\theta V(x, \theta) - \mathbb{E}_{\pi(\cdot,\theta)} [\nabla_\theta V(\cdot,\theta)] \big) \, .
    \end{align*}
Substituting this expression in place of $\nabla_\theta \log \rho$ in \Cref{lem:scorefunc} recovers \eqref{eq:scorefuncest_gibbs}.
\end{proof}

\begin{proof}[Proof of \Cref{prop:gradgoloss}] The gradients of the forward and reverse goal-oriented losses in \eqref{eq:golossf} and \eqref{eq:golossr} are obtained in \eqref{eq:golossf_grad} and \eqref{eq:golossr_grad} by applying the chain rule.  Two gradients appear in the expressions: that of the relative entropy rate, $\nabla_\theta \mathcal{H}(\mathbb{P}^x || \tilde{\mathbb{P}}^x_\theta)$ or $\nabla_\theta \mathcal{H}(\tilde{\mathbb{P}}^x_\theta || \mathbb{P}^x)$, and that of the second moment of the observable, $\nabla_\theta \tilde{\mathbb{E}}^x_{\theta,[0,T]} [\phi_T^2]$.

The expression for the gradient of the forward relative entropy rate, $\nabla_\theta \mathcal{H}(\mathbb{P}^x || \tilde{\mathbb{P}}^x_\theta)$, is obtained in \eqref{eq:rerf_grad} as follows. Under \Cref{cond:elliptic,cond:domfcn}, $\sup_\theta \allowbreak \big| \big| \nabla_\theta \big( || \sigma^+(b - \tilde{b}_\theta) ||_2^2 \big) \big| \big|_1$ is dominated by an integrable function. Therefore, one can interchange the order of differentiation and integration to obtain the gradient of the forward relative entropy rate with respect to $\theta$ as
    \[
    \nabla_\theta \mathcal{H}(\mathbb{P}^x || \tilde{\mathbb{P}}^x_\theta) = \frac{1}{2} \mathbb{E}_\pi \Big[ \nabla_\theta \big( || \sigma^+(b - \tilde{b}_\theta) ||_2^2 \big) \Big] = \mathbb{E}_\pi \Big[ J_\theta^\top\Sigma^+ (\tilde{b}_\theta - b) \Big] \, ,
    \]
where the Jacobian $J_\theta$ is defined in \eqref{eq:jacobian}. 

Since \Cref{cond:domfcn} is satisfied, one can apply the score function gradient estimator formula from \Cref{lem:scorefunc} to obtain the gradient of the reverse relative entropy rate, $\nabla_\theta \mathcal{H}(\tilde{\mathbb{P}}^x_\theta || \mathbb{P}^x)$, in \eqref{eq:rerr_grad}:
\[
\begin{aligned}
    \nabla_\theta \mathcal{H}(\tilde{\mathbb{P}}^x_\theta || \mathbb{P}^x) & = 
    \frac{1}{2} \nabla_\theta \mathbb{E}_{\tilde{\pi}_\theta} \Big[ || \sigma^+(b - \tilde{b}_\theta) ||_2^2 \Big] \\
    & = \frac{1}{2} \mathbb{E}_{\tilde{\pi}_\theta} \Big[ || \sigma^+(b - \tilde{b}_\theta) ||_2^2 \ \nabla_\theta \log \tilde{\pi}_\theta + \nabla_\theta \big( || \sigma^+(b - \tilde{b}_\theta) ||_2^2 \big) \Big] \\
    & = \frac{1}{2} \mathbb{E}_{\tilde{\pi}_\theta} \Big[ || \sigma^+(b - \tilde{b}_\theta) ||_2^2 \ \nabla_\theta \log \tilde{\pi}_\theta + 2 J_\theta^\top \Sigma^+(\tilde{b}_\theta - b) \Big] \, .
    \end{aligned}
\]

The expression for the gradient of the second moment of the path-space observable, $\nabla_\theta \tilde{\mathbb{E}}^x_{\theta,[0,T]} [\phi_T^2]$, is obtained in \eqref{eq:2momentgrad} as follows. Since $\phi_T$ is defined on a bounded time interval $[0,T]$, the stopping time is bounded and meets \Cref{cond:observable}. Therefore, one can apply \Cref{prop:gradobs} to each of the partial derivatives to obtain

\[
    \begin{aligned}
    \nabla_\theta \tilde{\mathbb{E}}^x_{\theta,[0,T]} [\phi_T^2] & =
    \begin{bmatrix} 
    \partial_{\theta_1} \tilde{\mathbb{E}}^x_{\theta,[0,T]} [\phi_T^2] \\ \vdots \\ \partial_{\theta_d} \tilde{\mathbb{E}}^x_{\theta,[0,T]} [\phi_T^2]
    \end{bmatrix} = 
    \begin{bmatrix} 
        \tilde{\mathbb{E}}^x_{\theta,[0,T]} \big[ \phi_T^2 \int_0^T (\sigma^+ \partial_{\theta_1} \tilde{b}_\theta)^\top \textup{d}W_s \big] \\
        \vdots \\
        \tilde{\mathbb{E}}^x_{\theta,[0,T]} \big[ \phi_T^2 \int_0^T (\sigma^+ \partial_{\theta_d} \tilde{b}_\theta)^\top \textup{d}W_s \big] 
    \end{bmatrix} \\
    & = \tilde{\mathbb{E}}^x_{\theta,[0,T]}
    \begin{bmatrix}
        \phi_T^2 \begin{bmatrix}
            \int_0^T (\sigma^+ \partial_{\theta_1} \tilde{b}_\theta)^\top \textup{d}W_s \\
            \vdots \\
            \int_0^T (\sigma^+ \partial_{\theta_d} \tilde{b}_\theta)^\top \textup{d}W_s
        \end{bmatrix}
    \end{bmatrix} \\
    & = \tilde{\mathbb{E}}^x_{\theta,[0,T]} \Big[ \phi_T^2 \int_0^T (\sigma^+J_\theta)^\top \textup{d}W_s \Big] = \tilde{\mathbb{E}}^x_{\theta,[0,T]}[\phi_T^2 M_T^{J_\theta}] \, ,
    \end{aligned} 
\]
where the $d$-dimensional local martingale $(M^{J_\theta}_t)_{t \geq 0}$ is defined in \eqref{eq:dmartingale}. 
\end{proof}

When the invariant distribution of the surrogate SDE corresponds to a Gibbs-Boltzmann distribution, e.g. $\tilde{\pi}_\theta \propto \exp(-\beta \tilde{V}_\theta)$ using the shorthand $\tilde{V}_\theta(\cdot) = \tilde{V}(\cdot, \theta)$, then the gradient log probability $\nabla_\theta \log \tilde{\pi}_\theta$ has an analytical form and the expression for the gradient of the reverse relative entropy rate in \eqref{eq:rerr_grad} can be further reduced to
\[
    \nabla_\theta \mathcal{H}(\tilde{\mathbb{P}}^x_\theta || \mathbb{P}^x) = \frac{1}{2} \mathbb{E}_{\tilde{\pi}_\theta} \Big[  - \beta || \sigma^+(b - \tilde{b}_\theta) ||_2^2 \ \big( \nabla_\theta \tilde{V}_\theta - \mathbb{E}_{\tilde{\pi}_\theta}[\nabla_\theta \tilde{V}_\theta] \big) + 2J_\theta^\top \Sigma^+(\tilde{b}_\theta - b) \Big]
\]
as a result of \Cref{corr:scorefunc}. 

\section{Goal-oriented learning algorithm}
\label{supp:alg}

In \Cref{alg:golossf,alg:golossr}, we present the pseudocode for goal-oriented learning with empirical estimators of the forward loss and reverse loss in \eqref{eq:golossf} and \eqref{eq:golossr}, respectively, using a gradient descent algorithm. The following notation is used:
\begin{table}[H]
\centering
\small
\begin{tabular}{ll}
\textit{Variable} & \textit{Description}                 \\ \hline
$\theta_0$        & initial parameter vector                                         \\
$K$        & total number of gradient descent iterations                      \\
$N_\text{samp}$        & number of sample states from $\tilde{\pi}_{\theta_k}$            \\
$\mathcal{D} \coloneqq \{x_j:j=1,...,N_\text{samp} \}$        & dataset of sample states                            \\
$N_\text{path}$        & number of sample paths from $\tilde{\mathbb{P}}^x_{\theta_k}$    \\
$\mathcal{J} \coloneqq \{\tilde{X}^{\delta,(i)}:i=1,...,N_\text{path} \}$        & dataset of sample paths                                          \\
$T$        & maximum time length of paths                                     \\
$X_0$        & initial condition of paths                                           \\
$\delta$        & time step                                                        \\
$\gamma$       & learning rate in gradient descent                                \\
$\varepsilon_\theta$       & tolerance for incremental change in $\theta$          \\
$\varepsilon_\mathcal{L}$       & tolerance for incremental change in $\hat{\mathcal{L}}^\textup{GO}_T$\\ \hline
\end{tabular}
\end{table}

The pseudocode refers to the following methods. \textsc{SampleInvariantDist} is a method for drawing $N_\text{samp}$ samples from the invariant distribution $\tilde{\pi}_\theta$ of the surrogate SDE in \eqref{eq:sdesurrogate}. One may implement Langevin Monte Carlo or another choice of Markov chain Monte Carlo (MCMC) scheme. \textsc{SampleHittingPaths} computes $N_\text{path}$ realizations of a time-discretized solution to the surrogate SDE; for instance, using the Euler Maruyama discretization scheme in \eqref{eq:sdesurrogate_discrete}. \textsc{GOLoss-f} and \textsc{GOLoss-r} are empirical estimators of the forward and reverse goal-oriented losses, \eqref{eq:golossf} and \eqref{eq:golossr}, respectively. Both estimators take the form of \eqref{eq:golossf_discrete}, differing only in the dataset $\mathcal{D}$ of sample states used to compute Monte Carlo averages. In the forward loss, $\mathcal{D}$ approximately represents the invariant distribution of the reference SDE $\pi$ and remains fixed across iterations, whereas in the reverse loss, $\mathcal{D}_k$ is sampled at every iteration from the invariant distribution of the surrogate SDE at the current parameter iterate $\theta_k$. Similarly, \textsc{GradientGOLoss-f} and \textsc{GradientGOLoss-r} are empirical estimators of the gradient formula for the forward and reverse goal-oriented losses in \eqref{eq:golossf_grad} and \eqref{eq:golossr_grad}, respectively.

The termination criterion in the gradient descent loop is based on the absolute change in the parameter or loss function, up to a terminal number of iterations $K$. The algorithms are presented using basic gradient descent for the sake of simplicity though other gradient-based optimization schemes can be implemented such as adaptive gradient descent \cite{Duchi2011}. 

\begin{algorithm}[H]
\caption{Goal-oriented learning (forward loss)}\label{alg:golossf}
\begin{algorithmic}
    \Require{$\theta_0, K, \mathcal{D}, N_\text{path}, T, X_0, \delta, \gamma, \varepsilon_\theta, \varepsilon_\mathcal{L}$} 
            
        \For{$k = 0, \dots, K-1$}
        
            \State $\mathcal{J}_k \gets$ \Call{SampleHittingPaths}{$\theta_k, N_\text{path},T, X_0, \delta$}

            \State $\hat{\mathcal{L}}_T^\text{GO}(\theta_k) \gets$ \Call{GOLoss-f}{$\theta_k, \mathcal{D}, \mathcal{J}_k$} 

            \State $\nabla_\theta \hat{\mathcal{L}}_T^\text{GO}(\theta_k) \gets$ \Call{GradientGOLoss-f}{$\theta_k, \mathcal{D}, \mathcal{J}_k$} 

            \State $\theta_{k+1} \gets \theta_k - \gamma \nabla_\theta \hat{\mathcal{L}}_T^\text{GO}(\theta_k)$

            \If{$\left\| \theta_{k+1} - \theta_k \right\| < \varepsilon_\theta$ \textbf{ or } $| \hat{\mathcal{L}}_T^\text{GO}(\theta_{k+1}) - \hat{\mathcal{L}}_T^\text{GO}(\theta_k) | < \varepsilon_\mathcal{L}$ \textbf{ or } $k+1=K$}

                \State $\hat{\theta} \gets \ \theta_{k+1}$
            \EndIf
        \EndFor
\end{algorithmic}
\end{algorithm}

\begin{algorithm}[H]
\caption{Goal-oriented learning (reverse loss)}\label{alg:golossr}
\begin{algorithmic}
    \Require{$\theta_0, K, N_\text{samp}, N_\text{path}, T, X_0, \delta, \gamma, \varepsilon_\theta, \varepsilon_\mathcal{L}$}
            
        \For{$k = 0, \dots, K-1$}

            \State $\mathcal{D}_k \gets$ \Call{SampleInvariantDist}{$\theta_k, N_\text{samp}, T, \delta$}
            
            \State $\mathcal{J}_k \gets$ \Call{SampleHittingPaths}{$\theta_k, N_\text{path},T, X_0, \delta$}

            \State $\hat{\reverse{\mathcal{L}}}_T^\text{GO}(\theta_k) \gets$ \Call{GOLoss-r}{$\theta_k, \mathcal{D}_k, \mathcal{J}_k$}

            \State $\nabla_\theta \hat{\reverse{\mathcal{L}}}_T^\text{GO}(\theta_k) \gets$ \Call{GradientGOLoss-r}{$\theta_k, \mathcal{D}_k, \mathcal{J}_k$}

            \State $\theta_{k+1} \gets \theta_k - \gamma \nabla_\theta \hat{\reverse{\mathcal{L}}}_T^\text{GO}(\theta_k)$

            \If{$\left\| \theta_{k+1} - \theta_k \right\| < \varepsilon_\theta$ \textbf{ or } $| \hat{\reverse{\mathcal{L}}}_T^\text{GO}(\theta_{k+1}) - \hat{\reverse{\mathcal{L}}}_T^\text{GO}(\theta_k) | < \varepsilon_\mathcal{L}$ \textbf{ or } $k+1=K$}

                \State $\hat{\reverse{\theta}} \gets \ \theta_{k+1}$
            \EndIf
        \EndFor
\end{algorithmic}
\end{algorithm}

\section*{Acknowledgments}
The authors would like to thank Olivier Zahm for helpful discussions on this work. JZ and YM are supported by the United States Department of Energy, National Nuclear Security Administration under Award Number DE-NA0003965. JZ and HCL are supported in part by the Deutsche Forschungsgemeinschaft (DFG) --- Project-ID \href{https://gepris.dfg.de/gepris/projekt/318763901}{318763901} --- SFB1294. 

\bibliographystyle{siamplain}
\bibliography{references}

@book{Pavliotis2014,
    author = {Pavliotis, G.},
    title = {{Stochastic Processes and Applications}},
    publisher = {Springer, Texts in Applied Mathematics},
    year = {2014},
    doi = {10.1007/978-1-4939-1323-7}
}

@book{Oksendal2013,
    author = {Oksendal, B.},
    title = {{Stochastic Differential Equations: An Introduction with Applications (6th ed.)}},
    publisher = {Springer},
    year = {2013},
    doi = {10.1007/978-3-662-03185-8}
}

@book{Kloeden1992,
  author    = {Peter E. Kloeden and Eckhard Platen},
  title     = {Numerical Solution of Stochastic Differential Equations},
  series    = {Applications of Mathematics},
  volume    = {23},
  publisher = {Springer},
  address   = {Berlin, Heidelberg},
  year      = {1992},
  doi       = {10.1007/978-3-662-12616-5}
}

@article{Cui2023,
  author  = {Tiangang Cui and Sergey Dolgov and Olivier Zahm},
  title   = {Scalable Conditional Deep Inverse Rosenblatt Transports Using Tensor Trains and Gradient-Based Dimension Reduction},
  journal = {Journal of Computational Physics},
  volume  = {485},
  pages   = {112103},
  year    = {2023},
  doi     = {10.1016/j.jcp.2023.112103}
}

@article{Katsoulakis2017,
  author  = {Markos A. Katsoulakis and Luc Rey-Bellet and Jie Wang},
  title   = {Scalable Information Inequalities for Uncertainty Quantification},
  journal = {Journal of Computational Physics},
  volume  = {336},
  pages   = {513--545},
  year    = {2017},
  doi     = {10.1016/j.jcp.2017.02.020}
}

@book{Kutoyants2004,
  author    = {Yury A. Kutoyants},
  title     = {{Statistical Inference for Ergodic Diffusion Processes}},
  series     = {Springer Series in Statistics},
  publisher = {Springer},
  address   = {London},
  year      = {2004},
  isbn      = {9781852335730}
}

@article{Harmandaris2016,
    title = {Path-space variational inference for non-equilibrium coarse-grained systems},
    journal = {J. Comput. Phys.},
    volume = {314},
    pages = {355-383},
    year = {2016},
    issn = {0021-9991},
    doi = {https://doi.org/10.1016/j.jcp.2016.03.021},
    author = {Vagelis Harmandaris and Evangelia Kalligiannaki and Markos Katsoulakis and Petr Plecháč},
}

@article{Dupuis2016,
    author = {Dupuis, Paul and Katsoulakis, Markos A. and Pantazis, Yannis and Plech\'{a}\v{c}, Petr},
    title = {Path-Space Information Bounds for Uncertainty Quantification and Sensitivity Analysis of Stochastic Dynamics},
    journal = {SIAM/ASA J. Uncertain. Quantif.},
    volume = {4},
    number = {1},
    pages = {80-111},
    year = {2016},
    doi = {10.1137/15M1025645},
}

@article{Tsourtis2015,
    author = {Anastasios Tsourtis and Yannis Pantazis and Markos Katsoulakis and Vagelis Harmandaris},
   doi = {10.1063/1.4922924},
   issn = {00219606},
   issue = {1},
   journal = {J. Chem. Phys.},
   month = {7},
   publisher = {American Institute of Physics Inc.},
   title = {Parametric sensitivity analysis for stochastic molecular systems using information theoretic metrics},
   volume = {143},
   year = {2015},
}

@article{Hartmann2012,
   author = {Carsten Hartmann and Christof Schütte},
   doi = {10.1088/1742-5468/2012/11/P11004},
   issn = {17425468},
   issue = {11},
   journal = {J. Stat. Mech. Theory Exp.},
   month = {11},
   title = {Efficient rare event simulation by optimal nonequilibrium forcing},
   volume = {2012},
   year = {2012},
}

@article{Zhang2021,
title = {Error bounds of the invariant statistics in machine learning of ergodic {Itô} diffusions},
journal = {Phys. D},
volume = {427},
pages = {133022},
year = {2021},
issn = {0167-2789},
doi = {https://doi.org/10.1016/j.physd.2021.133022},
author = {He Zhang and John Harlim and Xiantao Li},
}

@article{Opper2017,
   author = {Manfred Opper},
   doi = {10.1016/j.jcp.2016.11.021},
   issn = {10902716},
   journal = {J. Comput. Phys.},
   month = {2},
   pages = {127-133},
   publisher = {Academic Press Inc.},
   title = {An estimator for the relative entropy rate of path measures for stochastic differential equations},
   volume = {330},
   year = {2017},
}

@article{Mohamed2020,
author = {Mohamed, S. and Rosca, M. and Figurnov, M. and Mnih, A.},
journal = {J. Mach. Learn. Res.},
pages = {1--62},
title = {{Monte Carlo Gradient Estimation in Machine Learning}},
volume = {21},
year = {2020},
doi={10.5555/3455716.3455848}
}

@article{Lie2021,
archivePrefix = {arXiv},
eprint = {2106.09149},
author = {Lie, H. C.},
journal = {arXiv},
title = {{Fr\'echet derivatives of expected functionals of solutions to stochastic differential equations}},
year = {2021},
}

@article{Branicki2021,
  title     = {Lagrangian Uncertainty Quantification and Information Inequalities for Stochastic Flows},
  author    = {Branicki, Michal and Uda, Kenneth},
  journal   = {SIAM/ASA J. Uncertain. Quantif.},
  volume    = {9},
  number    = {3},
  pages     = {1242--1313},
  year      = {2021},
  doi       = {10.1137/19M1263133},
}

@article{Branicki2023,
  title     = {{Path-based divergence rates and Lagrangian uncertainty in stochastic flows}},
  author    = {Branicki, Michal and Uda, Kenneth},
  journal   = {SIAM J. Appl. Dyn. Syst.},
  volume    = {22},
  number    = {1},
  pages     = {1--72},
  year      = {2023},
  doi       = {10.1137/22M1491309},
}

@book{Karatzas1991,
  title={Brownian Motion and Stochastic Calculus},
  author={Karatzas, Ioannis and Shreve, Steven E.},
  series={Graduate Texts in Mathematics},
  volume={113},
  edition={2},
  year={1991},
  publisher={Springer},
  address={New York},
  doi={10.1007/978-1-4612-0949-2}
}

@article{Stuart2010,
   author = {Stuart, A. M.},
   title = {{Inverse problems: a Bayesian perspective}},
   journal={Acta Numerica},
   volume={19},
   pages={451-559},
   year = {2010},
   doi = {10.1017/S0962492910000061}
}

@book{Luenberger1969,
  author       = {Luenberger, David G.},
  title        = {Optimization by Vector Space Methods},
  publisher    = {John Wiley \& Sons, Inc.},
  address      = {New York},
  series       = {Series in Decision and Control},
  year         = {1969},
  pages        = {xvii+326},
  doi           = {10.1137/1012072}
}

@article{Lyubartsev1995,
  title = {{Calculation of effective interaction potentials from radial distribution functions: A reverse Monte Carlo approach}},
  author = {Lyubartsev, Alexander P. and Laaksonen, Aatto},
  journal = {Phys. Rev. E},
  volume = {52},
  issue = {4},
  pages = {3730--3737},
  year = {1995},
  publisher = {American Physical Society},
  doi = {10.1103/PhysRevE.52.3730}
}

@article{Bartok2010,
  title        = {Gaussian Approximation Potentials: The Accuracy of Quantum Mechanics, without the Electrons},
  author       = {Bartók, Albert P. and Payne, Mike C. and Kondor, Risi and Csányi, Gábor},
  journal      = {Physical Review Letters},
  volume       = {104},
  number       = {13},
  pages        = {136403},
  year         = {2010},
  doi          = {10.1103/PhysRevLett.104.136403},
}

@article{Lyubartsev2002,
  title        = {{On coarse-graining by the inverse Monte Carlo method: Dissipative particle dynamics simulations made to a precise tool in soft matter modeling}},
  author       = {Lyubartsev, A.\,P. and Karttunen, M. and Vattulainen, I. and Laaksonen, A.},
  journal      = {Soft Materials},
  volume       = {1},
  number       = {1},
  pages        = {121--137},
  year         = {2002},
  doi          = {10.1081/SMTS-120016746}
}

@article{Muskulus2011,
  author       = {Muskulus, Michael and Verduyn‐Lunel, Sjoerd},
  title        = {Wasserstein distances in the analysis of time series and dynamical systems},
  journal      = {Phys. D},
  volume       = {240},
  number       = {1},
  pages        = {45--58},
  year         = {2011},
  doi          = {10.1016/j.physd.2010.08.005},
}

@article{Yang2023,
author = {Yang, Yunan and Nurbekyan, Levon and Negrini, Elisa and Martin, Robert and Pasha, Mirjeta},
title = {Optimal Transport for Parameter Identification of Chaotic Dynamics via Invariant Measures},
journal = {SIAM J. Appl. Dyn. Syst.},
volume = {22},
number = {1},
pages = {269-310},
year = {2023},
doi = {10.1137/21M1421337},
}

@article{Botvinick2024,
  title = {Invariant Measures in Time-Delay Coordinates for Unique Dynamical System Identification},
  author = {Botvinick-Greenhouse, Jonah and Martin, Robert and Yang, Yunan},
  journal = {Phys. Rev. Lett.},
  volume = {135},
  issue = {16},
  pages = {167202},
  numpages = {7},
  year = {2025},
  publisher = {American Physical Society},
  doi = {10.1103/ppys-lx68},
}

@article{Song2021mle,
author = {Song, Yang and Durkan, Conor and Murray, Iain and Ermon, Stefano},
title = {Maximum likelihood training of score-based diffusion models},
year = {2021},
number = {109},
numpages = {14},
journal = {Proc. Mach. Learn. Res.},
url = {https://proceedings.mlr.press/v162/lu22f/lu22f.pdf}
}

@book{RevuzYor1999,
  author    = {Daniel Revuz and Marc Yor},
  title     = {Continuous Martingales and Brownian Motion},
  edition   = {3rd},
  year      = {1999},
  publisher = {Springer},
  volume    = {293},
  doi       = {10.1007/978-3-662-06400-9}
}

@book{Lelievre2010,
    author = {Lelièvre, Tony and Rousset, Mathias and Stoltz, Gabriel},
    title = {{Free Energy Computations}},
    publisher = {Imperial College Press},
    year = {2010},
    doi = {10.1142/p579},
}

@article{Frohlking2024,
    author = {Fröhlking, Thorben and Bonati, Luigi and Rizzi, Valerio and Gervasio, Francesco Luigi},
    title = {Deep learning path-like collective variable for enhanced sampling molecular dynamics},
    journal = {J. Chem. Phys.},
    volume = {160},
    number = {17},
    pages = {174109},
    year = {2024},
    month = {05},
    doi = {10.1063/5.0202156},
}

@article{MullerBrown1979,
  author    = {Klaus M{\"u}ller and Leo D. Brown},
  title     = {Location of saddle points and minimum energy paths by a constrained simplex optimization procedure},
  journal   = {Theoretica Chimica Acta},
  year      = {1979},
  volume    = {53},
  number    = {1},
  pages     = {75--93},
  doi       = {10.1007/BF00547608},
}

@article{Rainforth2018,
  title     = {Tighter Variational Bounds are Not Necessarily Better},
  author    = {Rainforth, Tom and Kosiorek, Adam R. and Le, Tuan Anh and Maddison, Chris J. and Igl, Maximilian and Wood, Frank and Teh, Yee Whye},
  journal    = {Proc. Mach. Learn. Res.},
  volume    = {80},
  pages     = {4277--4285},
  year      = {2018},
  url       = {https://proceedings.mlr.press/v80/rainforth18b/rainforth18b.pdf}
  
}

@article{Duchi2011,
  title={Adaptive subgradient methods for online learning and stochastic optimization},
  author={Duchi, John and Hazan, Elad and Singer, Yoram},
  journal={J. Mach. Learn. Res.},
  volume={12},
  pages={2121--2159},
  year={2011},
  doi={10.5555/1953048.2021068}
}

@article{Batzner2022,
   author = {Simon Batzner and Albert Musaelian and Lixin Sun and Mario Geiger and Jonathan P. Mailoa and Mordechai Kornbluth and Nicola Molinari and Tess E. Smidt and Boris Kozinsky},
   journal = {Nature Communications},
   volume = {13},
   number = {2453},
   title = {E(3)-equivariant graph neural networks for data-efficient and accurate interatomic potentials},
   year = {2022},
   doi = {10.1038/s41467-022-29939-5}
}

@article{Musaelian2023,
   author = {Musaelian, A. and Batzner, S. and Johansson, A. and et al.},
   journal = {Nature Communications},
   volume = {14},
   number = {579},
   title = {Learning local equivariant representations for large-scale atomistic dynamics},
   year = {2023},
   doi = {10.1038/s41467-023-36329-y}
}

@article{Allen2009,
  author    = {Rosalind J. Allen and Chantal Valeriani and Pieter Rein ten Wolde},
  title     = {Forward flux sampling for rare event simulations},
  journal   = {Journal of Physics: Condensed Matter},
  volume    = {21},
  number    = {46},
  pages     = {463102},
  year      = {2009},
  doi       = {10.1088/0953-8984/21/46/463102}
}

@incollection{Voter2005,
  author    = {Arthur F. Voter},
  title     = {Introduction to the kinetic {Monte Carlo} Method},
  booktitle = {Radiation Effects in Solids},
  series    = {NATO Science Series},
  volume    = {235},
  pages     = {1--23},
  year      = {2005},
    publisher = {Springer},
  doi       = {10.1007/978-1-4020-5295-8_1}
}

@article{Dellago1998,
  author    = {Christoph Dellago and Pieter G. Bolhuis and Ferenc S. Csajka and David Chandler},
  title     = {Transition path sampling and the calculation of rate constants},
  journal   = {J. Chem. Phys.},
  volume    = {108},
  pages     = {1964--1977},
  year      = {1998},
  doi       = {10.1063/1.475562}
}

@article{Feng2021,
  author    = {Chunrong Feng and Huaizhong Zhao and Johnny Zhong},
  title     = {Expected Exit Time for Time‑Periodic Stochastic Differential Equations and Applications to Stochastic Resonance},
  journal   = {Phys. D},
  volume    = {417},
  pages     = {132815},
  year      = {2021},
  doi       = {10.1016/j.physd.2020.132815}
}

@article{Drautz2019,
  title = {Atomic cluster expansion for accurate and transferable interatomic potentials},
  author = {Drautz, Ralf},
  journal = {Phys. Rev. B},
  volume = {99},
  issue = {1},
  pages = {014104},
  numpages = {15},
  year = {2019},
  publisher = {American Physical Society},
}

@article{Vempala2019,
    author = {Santosh S. Vempala and Andre Wibisono},
    title = {{Rapid convergence of the unadjusted Langevin algorithm: isoperimetry suffices}},
    journal = {Adv. Neural Inf. Process. Syst.},
    year = {2019},
    doi = {10.1007/978-3-031-26300-2_15}
}

@article{Roberts1996,
    author = {Gareth O. Roberts and Richard L. Tweedie},
    title = {{Exponential convergence of Langevin distributions and their discrete approximations}},
    journal = {Bernoulli},
    volume = {2},
    number = {4},
    year = {1996},
    pages = {341-363}
}

@article{Dalalyan2017,
  title   = {Theoretical guarantees for approximate sampling from smooth and log‑concave densities},
  author  = {Dalalyan, Arnak S.},
  journal = {J. R. Stat. Soc. Ser. B. Stat. Methodol.},
  volume  = {79},
  number  = {3},
  pages   = {651--676},
  year    = {2017},
  doi     = {10.1111/rssb.12183}
}

@article{Zhang2016,
  author    = {Xicheng Zhang},
  title     = {Stochastic differential equations with {Sobolev} diffusion and singular drift and applications},
  journal   = {Ann. Appl. Probab.},
  volume    = {26},
  number    = {5},
  pages     = {2697--2732},
  year      = {2016},
  publisher = {Institute of Mathematical Statistics},
  doi       = {10.1214/15-AAP1140}
}

@article{Shao2018,
  title     = {{Weak convergence of Euler–Maruyama’s approximation for stochastic differential equations under integrability condition}},
  author    = {Shao, Jinghai},
  journal   = {arXiv},
  archivePrefix = {arXiv},
  eprint    = {1808.07250},
  year      = {2018},
}

@article{Yu2024,
  author  = {Yu, Yuan},
  title   = {{Convergence of relative entropy for Euler–Maruyama scheme to stochastic differential equations with additive noise}},
  journal = {Entropy},
  volume  = {26},
  number  = {3},
  pages   = {232},
  year    = {2024},
  doi     = {10.3390/e26030232}
}

@article{Henin2022,
    title={{Enhanced Sampling Methods for Molecular Dynamics Simulations}},
    volume={4},
    number={1},
    journal={Living J. Comp. Mol. Sci.},
    author={Hénin, Jérôme and Lelièvre, Tony and Shirts, Michael R. and Valsson, Omar and Delemotte, Lucie},
    year={2022},
    pages={1583},
    doi={10.33011/livecoms.4.1.1583},
}

@article{Yang2019,
    author = {Yang, Yi Isaac and Shao, Qiang and Zhang, Jun and Yang, Lijiang and Gao, Yi Qin},
    title = {Enhanced sampling in molecular dynamics},
    journal = {J. Chem. Phys.},
    volume = {151},
    number = {7},
    pages = {070902},
    year = {2019},
    month = {08},
    issn = {0021-9606},
    doi = {10.1063/1.5109531},
}

@article{Mori2020,
    author = {Mori, Yusuke and Okazaki, Kei-ichi and Mori, Toshifumi and Kim, Kang and Matubayasi, Nobuyuki},
    title = {Learning reaction coordinates via cross-entropy minimization: Application to alanine dipeptide},
    journal = {J. Chem. Phys.},
    volume = {153},
    number = {5},
    pages = {054115},
    year = {2020},
    month = {08},
    issn = {0021-9606},
    doi = {10.1063/5.0009066},
}

@article{Jain2013,
  title={{Commentary: The Materials Project: A materials genome approach to accelerating materials innovation}},
  author={Jain, Anubhav and Ong, Shyue Ping and Hautier, Geoffroy and Chen, Weike and Richards, William Davidson and Dacek, Stephen and others},
  journal={APL Materials},
  volume={1},
  number={1},
  pages={011002},
  year={2013},
  publisher={AIP Publishing},
  doi={10.1063/1.4812323},
}

@article{Ramakrishnan2014,
  title     = {Quantum chemistry structures and properties of 134 kilo molecules},
  author    = {Ramakrishnan, Raghunathan and Dral, Pavlo O. and Rupp, Matthias and von Lilienfeld, O. Anatole},
  journal   = {Scientific Data},
  volume    = {1},
  pages     = {140022},
  year      = {2014},
  publisher = {Nature Publishing Group},
  doi       = {10.1038/sdata.2014.22}
}

@article{Lecuyer1995,
  title     = {Note: On the Interchange of Derivative and Expectation for Likelihood Ratio Derivative Estimators},
  author    = {L'Ecuyer, Pierre},
  journal   = {Management Science},
  volume    = {41},
  number    = {4},
  pages     = {738--747},
  year      = {1995},
  publisher = {INFORMS},
  doi       = {10.1287/mnsc.41.4.738}
}

@Inbook{Dashti2017,
author={Dashti, Masoumeh and Stuart, Andrew M.},
title={{The Bayesian Approach to Inverse Problems}},
bookTitle={Handbook of Uncertainty Quantification},
year={2017},
publisher={Springer International Publishing},
pages={311--428},
doi={10.1007/978-3-319-12385-1_7},
}
\end{document}